\def\cleardoublepage{\clearpage\if@twoside \ifodd\c@page\else
   \hbox{}\thispagestyle{empty}\newpage\addtocounter{page}{-1}
   \if@twocolumn\hbox{}\newpage\fi\fi\fi}
\newcommand{\colim@}[2]{%
  \vtop{\m@th\ialign{##\cr
    \hfil$#1\operator@font colim$\hfil\cr
    \noalign{\nointerlineskip\kern1.5\ex@}#2\cr
    \noalign{\nointerlineskip\kern-\ex@}\cr}}%
}
\newcommand{\colim}{%
  \mathop{\mathpalette\colim@{\rightarrowfill@\scriptscriptstyle}}\nmlimits@
}
\renewcommand{\varprojlim}{%
  \mathop{\mathpalette\varlim@{\leftarrowfill@\scriptscriptstyle}}\nmlimits@
}
\renewcommand{\varinjlim}{%
  \mathop{\mathpalette\varlim@{\rightarrowfill@\scriptscriptstyle}}\nmlimits@
}
\DeclareSymbolFont{lettersA}{U}{txmia}{m}{it}
\DeclareMathSymbol{\pi}{\mathord}{lettersA}{25}
\DeclareMathSymbol{\muup}{\mathord}{lettersA}{22}
\newtheorem{defn}{Definition}[section]
\newtheorem{theorem}[defn]{Theorem}
\newcommand{\bg}{{\beta\gamma}}
\newcommand{\normord}[1]{:\mathrel{\mspace{2mu}#1\mspace{2mu}}:}
\newcommand{\mbb}{\mathbb}
\newcommand{\unit}{\mathds{1}}
\newcommand{\mc}{\mathcal}
\newcommand{\cbg}{{\mathcal{C}_{\beta \gamma}}}
\newcommand{\inv}{^{-1}}
\renewcommand{\th}{^\text{th}}
\newcommand{\gl}{{\mathfrak{gl}(1|1)}}
\newcommand{\cp}{\mathbb{CP}}
\newcommand{\map}[3]{\mbox{$#1 \colon #2 \to #3$}}
\newcommand{\geig}{\text{geig}}
\newcommand{\ol}{\overline}
\newcommand{\ext}{\text{Ext}}
\renewcommand{\hom}{\text{Hom}}
\newcommand*{\defeq}{\mathrel{\rlap{%
                     \raisebox{0.3ex}{$\m@th\cdot$}}%
                     \raisebox{-0.3ex}{$\m@th\cdot$}}%
                     =}
\begin{document}
\title{3d Mirror Symmetry and the $\beta\gamma$ VOA}
\author[1]{Andrew Ballin\thanks{\href{mailto:asballin@ucdavis.edu}{asballin@ucdavis.edu}}}
\author[2]{Wenjun Niu\thanks{\href{mailto:wjniu@ucdavis.edu}{wjniu@ucdavis.edu}}}
\affil[1]{Department of Physics \& Astronomy and Center for Quantum Mathematics and Physics (QMAP), University of California, Davis, CA 95616, USA}
\affil[2]{Department of Mathematics and Center for Quantum Mathematics and Physics (QMAP), University of California, Davis, CA 95616, USA}
\renewcommand\Affilfont{\itshape\small}

\maketitle

\numberwithin{equation}{section}
\begin{abstract}

We study the simplest example of mirror symmetry for 3d $\mc N=4$ SUSY gauge theories: the A-twist of a free hypermultiplet and the B-twist of SQED. We particularly focus on the category of line operators in each theory. Using the work of Costello-Gaiotto, we define these categories as appropriate categories of modules for the boundary vertex operator algebras present in each theory. For the A-twist of a free hyper, this will be a certain category of modules for the $\beta\gamma$ VOA, properly containing the category previously studied by Allen-Wood. Applying the work of Creutzig-Kanade-McRae and Creutzig-McRae-Yang, we show that the category of line operators on the A side possesses the structure of a braided tensor category, extending the result of Allen-Wood. In addition, we prove that there is a braided tensor equivalence between the categories of line operators on the A side and B side, completing a non-trivial check of the 3d mirror symmetry conjecture. We derive explicit fusion rules as a consequence of this equivalence and obtain interesting relations with associated quantum group representations.

\end{abstract}

\tableofcontents

\section{Introduction}\label{1}


Given a reductive Lie group $G$ and a representation $V$ of $G$, physicists have defined a 3d $\mc N=4$ supersymmetric gauge theory $T_{G,V}$ with gauge group $G$ and hypermultiplets in the representation $V\oplus V^*$ of $G$. For each nilpotent element in the SUSY algebra, one can \emph{twist} the theory by taking cohomology of the operators with respect to this element. In this paper, we focus on the simplest examples of such theories: the A-twist of one free hypermultiplet $T_A$ and the B-twist of SQED $T_B$. We carefully define and study the category of line operators in the theories and show that they have the structure of a braided tensor category. Moreover, we prove that the category of line operators for the two theories are equivalent as braided tensor categories, completing a significant check on a conjectured 3d mirror symmetry. For original works on 3d mirror symmetry, we refer the reader to \cite{is, de1997mirror, de1997mirror2}. For details specifically pertaining to line operators in these theories, see \cite{dimofte2020mirror}.

Given the theories $T_A$ and $T_B$, one can consider their respective category of line operators ${\mc L}_A$ and ${\mc L}_B$. Objects in these categories are operators supported on a line, and the Hom space between two line operators is the space of local operators that can sit at the junction of the two lines. Since the twist element is topological (namely translations are exact under the twist), one should be able to bring two line operators close to each other, forming a new line operator. One should also be to wind one line around the other. These observations suggest that ${\mc L}_A$ and ${\mc L}_B$ should have the structure of a braided tensor category.

To appropriately define and study the structure of these categories, we put these theories on a manifold with boundary. This is a direct extension of Witten's approach of braiding in 3d Chern-Simons theory to the setting of supersymmetric gauge theories \cite{wittenJonesPoly}. In \cite{costello2019vertex}, the authors studied various boundary conditions of 3d $\mc N=4$ theories that are compatible with the topological twist in the bulk. The twist is holomorphic at the boundary, and so the boundary supports a vertex operator algebra (VOA). The authors identified these VOAs for various theories including the pair of theories $T_A$ and $T_B$ that we study here. Given a line operator in the bulk, one can position it so that it ends on the boundary. Bringing a boundary field close to the endpoint of the line gives an action of the boundary VOA on the line. Thus, for each choice of boundary condition, one gets a functor
\begin{equation}
\map{\mc T}{\mc L}{D^b{\mc V}_{\text{bdry}}\!\!-\text{Mod}}.
\end{equation}
Here ${\mc L}$ is the category of line operators in the bulk and $D^b{\mc V}_{\text{bdry}}\!\!-\text{Mod}$ denotes the (derived) category of modules of the boundary VOA ${\mc V}_{\text{bdry}}$. The functor $\mc T$ is expected to be a derived functor and a tensor functor. In general, $\mc T$ may not be fully faithful and conservative\footnote{Here conservative means that the functor maps a non-zero object to a non-zero object.}. However, one expects $\mc T$ to enjoy these properties for ``good" choices of boundary conditions, and will assume this to be the case throughout this paper; see \cite{costello2019vertex} and \cite{costello2019higgs} for further discussion. For ``good" boundary conditions, we may use a suitable (derived) category of modules of the VOA ${\mc V}_{\text{bdry}}$ as a candidate for ${\mc L}$, and fusion of line operators corresponds to fusion product of modules of the boundary VOA. We comment that throughout this paper, we will focus on the \textbf{abelian} category of VOA modules, with the exception of Section \ref{5.1} in which we compute the derived endomorphism space of the identity line operator in $D^b\cbg$.

For $T_A$, we impose a Neumann boundary condition on the hypermultiplet. By \cite{costello2019vertex}, the boundary vertex operator algebra can be identified with the symplectic boson VOA, also known as the $\beta\gamma$ VOA $V_{\beta\gamma}$. The mode algebra of this VOA is identified with the algebra of differential operators on the loop space $\mathcal K$. In Section \ref{bgrepcat}, we define a category of modules of $V_{\beta\gamma}$, which we denote by $\cbg$. We propose $D^b\cbg$ is equivalent to ${\mc L}_A$. In \cite{rw} and \cite{aw}, the authors studied a smaller module category and showed that it has the structure of a braided tensor category defined by $P(z)$-interwining operators, using the machinery of \cite{hlz1}-\cite{hlz8}. However, their category is not the correct one for our setting; in particular, it is too small to reproduce the correct Coulomb branch, which was computed in \cite{costello2019vertex, costello2019higgs}. On the other hand, once we enlarge the category to $\cbg$, it is very difficult to verify that $\cbg$ satisfies the conditions of Huang-Lepowsky-Zhang and to compute the fusion structure. To overcome this, we will use the idea of 3d mirror symmetry to relate $V_{\beta\gamma}$ to the VOA appearing for the ``mirror dual boundary condition" on the mirror theory $T_B$.

The boundary condition we impose on $T_B$ will be the Dirichlet boundary condition for the hypermultiplet. In \cite{costello2019higgs}, at least perturbatively, the boundary VOA is identified with the vertex operator superalgebra (VOSA) $V(\widehat \gl)$, which is the VOSA associated to the affine Lie superalgebra $\widehat \gl$. In order to take into account the contribution from monopole operators, one needs to take an extension of $V(\widehat \gl)$. This is similar to the story of extending affine $ \mathfrak{gl}_1$ Kac-Moody to WZW modules. In \cite{creutzig2013w}, the authors considered various extensions of $V(\widehat \gl)$ by simple currents $\widehat{A}_{n,l}$ and identified them as known VOSAs. By comparing indices of the vacuum modules of \cite{creutzig2013w} with the prediction from \cite{dimofte2018dual}, we find that the VOSA extension generated by $\widehat{A}_{0,\pm 1}$ has index matching the physical prediction. The extended VOSA, which we denote by ${\mc V}_{ext}$, is identified in \cite{creutzig2013w} as the VOSA $V_{\beta\gamma}\otimes V_{bc}$, the tensor product of $V_{\beta\gamma}$ with the $bc$-ghost VOSA\footnote{The $bc$-ghost system is simply a pair of free fermions. These fermions do not physically interact with the $\bg$ VOA in our paper, hence tensoring with the $bc$-ghost algebra will not modify our representation category $\cbg$}. This justifies the name ``mirror dual boundary condition".

We can now introduce the candidate for ${\mc L}_B$. In \cite{creutzig2020tensor}, the authors studied the Kazhdan-Lusztig category $KL$ of $V(\widehat \gl)$ and showed that it is a rigid braided tensor supercategory defined by $P(z)$-intertwining operators. In Section \ref{4}, we will introduce a tensor subcategory $KL^0$. The simple currents $\widehat{A}_{0,\pm 1}$ generate an action of $\mathrm{Coh}(\mathbb{Z})\cong \mathrm{Rep}(\mathbb{C}^*)$ on $KL^0$. Let $KL^0/\mathbb{Z}$ be the de-equivariantization of $KL^0$ by this action\footnote{This notation will be explained in Section \ref{4}.}. De-equivariantizing can be understood in physical terms as taking into account the monopole operators present in the theory. We can further de-equivariantize by the action of $\mathrm{Coh}(\mathbb{Z}_2)$, generated by parity shift, to obtain the category $KL^0/(\mathbb{Z}\times \mathbb{Z}_2)$. We propose that ${\mc L}_B=D^b\Big( KL^0/(\mathbb{Z}\times \mathbb{Z}_2) \Big)$. With this definition, the work of \cite{creutzig2020tensor} immediately implies that ${\mc L}_B$ is a braided tensor category.\footnote{Note that we choose to define the category using $V(\widehat \gl)$ instead of the extended VOA because the category for the former is better known \cite{creutzig2020tensor}.}

Our main result of the paper is the following:

\newtheorem{LALBequiv}{Theorem}[section]

\begin{LALBequiv}\label{LALBequiv}
The category $\cbg$ has the structure of a braided tensor category defined by $P(z)$-intertwining operators. Moreover, there is an equivalence between braided tensor categories:
\begin{equation}
\cbg\cong KL^0/(\mathbb{Z}\times \mathbb{Z}_2).
\end{equation}
\end{LALBequiv}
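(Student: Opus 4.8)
The plan is to transport the braided tensor structure across an equivalence of \emph{abelian} categories, rather than to verify the Huang–Lepowsky–Zhang axioms for $\cbg$ directly. First I would construct an equivalence of abelian categories $F\colon \cbg \xrightarrow{\ \sim\ } KL^0/(\mathbb{Z}\times\mathbb{Z}_2)$. The key observation is the identification, from \cite{creutzig2013w}, of the simple-current extension $\mc V_{ext}$ of $V(\widehat\gl)$ with $V_{\beta\gamma}\otimes V_{bc}$. Under the general theory of extensions (Creutzig–Kanade–McRae), modules for the extension $\mc V_{ext}$ in a given module category correspond to modules for the original VOA $V(\widehat\gl)$ equipped with an action of the extending simple currents $\widehat A_{0,\pm 1}$ — i.e. to the equivariantization, and conversely passing to $\mc V_{ext}$-modules is de-equivariantization. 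So I would argue: (i) $\cbg$, as a category of $V_{\beta\gamma}$-modules, is obtained from a category of $V_{\beta\gamma}\otimes V_{bc}$-modules by forgetting the (physically inert) $bc$ factor; (ii) that category of $\mc V_{ext}=V_{\beta\gamma}\otimes V_{bc}$-modules is, by the extension theorem applied to the simple currents $\widehat A_{0,\pm 1}$ and the parity shift, the de-equivariantization $KL^0/(\mathbb{Z}\times\mathbb{Z}_2)$ of $KL^0\subset KL(V(\widehat\gl))$. Chaining these gives the desired equivalence of abelian categories; the bookkeeping is matching the list of simple objects (spectral-flowed $\beta\gamma$-modules versus the relevant $V(\widehat\gl)$-modules in $KL^0$) and checking that $\cbg$ is exactly the essential image, not something larger or smaller.

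Next I would promote $F$ to a braided tensor equivalence. By \cite{creutzig2020tensor}, $KL(V(\widehat\gl))$ — hence the subcategory $KL^0$ — is a (rigid) braided tensor category built from $P(z)$-intertwining operators, and de-equivariantization by a symmetric subcategory of the (even, transparent) objects $\mathrm{Coh}(\mathbb{Z})\times\mathrm{Coh}(\mathbb{Z}_2)$ is a standard operation that yields again a braided tensor category, with tensor product and braiding inherited from $KL^0$. So the right-hand side $KL^0/(\mathbb{Z}\times\mathbb{Z}_2)$ is automatically braided tensor. I would then \emph{transport} this structure along $F$ to $\cbg$: define the $P(z)$-tensor product on $\cbg$ to be the one corresponding under $F$ to the tensor product on the de-equivariantization, and likewise for the associativity and braiding isomorphisms. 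What needs checking is that this transported structure is the ``intrinsic" one coming from $P(z)$-intertwining operators of $V_{\beta\gamma}$ — i.e. that $F$ is compatible with intertwining operators. This is where the extension/de-equivariantization formalism of Creutzig–Kanade–McRae and Creutzig–McRae–Yang does the real work: these papers show precisely that the tensor structure on modules of a simple-current extension, and the de-equivariantized tensor structure, are realized by intertwining operators, so the transported product on $\cbg$ coincides with the honest $P(z)$-product of $\beta\gamma$-modules. In particular this establishes the first sentence of the theorem (that $\cbg$ carries a braided tensor structure of $P(z)$-intertwining-operator type) \emph{as a corollary} of the equivalence, which is the logical order the introduction advertises.

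The main obstacle, I expect, is step (ii): showing that $KL^0/(\mathbb{Z}\times\mathbb{Z}_2)$ — a category originally defined by de-equivariantizing an abstractly-defined tensor subcategory of $KL(V(\widehat\gl))$ — really is equivalent to a naturally occurring category of $\mc V_{ext}$-modules, and then that \emph{this} matches $\cbg$ on the nose. Several technical points have to be nailed down: that $\widehat A_{0,\pm 1}$ together with the parity shift generate an honest $\mathrm{Coh}(\mathbb{Z})\times\mathrm{Coh}(\mathbb{Z}_2)$-action inside $KL^0$ with the transparency/ribbon properties required by the de-equivariantization theorem (so that the monodromy obstruction vanishes); that $KL^0$ is closed under the relevant fusion so the de-equivariantization stays inside it; that the de-equivariantized objects are genuine (lower-bounded, grading-restricted) $\mc V_{ext}$-modules; and finally a careful comparison of simple objects, their spectral-flow labels, and their conformal weights on the two sides, together with the claim that $\cbg$ as defined in Section~\ref{bgrepcat} is neither too big nor too small to be the image. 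Once the list of simples and the fusion-stability are matched, rigidity and the braided structure transfer formally from \cite{creutzig2020tensor}, and the equivalence of braided tensor categories follows.
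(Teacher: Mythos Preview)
Your overall strategy is the same as the paper's: realize $\cbg$ (up to the inert $bc$ factor) as the category of local modules for the simple-current extension ${\mc V}_{ext}\cong V_{\beta\gamma}\otimes V_{bc}$ inside $\mathrm{Ind}(KL)$, invoke \cite{creutzig2020direct,creutzig2020simple} to get the braided tensor structure on that category via $P(z)$-intertwiners, and then identify the image of the induction functor ${\mc F}\colon KL^0\to \mathrm{Rep}^0({\mc V}_{ext})$ with $\cbg\boxtimes\mathrm{SVect}$, so that de-equivariantizing by $\mathbb{Z}\times\mathbb{Z}_2$ gives $\cbg$.

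Where your proposal is thin is exactly the step you flag as the main obstacle, and the paper supplies two concrete ingredients you do not mention. First, the matching of objects is not done by comparing conformal weights and spectral-flow labels abstractly; it is done via \emph{compatible free-field realizations} $V(\widehat\gl)\hookrightarrow V_Z$ and ${\mc V}_{ext}\hookrightarrow {\mc V}$ sitting in a commutative square (Section~\ref{4.2}). Lifting from $V_Z$ to ${\mc V}$ is transparent (it is lattice/Heisenberg), so one can explicitly compute ${\mc F}$ on the chain modules $\widehat V^t_{n,e,\pm,x}$ and see they land on $\sigma^\ell({\mc W}_{[-a]}^t)$ or $\sigma^\ell({\mc W}_0^\pm)^t$ (Proposition~\ref{bgmodffid}). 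This is also what pins down the condition singling out $KL^0$: the free-field lift forces $x=-1/2$ and $b\in\mathbb{Z}$, i.e.\ $N_0+E_0/2$ semisimple with integer eigenvalues, and the paper then proves $KL^0=KL^{N+E/2}$. Second, to know that the image of ${\mc F}$ is \emph{exactly} $\cbg\boxtimes\mathrm{SVect}$ and not something larger, the paper uses the classification of indecomposables in $\cbg$ (Theorems~\ref{cbgindecomptyp} and~\ref{cbgindecompatyp}): every indecomposable is a quotient of spectral flows of the chain modules $\sigma^n{\mc W}^k_\lambda$ or roofs built from $\sigma^n({\mc W}_0^\pm)^k$, all of which are hit by ${\mc F}$. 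Without these two inputs your ``careful comparison'' remains a wish rather than an argument.
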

\noindent Upon taking the derived categories on both sides, we obtain ${\mc L}_A\cong {\mc L}_B$.

The strategy of the proof is to use the fact that ${\mc V}_{ext}$ is an extension of $V(\widehat \gl)$ isomorphic to $V_{\beta\gamma}\otimes V_{bc}$. In \cite{creutzig2020direct}, the authors explained that in this case, the category
\begin{equation}
\mathrm{Rep}^0({\mc V}_{ext})={\mc V}_{ext}-\mathrm{Mod}(\mathrm{Ind}(KL))
\end{equation}
has the structure of a braided tensor supercategory defined by $P(z)$-intertwining operators. Here $\mathrm{Ind}(KL)$ is certain ind-completion of $KL$ \cite{creutzig2020direct}. Moreover, there is a tensor functor $\map{\mc F}{KL^0}{\mathrm{Rep}^0({\mc V}_{ext})}$, identifying the image of $KL^0$ with the de-equivariantization $KL^0/\mathbb{Z}$. On the other hand, we will carefully analyze the category $\cbg$ and obtain classification results in Theorem \ref{cbgindecomptyp} and Theorem \ref{cbgindecompatyp}. One main difficulty in proving these theorems is the lack of injectives and projectives in our category. Of particular note is our classification of the indecomposable objects that were absent in the categories of \cite{rw,aw}. We use these classification results, together with an application of free-field realizations, to show that the image of ${\mc F}$ coincides with $\cbg\boxtimes \mathrm{SVect}$:\footnote{The $\text{SVect}$ factor appears because we have chosen to regard $V_{bc}$ as a vertex operator \emph{super}algebra for mathematical convenience; de-equivariantizing by the action of $\text{Coh}(\mbb Z_2)$ serves to eliminate this superfluous information so that we end up obtaining $\cbg$ on the nose.}
\begin{equation}\label{cbKL0/Z}
   KL^0/\mathbb{Z}\cong {\mc F}(KL^0)= \cbg\boxtimes \mathrm{SVect}.
\end{equation}
This will automatically imply that $\cbg\boxtimes \mathrm{SVect}$ has the structure of a braided tensor supercategory defined by $P(z)$-intertwining operators, and that the equivalence in equation \eqref{cbKL0/Z} is an equivalence between braided tensor supercategories. From this equivalence we can easily deduce Theorem \ref{LALBequiv}. Along the way of the proof, we obtain a nice criteria for objects in $KL$ to be in $KL^0$ in terms of the action of certain element of $\gl$. We comment that the idea of relating $V(\widehat\gl)$ and $V_{\beta\gamma}$ to study the fusion structure of $V_{\beta\gamma}$ modules has previously appeared in the work of \cite{adamovic2019fusion}, wherein fusion product formulae in the smaller category of \cite{rw} were obtained using fusion products of $V(\widehat\gl)$ modules.\footnote{We thank Adamovi{\'c} for pointing out to us this previous work.}

The two main difficulties involved in proving Theorem \ref{LALBequiv} are: to understand $\cbg$ and $KL^0$, and to understand the lifting functor ${\mc F}$. The first difficulty is tackled in Section \ref{2}, especially in Theorem \ref{cbgindecomptyp} and Theorem \ref{cbgindecompatyp}, where we give a classification result for the objects in $\cbg$ by carefully computing extension groups between objects. The second difficulty is overcome in Section \ref{4} through an application of \cite{creutzig2020simple, creutzig2020direct} together with free-field realizations described in \cite{creutzig2020duality,aw}. 

We comment that the equivalence in Theorem \ref{LALBequiv} is related to the 3d mirror symmetry statement in the work of Hilburn-Raskin \cite{hilburn2021tate}. Indeed, as mentioned above, the mode algebra of $V_{\beta\gamma}$ is equivalent to the algebra of differential operators on $\mathcal K$, and the category $\mathrm{D}^!\mathrm{-mod}(\mc K)$ is taken as the A-side category in \cite{hilburn2021tate}.  Our category $\cbg$ is a full subcategory of $\mathrm{D}^!\mathrm{-mod}(\mc K)$.\footnote{Intuitively, one can think of ${\mc L}_A$ as the subcategory of twisted-monodromic modules strongly equivariant with respect to the Iwahori subgroup of $\mathcal{O}^*$.} The B-side category of \cite{hilburn2021tate} is taken to be $\mathrm{IndCoh}(\mathcal Y)$, where $\mc Y$ is roughly the moduli spaces of triples $( L, \nabla, s)$ where $L$ is a line bundle over the punctured disk $\mathbb{D}^*$, $\nabla$ is a connection, and $s$ is a flat section. By an optimistic Koszul duality, $\mathrm{IndCoh}(\mc Y)$ can be related to the category of modules of $\widehat \gl$.\footnote{Intuitively, one can think of ${\mc L}_B$ as the subcategory of coherent sheaves supported on the subspace of $\mc Y$ where $\nabla$ has a regular-singularity.} At the level of derived categories, the equivalence of Theorem \ref{LALBequiv} should be a restriction of the following statement \cite{hilburn2021tate}:

\newtheorem{HilburnRaskin}[LALBequiv]{Theorem}

\begin{HilburnRaskin}[Hilburn-Raskin]
There is an equivalence:
\begin{equation}
\mathrm{D}^!\mathrm{-mod}(\mc K)\cong \mathrm{IndCoh}(\mc Y).
\end{equation}
\end{HilburnRaskin}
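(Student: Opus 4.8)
The plan is to treat this as a form of Tate's thesis in the de Rham setting and attack it by harmonic analysis on the ``local field'' $\mc K=\mbb C((t))$. The space $\mc K=\mbb A^1((t))$ carries two structures to exploit. First, it is a (Tate) vector space, self-dual under the residue pairing $\langle f,g\rangle=\mathrm{Res}_{t=0}(fg\,dt)$, so it admits a Fourier transform $\mathrm{FT}\colon \mathrm{D}^!\mathrm{-mod}(\mc K)\xrightarrow{\ \sim\ }\mathrm{D}^!\mathrm{-mod}(\mc K)$. Second, it carries an action of the multiplicative loop group $\mc K^\times=\mbb G_m((t))$, which is precisely the flavor symmetry of the free hypermultiplet; on the B-side $\mc K^\times$ acts on $\mc Y$ through the gauge/twisting action on $(L,\nabla,s)$, and $\mc Y$ fibers over $\mathrm{LocSys}_{\mbb G_m}(\mbb D^*)$ (moduli of pairs $(L,\nabla)$, i.e.\ rank-one de Rham local systems on the punctured disk) by forgetting the flat section $s$. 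The idea is to combine a ``Mellin transform'' along the multiplicative structure with the Fourier transform along the additive one.

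For the multiplicative step I would first establish local geometric class field theory for $\mbb G_m$, namely a symmetric monoidal equivalence
\begin{equation}
\mathrm{D}^!\mathrm{-mod}\big(\mbb G_m((t))\big)\;\cong\;\mathrm{QCoh}\big(\mathrm{LocSys}_{\mbb G_m}(\mbb D^*)\big),
\end{equation}
sending convolution to tensor product and a multiplicative local system to the corresponding skyscraper; fully-faithfulness is checked after ind-completion by a support/deformation argument, with Gaitsgory--Lurie-type technology handling the ind-pro-scheme $\mbb G_m((t))$ and the Tate structures it forces (this is the origin of the $!$-conventions). Next I would present $\mc K$ as an \emph{almost-torsor} under $\mc K^\times$: the open dense orbit is $\mc K^\times\subset\mc K$, and its complement is a countable union of locally closed strata indexed by the order of vanishing at $t=0$. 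Hence $\mathrm{D}^!\mathrm{-mod}(\mc K)$ is a module category over $\mathrm{D}^!\mathrm{-mod}(\mc K^\times)\cong\mathrm{QCoh}(\mathrm{LocSys}_{\mbb G_m}(\mbb D^*))$, and by $1$-affineness of $\mathrm{LocSys}_{\mbb G_m}(\mbb D^*)$ such a module category is $\mathrm{IndCoh}$ of an ind-stack affine over $\mathrm{LocSys}$. Computing the spectral fiber over $[(L,\nabla)]$ — i.e.\ the Mellin transform of the delta-distribution along the $\mc K^\times$-orbit and its boundary strata — should yield exactly the (derived) space of flat sections $s$ of $(L,\nabla)$ with prescribed zeros/poles at $t=0$; gluing these over $\mathrm{LocSys}$ reproduces $\mc Y=\{(L,\nabla,s)\}$, and the Fourier transform on the additive side is what correctly interchanges the ``zero'' and ``pole'' strata — the incarnation here of the local functional equation. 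Finally, since both sides are manifestly factorizable in $t$ (differential operators on $\mc K$ on one side, local conditions on $\mbb D^*$ on the other), I would upgrade the equivalence to one of factorization categories over the Ran space, equivalently to a match of the braided/$E_2$ structures; passing to the subcategory of regular-singular, Iwahori-equivariant monodromic objects then recovers the braided tensor equivalence $\cbg\cong KL^0/(\mbb Z\times\mbb Z_2)$ of Theorem \ref{LALBequiv}.

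The crux — and the hardest part — is the pair of infinite-dimensional inputs. Establishing local geometric class field theory for the loop group $\mbb G_m((t))$ already requires genuinely delicate convolution-category technology: renormalized $\mathrm{IndCoh}$ and $\mathrm{D}^!$ on ind-pro-schemes, Tate twists, and $1$-affineness for $\mathrm{LocSys}$ on the punctured disk. And the passage from the honest torsor $\mc K^\times$ to the degenerate space $\mc K$ — which is exactly where the nontrivial content of Tate's thesis lives — demands a careful recollement along infinitely many boundary strata together with a proof that the glued object is $\mathrm{IndCoh}(\mc Y)$ itself and not some uncontrolled completion of it. Running alongside these is the bookkeeping burden of keeping the $!$ versus $*$ and $\mathrm{IndCoh}$ versus $\mathrm{QCoh}$ conventions consistent, so that the constructed functor is an equivalence on the nose rather than merely fully faithful on a subcategory.
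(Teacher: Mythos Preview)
The paper does not prove this statement. It is quoted from \cite{hilburn2021tate} as an external result (the theorem environment is even labeled ``Hilburn--Raskin''), and no argument is given or sketched in the paper. So there is no ``paper's own proof'' to compare your proposal against.

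That said, your sketch is broadly in the spirit of what Hilburn and Raskin actually do: the analogy with Tate's thesis, the Fourier transform on the Tate vector space $\mc K$, the Mellin/spectral decomposition along the $\mc K^\times$-action via local geometric class field theory, and the recollement over the stratification by order of vanishing. You are also candid about where the real work lies (renormalized $\mathrm{IndCoh}$ and $\mathrm{D}^!$ on ind-pro-schemes, $1$-affineness, controlling the glued object). As a high-level roadmap this is reasonable, but it is a summary of a strategy rather than a proof; each of the steps you flag is a substantial piece of technology that you would have to either cite precisely or develop.

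One point to correct: your final sentence claims that restricting the Hilburn--Raskin equivalence to regular-singular, Iwahori-monodromic objects \emph{recovers} Theorem~\ref{LALBequiv}. The paper does not assert this as a theorem; it says only that the equivalence of Theorem~\ref{LALBequiv} \emph{should be} a restriction of the Hilburn--Raskin equivalence, and explicitly notes that the large categories in the Hilburn--Raskin statement are not braided tensor categories but are expected to carry chiral/factorization structures. The paper's actual proof of Theorem~\ref{LALBequiv} is entirely independent of Hilburn--Raskin: it proceeds via the simple-current extension $V(\widehat\gl)\hookrightarrow V_{\beta\gamma}\otimes V_{bc}$, free-field realizations, and the lifting machinery of \cite{creutzig2020simple,creutzig2020direct}. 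So you should not present Theorem~\ref{LALBequiv} as a corollary of your outline.
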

The categories in the above statement have very little chance of being braided tensor categories, but rather are expected to have the structure of chiral categories \cite{raskin2015chiral}. In physics terms, objects in the above categories may have nontrivial OPE with each other or with local operators. Restricting to appropriate VOA categories allows fusion and braiding, which is the advantage of our approach. 


This paper is structured in the following way. In Section \ref{2}, we give a review of the vertex algebra $V_{\beta\gamma}$, define our module subcategory $\cbg$, and obtain classification results for $\cbg$. In Section \ref{3}, we give a review of the vertex algebra $V(\widehat \gl)$ and the category of modules $KL$, and compute fusion products of indecomposable objects in $KL$. In Section \ref{4}, we prove Theorem \ref{LALBequiv}, and use this to compute fusion rules for indecomposable objects in $\cbg$. In Section \ref{5}, we relate the subcategory of $\cbg$ of atypical modules to the category of representations of a quiver algebra and quantum group, and compute the endomorphism algebra of the identity line operator.

\vspace{1em}
\noindent\textbf{Acknowledgements.} We thank Thomas Creutzig for teaching us the relation between $V(\widehat \gl)$ and $V_{\beta\gamma}$ and sharing with us many insights into the subject of this paper. We thank Tudor Dimofte for suggesting this research project and providing much guidance along the way. We thank Robert Allen, Niklas Garner, Nathan Geer, Justin Hilburn, David Ridout, and Simon Wood for many helpful discussions.

\section{The $\beta\gamma$ VOA and its representation category}\label{2}

In this section, we focus on the $\bg$ vertex operator algebra $V_{\beta\gamma}$. In Section \ref{2.1}, we recall the definition of $V_{\beta\gamma}$. In Section \ref{bgrepcat}, we recall basic examples of modules of $V_{\beta\gamma}$ following \cite{aw}, and define the category of interest $\cbg$. In Section \ref{2.3}, we give examples of indecomposable modules in $\cbg$, and prove classification results in Theorem \ref{cbgindecomptyp} and Theorem \ref{cbgindecompatyp}. In Section \ref{2.4}, we recall the notion of $P(z)$-intertwining operators and explain the difficulty in constructing monoidal structure for $\cbg$.

\subsection{Definitions and conventions}\label{2.1}
In this paper, we define a vertex operator algebra (VOA) to be a $\mbb Z$-graded vector space $V$ together with a state-operator correspondence $\map{ Y }{ V }{ \text{End}(V) [\![z,z\inv]\!] }$ and a conformal element $\omega \in V$, both subject to various conditions; see the definition of a conformal vertex algebra in \cite[Defintion 2.2]{hlz1} for complete details. Our definition of a vertex operator superalgebra (VOSA) closely follows that of a VOA but there is an additional $\mbb Z_2$ grading present and all VOA structures and conditions are rephrased to make them compatible (e.g. commutativity conditions are replaced by \emph{graded} commutativity conditions); see remark 1 of \cite[Section 1.3.2]{frenkel2004vertex} and \cite[Section 1.4]{creutzig2017tensor} for further details.

The $\bg$ vertex operator algebra $V_{\beta\gamma}$ is strongly generated by two bosonic fields
\begin{equation}
    \beta(z) = \sum_{n \in \mbb Z} \beta_n z^{-n-1} \qquad \gamma(z) = \sum_{n \in \mbb Z} \gamma_n z^{-n}
\end{equation}
satisfying the operator product expansions
\begin{equation}
    \beta(z) \beta(w) \sim 0 \qquad 
        \gamma(z) \gamma(w) \sim 0 \qquad
        \beta(z) \gamma(w) \sim \frac{-1}{z-w}
\end{equation}
It admits the structure of a $\mbb Z_{\ge 0}$-graded VOA when equipped with the following choice of conformal element/stress-energy tensor \cite{rw}
\begin{equation}
    \omega = -\beta_{-1}\gamma_{-1} \qquad T(z) = - \normord{\beta(z) \partial \gamma(z)} = \sum_{\ell \in \mbb Z} z^{-\ell -2} \left [ \sum_{k \in \mbb Z} k \normord{\beta_{\ell-k} \gamma_k} \right ]
\end{equation}
The OPEs imply that the coefficients of $\beta(z)$ and $\gamma(z)$ posses the following commutation relation
\begin{equation}
    [\beta_m, \beta_n] = 0 \qquad [\gamma_m, \gamma_n] = 0 \qquad [\beta_m, \gamma_n] = - \delta_{m,-n}\unit
\end{equation}
For each $n \in \mbb Z$, $\bg_n \defeq \mbb C[ \beta_n, \gamma_{-n}]$ is a 1-dimensional Weyl algebra under the identification $\beta_n \leftrightarrow x_{-n}$ and $\gamma_n \leftrightarrow \partial_n$. Our analysis of the category of VOA modules strongly relies on this simple observation. We denote the universal enveloping algebra generated by $\{ \beta_n, \gamma_n \}_{n \in \mbb Z}$, which can be thought of as an infinite-dimensional Weyl algebra, by $\bg$.

The $\bg$ VOA possesses an additional global $U(1)$ symmetry (i.e. contains a $U(1)$ Kac-Moody VOA) whose associated current is
\begin{equation}
    J(z) = \, \normord{\beta(z) \gamma(z)} \, = \sum_{\ell\in \mbb Z} z^{-\ell -1} \left [ \sum_{k \in \mbb Z} \normord{\beta_k \gamma_{\ell-k}}\right ]
\end{equation}
In addition to the $\mbb Z_{\ge 0}$-grading given by $L_0$, the $\bg$ VOA is strongly $\mbb Z$-graded (in the sense of \cite[Definition 2.23]{hlz1}) with respect to $J_0$.

It will be handy for future computations to record some of the commutation relations and modes of $T(z)$ and $J(z)$
\begin{align}
    J_0 
        & = \sum_{n\ge 0} \gamma_{-n}\beta_n + \sum_{n \ge 1} \beta_{-n} \gamma_n \label{j0} \\
    L_0
        & = \sum_{k \ge 1} k \left [ \beta_{-k} \gamma_k - \gamma_{-k} \beta_k\right ] \label{l0} \\
    L_{-1}
        & = \sum_{k \ge 1} k \left [\beta_{-1-k} \gamma_k - \gamma_{-k} \beta_{k-1} \right ]
\end{align}
\begin{equation}\label{bgJLcomms}
\begin{split}
    [ J_0, \beta_k ] = \beta_k
        & \qquad [ L_0, \beta_k ] = -k \beta_k \\
    [ J_0, \gamma_k ] = -\gamma_k
        & \qquad [ L_0, \gamma_k ] = -k\gamma_k
\end{split}
\end{equation}
The following $\bg$ automorphisms will frequently appear when discussing modules for $V_\bg$:
\begin{itemize}
\item Conjugation: $c(\beta_n) = \gamma_n \qquad c(\gamma_n) = -\beta_n$
\item Spectral flow: $\sigma(\beta_n) = \beta_{n-1} \qquad \sigma(\gamma_n) = \gamma_{n + 1}$
\end{itemize}
When combined with the $U(1)$ global symmetry described above, the existence of the conjugation automorphism tells us that $V_\bg$ actually has an $Sp(2)$ global symmetry. The spectral flow automorphism can be thought of as arising from a 1-form symmetry present in the 3d bulk theory (with a line operator) whose boundary algebra is $V_\bg$.

\subsection{Our large representation category}\label{bgrepcat}
In our paper, a \emph{representation} or \emph{module} of a VOA $V$ is a $\mbb C$-graded vector space $M$, graded by the generalized weights of $L_0$, together with a linear map $\map{ Y_M }{ V }{ \text{End}(M) [\![z, z\inv ]\!] }$ satisfying certain compatibility conditions; see the definition of a generalized module for a conformal vertex algebra in \cite[Definition 2.12]{hlz1} for complete details. Various representation categories $V_\bg$ have been investigated by others (e.g. \cite{rw, aw}). For reasons explained in Section \ref{5.1}, these categories are too small to correctly match the physics. Essentially, the self-extension group (i.e. the derived endomorphism algebra) of the vacuum module in these categories did not match with what one should obtain when computing the bulk local operators. Before we can define the category of physical interest in this paper, we must first introduce some basic modules of the $\bg$ VOA.

Let $\bg_{\ge 0}$ be the unital subalgebra of $\bg$ generated by $\{ \beta_n, \gamma_n, \unit \}_{n \ge 0}$. The simplest module is the vacuum module
\begin{equation}
    \mc V \defeq \text{Ind}^{\bg}_{\bg_{\ge 0}} \mbb C[\gamma_0]
\end{equation}
where $\beta_0$ acts as $-\frac{\partial}{\partial \gamma_0}$ on $\mbb C[\gamma_0]$, and $\beta_n$ and $\gamma_n$ act as 0 for $n \ge 1$. As a vector space, the vacuum module $\mc V$ of $V_{\beta\gamma}$ coincides with $V_{\beta\gamma}$ as a module over itself. 

Similarly, for $\mu \in \mbb C \setminus \mbb Z$, the so called \emph{typical} modules are defined by
\begin{equation}
    \mc W_\lambda \defeq \text{Ind}^{\bg}_{\bg_{\ge 0}}  (\gamma_0)^\mu \mbb C[\gamma_0, \gamma\inv_0]
\end{equation}
where $\beta_0$ acts as $-\frac{\partial}{\partial \gamma_0}$, and $\beta_n$ and $\gamma_n$ act as 0 for $n \ge 1$. Here $\lambda = \mu + \mbb Z$ and one can see that $\mc W_\lambda$ is independent (up to isomorphism) of the choice of $\mu \in \lambda$, hence these modules are parametrized by $(\mbb C \setminus \mbb Z) / \mbb Z$.

Both $\mc V$ and $\mc W_\lambda$ are simple objects. There are two distinct modules that are reducible but indecomposable, and they morally correspond to the different ways one can take the limit of $\mc W_\lambda$ as $\lambda \to \mbb Z$. They are called \emph{atypical} modules and are defined by
\begin{equation}
    \mc W_0^+ \defeq \text{Ind}^\bg_{\bg_{\ge 0}} \mbb C[\gamma_0, \gamma_0\inv]
    \qquad
    \mc W_0^- \defeq \text{Ind}^\bg_{\bg_{\ge 0}} \mbb C[\beta_0, \beta_0\inv]
\end{equation}

Given any $V_\bg$ module $M$, we can construct another module by twisting the action of the $\bg$ VOA with the spectral flow automorphism: for any $n \in \mbb Z$, $\sigma^n M$ is the module that is equal to $M$ as a set, but carries the action
\begin{equation}
    \alpha \star v \defeq \sigma^{-n}(\alpha) \cdot v \quad \text{for every } v \in \sigma^n M, \, \alpha \in \bg
\end{equation}
where $\cdot$ is the action on $M$. For all of the modules above, $\sigma^n M \not \cong \sigma^m M$ for $m \ne n$.

As a side note, the atypicals can equivalently be defined by the Loewy diagrams
\begin{equation}\label{atyploewy}
\mc W_0^+ = \Big( \mc V \longrightarrow \sigma\inv \mc V \Big) \qquad \mc W_0^- = \Big( \sigma\inv \mc V \longrightarrow \mc V \Big)
\end{equation}
In this paper, a Loewy diagram $X \longrightarrow Y$ represents a module that is an extension of $Y$ by $X$. One may have instead drawn such a module as $Y \longrightarrow X$, but we feel our convention better matches the visual appearance of the corresponding short exact sequence describing the extension.

With these examples in hand, we can now define the category that we study in our paper, which we believe properly matches our physical systems of interest.
\begin{defn}\label{catdef}
Let $\cbg$ be the abelian subcategory of smooth, finite-length, $\bg$ VOA modules generated by $\mc V$, $\mc W_\lambda$, and their spectral flows, such that $\cbg$ is closed under taking extension.
\end{defn}



While the element $L_0$ (\ref{l0}) provides the $\mbb C$-grading included in the definition of a VOA module, the element $J_0$ (\ref{j0}) provides an \emph{additional} grading on the modules; see \cite[Definition 2.25]{hlz1} for details. A main feature the reader should keep in mind from Definition \ref{catdef} is that we do not exclude modules on which $L_0$ and $J_0$ act non-semisimply. Additionally, modules in $\cbg$ can be decomposed into a direct sum of finite-dimensional simultaneous generalized eigenspaces for $L_0$ and $J_0$. This should be contrasted with the representation category studied in, e.g., \cite{aw} wherein $J_0$ was required to act semisimply; the category that we study will consequentially be strictly larger. While $\cbg$ is strictly smaller than the category studied in \cite{hilburn2021tate}, the authors do not provide a classification of their category, and more structure exists on $\cbg$ that we study that does not exist on their category. To the best of our knowledge, our present study of $\cbg$ is a new addition to the existing literature on $\bg$ representation categories.

\subsection{Extension structure and classification results}\label{2.3}
To characterize the objects in $\cbg$, we must understand the new modules that are present when we demand closure under extension. For modules induced by polynomial representations of a 1D Weyl subalgebra of $\bg$, one can explicitly construct self-extensions by adjoining powers of a formal variable $\log \beta_0$ or $\log \gamma_0$ before inducing. For example, the first self-extension of $\mc W_0^-$, which we denote by $\mc W_0^{-,2}$, can be constructed by
\begin{equation}
    \mc W_0^{-,2} \defeq \text{Ind}_{\bg_{\ge 0}} (\mbb C[\beta_0, \beta_0\inv] \oplus \mbb C[\beta_0, \beta_0\inv]\log \beta_0)
\end{equation}
where $\gamma_0 \cdot \log \beta_0 = \beta_0\inv$. This module is also is described by the Loewy diagram
\begin{equation}\label{w-2chain}
\begin{tikzcd}
\sigma\inv \mc V \rar
	& \mc V \rar
	& \sigma\inv \mc V \rar
	& \mc V
\end{tikzcd}
\end{equation}
This is an object of $\cbg$ that does \emph{not} carry a semi-simple action of $J_0$, demonstrating that $\cbg$ is an enlargement of the representation category studied in \cite{aw}. We denote the $(n-1)\th$ iterated extension of $\mc W_0^-$ by itself as $\mc W_0^{-,n}$. A similar line of reasoning gives the definition of $\mc W_0^{+,n}$. For example, $\mc W_0^{+,2}$ looks like
\begin{equation}\label{w+2chain}
\begin{tikzcd}
\mc V \rar
    & \sigma\inv \mc V \rar
    & \mc V \rar
    & \sigma\inv \mc V
\end{tikzcd}
\end{equation}
We call these modules ``chains", and by taking submodule/quotient, we can form chains of odd length. We say a chain is \emph{positive} (\emph{negative}) if it is a quotient of some $\sigma^n \mc W_0^{+,k}$ ($\sigma^n \mc W_0^{-,k}$). To the best of our knowledge, these modules have not been studied in the literature.

Since every module can be expressed as a direct sum of indecomposable modules, it suffices to restrict our focus to non-trivial extensions. From the commutation relations in equation \eqref{bgJLcomms}, we see that $\cbg$ admits a block decomposition
\begin{equation}\label{eqcbgblock}
    \cbg = \bigoplus_{\lambda \in \mbb C / \mbb Z} \mc C_{\bg, \lambda}
\end{equation}
where $\mc C_{\bg, \lambda}$ is the full abelian subcategory of $\cbg$ that contains all $\bg$ modules such that the generalized eigenvalues of the representation of $J_0$ lie in $\lambda$. The morphisms and extensions between a module from $\mc C_{\bg, \lambda}$ and another from $\mc C_{\bg, \lambda'}$ are trivial for $\lambda \ne \lambda'$ because morphisms in $\cbg$ respect the generalized eigenvalues of $L_0$ and $J_0$, and equation \eqref{bgJLcomms} tells us that the $\bg$ modes can only shift them by an integer. Therefore we only need to study the extension structure \emph{within} each $\mc C_{\bg, \lambda}$. Modules in $\mc C_{\bg, \lambda}$ for $[\lambda]\ne [0]$ are called typical modules while modules in $\mc C_{\bg, [0]}$ are called atypical modules. 

This task is rather involved, essentially because there are not enough injectives and projectives in $\mc C_\bg$, so we outline our approach before getting into the technical details. It will follow from the definition of $\cbg$ and Lemma $\ref{bgfinext}$ that every module in $\mc C_{\bg, \lambda}$ can be constructed as an induced module of a representation for a finite-dimensional Weyl subalgebra of $\bg$. Consequentially, we will see that the extension structure in $\mc C_{\bg, \lambda}$ can be understood in terms of the extension structure between modules of finite-dimensional Weyl algebras. Therefore we begin by computing the latter, then we state a lemma that explains how to utilize finite-dimensional results to understand the extension structure in $\mc C_{\bg, \lambda}$, and finally we classify the indecomposable objects within each $\mc C_{\bg, \lambda}$. Along the way, we describe a useful way to visualize $\bg$ modules in terms their corresponding finite-dimensional Weyl algebra modules. Concrete examples will also be provided.

As a warm up, we start by computing extensions between some basic modules for the 1D Weyl algebra $H = \mbb C[x, \partial]$ in the category of all $H$-modules. This category corresponds to a $\bg$ VOA module category strictly larger than $\cbg$, but the computations will nonetheless provide us with important information about $\cbg$. In particular, with a bit of extra work, we find that the computations of $\text{Ext}^1$ in this $H$-module category are the same as those in $\cbg$. Let us now begin.


The simple modules $\mbb C[x], \mbb C[\partial],$ and $x^\lambda \mbb C[x, x\inv]$ ($\lambda \in \mbb C \setminus \mbb Z$) each have a 2-step free, hence projective, resolution by $H$. For example,
\begin{equation}\label{freeres}
\begin{tikzcd}
\cdots \rar
    & 0 \rar
    & H \rar{f_1}
    & H \rar{f_0}
    & \mbb C[x] \rar
    & 0
\end{tikzcd}
\end{equation}
is exact, where $f_1(1) = \partial$ and $f_0(1) = 1$. Using these resolutions to compute $\text{Ext}^\bullet$, we arrive at the following results:
\begin{itemize}
    \item $\text{Ext}^k (\mbb C[x], \mbb C[x]) = \text{Ext}^k (\mbb C[\partial], \mbb C[\partial]) = \mbb C \, \delta_{k,0}$. In particular, neither $\mbb C[x]$ nor $\mbb C[\partial]$ have non-trivial self-extensions.
    \item $\text{Ext}^k (\mbb C[\partial], \mbb C[x]) = \mbb C \, \delta_{k,1}$, where the unique non-trivial extension is $\mbb C[x,x\inv]$.
    \item $\text{Ext}^k (\mbb C[x], \mbb C[\partial]) = \mbb C \, \delta_{k,1}$, where the unique non-trivial extension is $\mbb C[\partial, \partial \inv]$.
    \item $\text{Ext}^k(x^\lambda \mbb C[x, x\inv],x^\lambda \mbb C[x, x\inv]) = \mbb C \, \delta_{k,0} \oplus \mbb C \, \delta_{k,1}$ where the unique self-extension is given by $\ol{\mc W}_\lambda^2 \defeq x^\lambda \mbb C[x,x\inv] \oplus x^\lambda \mbb C[x, x\inv] \log x$. Furthermore the unique iterated self-extensions of $x^\lambda \mbb C[x, x\inv]$ are
    \begin{equation}
        \ol{\mc W}_\lambda^k \defeq x^\lambda \mbb C[x, x\inv] \oplus \cdots \oplus x^\lambda \mbb C[x,x\inv] \log^{k-1}x
    \end{equation}
    \item The extension algebra between $\ol{\mc W}_{\lambda_1}^{k_1}$ and $\ol{\mc W}_{\lambda_2}^{k_2}$ is zero for $\lambda_1 \ne \lambda_2$
    \item The extension algebra between $\ol{\mc W}_\lambda^k$ and $\mbb C[x]$ or $\mbb C[\partial]$ are both zero.
\end{itemize}

The following lemma explains why it suffices to compute the extensions between modules by restricting focus to their structure under a finite-dimensional Weyl subalgebra of $\bg$. For this, call
\begin{equation}
    A_N=\mathbb{C}[\beta_k,\gamma_{-k}]_{-N\leq k\leq N}.
\end{equation}

\newtheorem{Lembgfinext}[defn]{Lemma}

\begin{Lembgfinext}\label{bgfinext}
Let $U,V$, and $W$ be in $\mathcal{C}_{\beta\gamma}$, and assume that they fit in the short exact sequence $0\to U\to V\to W\to 0$. Suppose also that both $U$ and $W$ come from induction over $A_N$, namely:
\begin{equation}
    U=\mathrm{Ind}_{A_N[\beta_j,\gamma_j]_{j>N}}^{\beta\gamma}U_N,~~W= \mathrm{Ind}_{A_N[\beta_j,\gamma_j]_{j>N}}^{\beta\gamma}W_N.
\end{equation}
Then $V$ also comes from induction of an $A_N$-module $V_N$, and the above short exact sequence comes from the induction of the short exact sequence:
\begin{equation}
\begin{tikzcd}
0 \rar & U_N \rar & V_N \rar & W_N \rar & 0.
\end{tikzcd}
\end{equation}
\end{Lembgfinext}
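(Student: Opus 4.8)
The plan is to present $\mathrm{Ind}:=\mathrm{Ind}_{\bg'_N}^{\bg}$ (with $\bg'_N:=A_N[\beta_j,\gamma_j]_{j>N}$) as tensoring with a free module over a commutative ``creation'' subalgebra, to recover $V_N$ from $V$ as the space of vectors annihilated by the complementary ``annihilation'' subalgebra, and to show this invariants functor is exact on the sequence in question.

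Write $\bg_+:=\langle\beta_j,\gamma_j:j>N\rangle$ and $\bg_-:=\langle\beta_k,\gamma_k:k<-N\rangle$, so that $\bg'_N=A_N\cdot\bg_+$. From $[\beta_m,\gamma_n]=-\delta_{m,-n}$ one reads off that $\bg_+$ and $\bg_-$ are commutative polynomial subalgebras, that $A_N$ commutes with each, that $A_N,\bg_+,\bg_-$ generate $\bg$, and that the generators of $\bg_+$ and $\bg_-$ pair into Weyl couples $(\beta_j,\gamma_{-j})$ and $(\gamma_j,\beta_{-j})$, $j>N$. By the PBW theorem $\bg$ is free as a right $\bg'_N$-module on the monomials in the generators of $\bg_-$, whence $\mathrm{Ind}$ is exact and $\mathrm{Ind}(M_N)\cong\bg_-\otimes_{\mathbb C}M_N$ as $\bg_-$-modules. (As in the paper, an ``$A_N$-module'' $M_N$ is viewed as a $\bg'_N$-module on which $\bg_+$ acts by $0$.) Under this identification each generator of $\bg_+$ acts on $\mathrm{Ind}(M_N)$ as $\pm$ the partial derivative dual to the corresponding generator of $\bg_-$, and as $0$ on $M_N$; hence $\bg_+$ acts locally nilpotently, $\mathrm{Ind}(M_N)^{\bg_+}=1\otimes M_N$ (an isomorphic copy of $M_N$ as an $A_N$-module), and — crucially — the functor of $\bg_+$-invariants is exact when applied to $0\to U\to V\to W\to 0$.

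Granting this last point, set $V_N:=V^{\bg_+}$; applying $(-)^{\bg_+}$ to $0\to U\to V\to W\to 0$ gives a short exact sequence of $A_N$-modules $0\to U_N\to V_N\to W_N\to 0$, using $U^{\bg_+}\cong U_N$ and $W^{\bg_+}\cong W_N$ from the previous paragraph. The canonical $\bg$-module map $\Phi\colon\bg_-\otimes_{\mathbb C}V_N\to V$, $p\otimes v\mapsto pv$, is surjective: since $V_N\twoheadrightarrow W_N$, $W_N$ generates $W\cong\mathrm{Ind}(W_N)$ over $\bg_-$, and $U=\bg_-U_N\subseteq\bg_-V_N$, we get $\bg_-V_N=V$. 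It is also bigraded for the decomposition into the (finite-dimensional) generalized $(L_0,J_0)$-eigenspaces, and in each such bidegree exactness of $0\to U\to V\to W\to 0$ and of $0\to U_N\to V_N\to W_N\to 0$ gives $\dim V=\dim(\bg_-\otimes U_N)+\dim(\bg_-\otimes W_N)=\dim(\bg_-\otimes V_N)$, so $\Phi$ is a degreewise bijection, i.e.\ an isomorphism. As $V_N=V^{\bg_+}$ is a $\bg'_N$-submodule of $V$ (stable under $A_N$, annihilated by $\bg_+$), $\Phi$ is exactly an isomorphism $\mathrm{Ind}(V_N)\xrightarrow{\sim}V$; applying the exact functor $\mathrm{Ind}$ to $0\to U_N\to V_N\to W_N\to 0$ then returns $0\to U\to V\to W\to 0$ by naturality, which is the assertion.

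The step that needs genuine work is the ``crux'' above: that $(-)^{\bg_+}$ carries $0\to U\to V\to W\to 0$ to an exact sequence, equivalently that every $w\in W^{\bg_+}$ lifts to a $\bg_+$-invariant vector of $V$. For any lift $v_0\in V$, each generator $g$ of $\bg_+$ sends $v_0$ into $U$ (since $gw=0$), and by smoothness all but finitely many generators annihilate $v_0$; one corrects $v_0$ into a $\bg_+$-invariant lift by successively ``integrating away'' the error against the $\bg_-$-partner of each offending generator, which is possible because on $U=\mathrm{Ind}(U_N)=\bg_-\otimes U_N$ the operator $g$ acts as a surjective partial derivative whose kernel is preserved by the remaining generators — a Poincar\'e-lemma argument, the only real subtlety being that $\bg_+$ has infinitely many generators, which the finiteness just noted takes care of. Everything else — the PBW freeness of $\bg$ over $\bg'_N$, the post-induction form of the $\bg_+$-action, and the bookkeeping with the $(L_0,J_0)$-grading — is routine.
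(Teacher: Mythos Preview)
Your proof is correct and follows essentially the same route as the paper: define $V_N$ as the $\bg_+$-invariants of $V$ (the paper calls this $K_N(V)$), establish surjectivity of $V_N\to W_N$ by lifting and correcting against the Weyl partners (the paper does this via the explicit identity $[\beta_j^n,\gamma_{-j}]=-n\beta_j^{n-1}$, you phrase it as a Poincar\'e-lemma integration), and then identify $\mathrm{Ind}(V_N)$ with $V$ (you use a bidegree dimension count where the paper invokes the snake lemma). The only cosmetic difference is your more functorial packaging via PBW freeness and the invariants/induction adjunction; the substance is the same.
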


\begin{proof}
Given a module $M$ of $\beta\gamma$, denote by $K_N(M)$  the kernel of all the $\beta_k$ and $\gamma_k$ for $k>N$. Such $K_N(M)$ is easily seen to be a module of $A_N$. When $M$ comes from induction from some $A_N$-module $M_N$, we have $K_N(M)=M_N$. Applying this to the short exact sequence $0\to U\to V\to W\to 0$, using the fact that taking kernel is left exact, we get:
\begin{equation}
\begin{tikzcd}
0 \rar
    & U_N \rar
    & K_N(V) \rar
    & W_N.
\end{tikzcd}
\end{equation}
We claim that $K_N(V)\to W_N$ is surjective. Given any $w\in W_N$, since $V\to W$ is onto, we may choose $v\in V$ such that it's image in $W$ is $w$. Since $V$ is a generalized VOA module, there exists $K$ such that $\beta_kv=\gamma_kv=0$ for $k>K$. We will adjust $v$ in a way that it's image in $W$ is still $w$, but it will be annihilated by $\beta_j$ and $\gamma_j$ for $N<j\leq K$. For any such mode, say $\beta_j$, if $\beta_j v\ne 0$, then since it's image in $W$ is zero, it must be in the kernel of $V\to W$, which is $U$. Thus $\beta_jv\in U$. By the fact that $U$ comes from induction, there exists $n$ such that $\beta_j^{n+1}v=0$. Now using $[\beta_j^n,\gamma_{-j}]=-n\beta_j^{n-1}$, one gets:
\begin{equation}
    \beta_j^n\gamma_{-j}\beta_j v=-n\beta_j^{n}v.
\end{equation}
In other words, $\beta_j^n (v+\frac{1}{n}\gamma_{-j}\beta_j v)=0$. The element $v+\frac{1}{n}\gamma_{-j}\beta_j v$ has the same image as $v$ in $W$, since $\gamma_{-j}\beta_j v\in U$. Inductively, one can find $u\in U$ such that $v-u$ is in fact annihilated by $\beta_j$, and $u$ is of the form $f(\beta_j,\gamma_{-j})\beta_jv$ for some polynomial $f$ of two variables, where we always choose the normal ordering in the polynomial, namely $\gamma_{-j}$ appears before $\beta_{j}$. Notice that if $v$ is annihilated by $\beta_k$ for $k\ne j$, then so is $v-f(\beta_j,\gamma_{-j})\beta_jv$. Applying this to all the $\beta_j$ and $\gamma_j$ for $N<j<K$, one gets $v$ that is annihilated by all $\beta_j$ and $\gamma_j$ for $j>N$, namely $v\in K_N(V)$ whose image is $w$. 

In conclusion, we have a short exact sequence:
\begin{equation}
\begin{tikzcd}
0 \rar
    & U_N \rar
    & K_N(V) \rar
    & W_N \rar
    & 0
\end{tikzcd}
\end{equation}
Induction gives us a short exact sequence that fits in the diagram:
\begin{equation}
\begin{tikzcd}
    0 \rar
        & U \rar \dar
        & \mathrm{Ind}_{A_N[\beta_j,\gamma_j]_{j>N}}^{\beta\gamma}( K_N(V)) \rar \dar
        & W \rar \dar
        & 0 \\
    0 \rar
        & U \rar
        & V \rar
        & W \rar 
        & 0
\end{tikzcd}
\end{equation}
The left and right down-arrows are isomorphisms, so by the snake lemma, the middle down-arrow is an isomorphism as well. This completes the proof.

\end{proof}

Let us demonstrate how one can combine the 1D results with Lemma \ref{bgfinext} to compute extensions in $\cbg$. Suppose we have a non-trivial extension $M$ of $\mc V$ by $\sigma\inv \mc V$. In the notation of the lemma,
\begin{equation}
    \mc V = \text{Ind}_{A_0[\beta_j, \gamma_j]_{j > 0}}^\bg \mbb C[\gamma_0]
\end{equation}
Similarly,
\begin{equation}
    \sigma\inv \mc V = \text{Ind}_{A_0[\beta_j, \gamma_j]_{j > 0}}^\bg \mbb C[\beta_0]
\end{equation}
By Lemma \ref{bgfinext}, $M$ is the induction of some $A_0[\beta_j, \gamma_j]_{j > 1}$-module $M_0$ that fits into the short exact sequence
\begin{equation}
\begin{tikzcd}
0 \rar
    & \mbb C[\beta_0] \rar
    & M_0 \rar
    & \mbb C[\gamma_0] \rar
    & 0
\end{tikzcd}
\end{equation}
It must be that $M_0$ is a non-trivial extension, since otherwise
\begin{equation}
\begin{tikzcd}
0 \rar
    & \sigma\inv \mc V \rar
    & M \rar
    & \mc V \rar
    & 0
\end{tikzcd}
\end{equation}
would split. From our computational results just below (\ref{freeres}), it must be that $M_0 \cong \mbb C[\beta_0,\beta_0\inv]$, hence
\begin{equation}
    M \cong \text{Ind}_{A_0[\beta_j, \gamma_j]_{j>0}}^\bg \mbb C[\beta_0,\beta_0\inv]
\end{equation}
This module indeed lies in $\cbg$ (i.e. a nontrivial extension of $\mc V$ by $\sigma\inv \mc V$ indeed exists in our category) and it is none other than $\mc W_0^-$ as we expected from equation (\ref{atyploewy})!

Before proceeding to characterize the objects in $\cbg$, we need to investigate how to compute extensions between induced objects a bit further. Suppose we have a module $M \in \cbg$ that comes from the induction of an $A_N$-module $M'$. Then since
\begin{equation}
    \bg \simeq \bigotimes_{k \in \mbb Z} \bg_k = A_N \otimes \left( \bigotimes_{|k|>N} \bg_k \right),
\end{equation}
we have the decomposition
\begin{equation}
    M \cong M' \otimes \left( \bigotimes_{|k|>N} M_k \right)
\end{equation}
where each $M_k$ is a $\bg_k$-module. In fact, each $M_k$ is $\mbb C[\partial_k]$ for $k>0$ and $\mbb C[x_{-k}]$ for $k<0$. We visualize this data in terms of the following ``column picture" for $M$:
\begin{equation}
\begin{array}{c @{\qquad} @{\cdots \quad \otimes \quad} c @{\quad \otimes \quad} c @{\quad \otimes \quad} c @{\quad \otimes \quad} c @{\quad \otimes \quad} c @{\quad \otimes \quad \cdots}}
\bg
	& \bg_{-N-2}
	& \bg_{-N-1}
	& A_N
	& \bg_{N+1}
	& \bg_{N+2}\\
M
	& M_{-N-2}
	& M_{-N-1}
	& M'
	& M_{N+1}
	& M_{N+2}
\end{array}
\end{equation}
Visualizing/decomposing induced modules in this way makes the proofs in this section easier to follow. Let us provide a concrete example of the column picture for $\mc V$. Recalling that $\mc V = \text{Ind}_{\bg \ge 0}^\bg \mbb C[\gamma_0]$ (i.e. $\mc V$ comes from the induction of the $A_0$-module $\mbb C[\gamma_0]$), its column picture is
\begin{equation}
\begin{array}{c @{\qquad} @{\cdots \quad \otimes \quad} c @{\quad \otimes \quad} c @{\quad \otimes \quad} c @{\quad \otimes \quad} c @{\quad \otimes \quad} c @{\quad \otimes \quad \cdots}}
\bg
	& \bg_{-2}
	& \bg_{-1}
	& \bg_0
	& \bg_1
	& \bg_2\\
\mc V 
	& \mbb C[x_2]
	& \mbb C[x_1]
	& \mbb C[\partial_0]
	& \mbb C[\partial_1]
	& \mbb C[\partial_2]
\end{array}
\end{equation}

Note that spectral flow simply shifts the column picture of a module horizontally. For example, the column picture for $\sigma\inv \mc V$ can be obtained by shifting the column picture of $\mc V$ by one unit to the right. We typically relabel the indices on $x_k$ and $\partial_k$ after shifting to make it easier to remember which column they correspond to.

Let us demonstrate how the column picture can be used to compute the extensions between modules with an example. To compute the extensions of $\mc V$ by $\sigma \mc V$, we first stack the column pictures for $\mc V$ and $\sigma \mc V$:
\begin{equation}
\begin{array}{c @{\qquad} @{\cdots \quad \otimes \quad} c @{\quad \otimes \quad} c @{\quad \otimes \quad} c @{\quad \otimes \quad} c @{\quad \otimes \quad} c @{\quad \otimes \quad \cdots}}
\bg
	& \bg_{-2}
	& \bg_{-1}
	& \bg_0
	& \bg_1
	& \bg_2\\
\mc V
	& \mbb C[x_2]
	& \mbb C[x_1]
	& \mbb C[\partial_0]
	& \mbb C[\partial_1]
	& \mbb C[\partial_2]\\
\sigma \mc V
	& \mbb C[x_2]
	& \mbb C[\partial_{-1}]
	& \mbb C[\partial_0]
	& \mbb C[\partial_1]
	& \mbb C[\partial_2]
\end{array}
\end{equation}
Notice that
\begin{equation}
    \sigma \mc V = \text{Ind}_{A_1[\beta_j, \gamma_j]_{j > 1}}^\bg (\mbb C[\partial_{-1}] \otimes \mbb C[\partial_0] \otimes \mbb C[\partial_1])
\end{equation}
where $A_1$ acts in the standard manner as a 3-dimensional Weyl algebra and $\beta_j$ and $\gamma_j$ act as 0 for $j>1$. According to Lemma \ref{bgfinext}, it suffices to instead compute the extensions between the modules contained only in the middle 3 columns
\begin{equation}
\begin{array}{c @{\quad \otimes \quad} c @{\quad \otimes \quad} c}
\bg_{-1}
	& \bg_0
	& \bg_1 \\
\mbb C[x_1]
	& \mbb C[\partial_0]
	& \mbb C[\partial_1] \\
\mbb C[\partial_{-1}]
	& \mbb C[\partial_0]
	& \mbb C[\partial_1]
\end{array}
\end{equation}
Applying a K\"unneth formula, the only non-zero contribution to $\text{Ext}^k(\mc V, \sigma \mc V)$ is in degree 1 and comes from the left column, which is $\mbb C[\partial_{-1}, \partial_{-1} \inv]$. Thus $\text{Ext}^k(\mc V, \sigma \mc V) = \mbb C\, \delta_{k,1}$. The column picture representing the module corresponding to the degree-1 extension is
\begin{equation}
\begin{array}{c @{\quad \cdots \quad \otimes \quad} c @{\quad \otimes \quad} c @{\quad \otimes \quad} c @{\quad \otimes \quad} c @{\quad \otimes \quad} c @{\quad \otimes \quad \cdots}}
\bg
	& \bg_{-2}
	& \bg_{-1}
	& \bg_0
	& \bg_1
	& \bg_2\\
\Big( \sigma \mc V \longrightarrow \mc V \Big)
	& \mbb C[x_2]
	& \mbb C[\partial_{-1}, \partial_{-1} \inv]
	& \mbb C[\partial_0]
	& \mbb C[\partial_1]
	& \mbb C[\partial_2]
\end{array}
\end{equation}
which is precisely $\sigma \mc W_0^+$!\\

Finally, we can characterize the objects of $\cbg$.
\begin{theorem}\label{cbgindecomptyp}
Every indecomposable object in $\mc C_{\bg, \lambda}$ for $\lambda \ne \mbb Z$ is isomorphic to $\sigma^n \mc W_\lambda^k$ for some $n \in \mbb Z$ and $k \in \mbb Z_{\ge 0}$.
\end{theorem}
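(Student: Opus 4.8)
The plan is to reduce the entire classification to the extension theory of modules over a single one-dimensional Weyl algebra $\bg_n$ — exactly what the $H=\mbb C[x,\partial]$ computations above provide — using Lemma~\ref{bgfinext} as the bridge between $V_\bg$-modules and modules over the finite Weyl subalgebras $A_N$. First I would fix a non-integral coset $\lambda$ and observe, from \eqref{bgJLcomms}, that each spectral flow moves the generalized $J_0$-eigenvalues only by an integer, so all the $\sigma^n\mc W_\lambda$ ($n\in\mbb Z$) lie in $\mc C_{\bg,\lambda}$; since $\cbg$ is generated under extension by $\mc V$, the $\mc W_{\lambda'}$, and their spectral flows, and since only the spectral flows of one typical module land in a given non-integral block, the simple objects of $\mc C_{\bg,\lambda}$ are exactly $\{\sigma^n\mc W_\lambda\}_{n\in\mbb Z}$. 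Hence every object of the block admits a composition series with all factors among these.

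The second step is to compute $\mathrm{Ext}^1_{\cbg}(\sigma^n\mc W_\lambda,\sigma^m\mc W_\lambda)$ and show it is $\mbb C$ when $n=m$ and $0$ otherwise. For $N\ge\max(|n|,|m|)$ both modules are inductions of $A_N$-modules, so by Lemma~\ref{bgfinext} the group in $\cbg$ coincides with the corresponding $\mathrm{Ext}^1$ over $A_N=\bigotimes_{|l|\le N}\bg_l$; a K\"unneth formula — legitimate since the relevant one-dimensional modules carry the finite free resolutions of \eqref{freeres} — expresses this as a sum over tensor slots of $\mathrm{Ext}^1_{\bg_l}$ in that slot tensored with $\mathrm{Hom}$'s in the remaining slots. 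For $\sigma^n\mc W_\lambda$ exactly one slot carries the typical module $x^\mu\mbb C[x,x\inv]$ and every other slot carries $\mbb C[x]$ or $\mbb C[\partial]$; when $n=m$ the only surviving contribution is $\mathrm{Ext}^1(x^\mu\mbb C[x,x\inv],x^\mu\mbb C[x,x\inv])=\mbb C$, whereas when $n\ne m$ the mismatched slot pairs $x^\mu\mbb C[x,x\inv]$ against $\mbb C[x]$ or $\mbb C[\partial]$, whose $\mathrm{Hom}$ and all $\mathrm{Ext}$ vanish (again from the one-dimensional list), killing the whole sum.

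Let $\mc A_n\subseteq\mc C_{\bg,\lambda}$ be the full subcategory of modules all of whose composition factors are $\cong\sigma^n\mc W_\lambda$ (it is closed under extension because $\cbg$ is). Since Step 2 shows the $\mathrm{Ext}^1$-quiver on the simples of the block has no arrows between distinct vertices — and $\mathrm{Hom}$ between distinct simples vanishes by Schur — the usual argument forces every module to split along the $\mc A_n$, so $\mc C_{\bg,\lambda}=\bigoplus_n\mc A_n$ and every indecomposable lies in a single $\mc A_n$. To finish, take $M\in\mc A_n$ indecomposable; applying Lemma~\ref{bgfinext} inductively along a composition series, $M=\mathrm{Ind}(M_N)$ for $N=|n|$, with $M_N$ a finite-length $A_N$-module whose composition factors are all isomorphic to $S_N$, the $A_N$-module inducing $\sigma^n\mc W_\lambda$. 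Factoring $A_N=\bg_n\otimes B$ and $S_N=S_{N,n}\otimes S_B$ with $S_{N,n}\cong x^\mu\mbb C[x,x\inv]$ and $S_B$ a simple $B$-module satisfying $\mathrm{End}_B(S_B)=\mbb C$ and $\mathrm{Ext}^1_B(S_B,S_B)=0$, I would restrict $M_N$ to $B$: every composition factor then becomes a direct sum of copies of $S_B$, and since $S_B$ is finitely presented over the Noetherian algebra $B$ the $\mathrm{Ext}^1$-vanishing persists for arbitrary direct sums of $S_B$, forcing $M_N|_B$ to be $S_B$-isotypic. Hence $M_N\cong S_B\otimes\tilde M$ as $A_N$-modules, where $\tilde M=\mathrm{Hom}_B(S_B,M_N)$ carries the residual $\bg_n$-action and is a finite-length $\bg_n$-module with all factors $\cong x^\mu\mbb C[x,x\inv]$. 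By the classification of the iterated self-extensions $\ol{\mc W}_\lambda^k$ of $x^\mu\mbb C[x,x\inv]$ above, $\tilde M\cong\bigoplus_i\ol{\mc W}_\lambda^{k_i}$, so inducing back, $M\cong\bigoplus_i\sigma^n\mc W_\lambda^{k_i}$ (the induction of $S_B\otimes\ol{\mc W}_\lambda^{k}$ being readily identified with $\sigma^n\mc W_\lambda^k$); indecomposability leaves one summand, i.e. $M\cong\sigma^n\mc W_\lambda^k$.

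The hard part — and the reason the argument is not routine — is precisely that $\mc C_\bg$ has too few projectives and injectives, so one cannot read the indecomposables straight off the $\mathrm{Ext}$-quiver; this is what forces the detour through the finite Weyl algebras via Lemma~\ref{bgfinext}, and the genuinely delicate step is the factorization $M_N\cong S_B\otimes\tilde M$ — proving that the ``rigid'' slots carrying $\mbb C[x]$ or $\mbb C[\partial]$ split off, so that an indecomposable is pinned down entirely by the data in its single typical slot. Getting the K\"unneth bookkeeping of the $\mathrm{Ext}^1$ computation exactly right, especially the vanishing between distinct spectral flows, is the other place where this machinery is indispensable.
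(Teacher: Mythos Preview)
Your argument is correct and rests on the same core ingredients as the paper's proof --- Lemma~\ref{bgfinext}, the K\"unneth decomposition of $\mathrm{Ext}^1$ over $A_N$, and the one-dimensional Weyl results listed after~\eqref{freeres}. The organization, however, is genuinely different. The paper proceeds by a direct induction on the length of $M$: at each step it takes the length-$(k-1)$ submodule (assumed by induction to be $\sigma^n\mc W_\lambda^{k-1}$), observes via the column picture that the only nontrivial extension by a simple $\sigma^m\mc W_\lambda$ forces $m=n$ and yields $\sigma^n\mc W_\lambda^k$, and stops there. You instead first prove the $\mathrm{Ext}^1$-orthogonality of distinct spectral flows, deduce a block decomposition $\mc C_{\bg,\lambda}=\bigoplus_n\mc A_n$, and then inside a single $\mc A_n$ carry out the tensor factorization $M_N\cong S_B\otimes\tilde M$ to isolate the single typical slot and invoke the uniserial classification of $\ol{\mc W}_\lambda^k$.

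Your route is more structural --- the spectral-flow block decomposition is an explicit intermediate result, and it sidesteps the implicit assumption in the paper's induction that the length-$(k-1)$ submodule is itself indecomposable. The price is the factorization step, which is more delicate than the paper's column-picture bookkeeping: you need the infinite-direct-sum $\mathrm{Ext}^1_B$-vanishing (correctly justified via finite presentation of $S_B$) and the identification $\tilde M=\mathrm{Hom}_B(S_B,M_N)$. The paper's route is shorter and more hands-on, but leaves the same K\"unneth content implicit in the column picture.
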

\begin{proof}
Let $M$ be an indecomposable object in $\mc C_{\bg, \lambda}$ and pick any $\mu \in \lambda$. We induct on the length of $M$. If $M$ is simple, it must be $\sigma^n \mc W_\lambda$ for some $n \in \mbb Z$.

Now suppose $M$ has length $k$ and assume its length $k-1$ submodule (in any particular composition series) is isomorphic to $\sigma^n \mc W_\lambda^{k-1}$. Then $M/\sigma^n W_\lambda^{k-1}$ is a simple module in $\mc C_{\bg, \lambda}$, hence is isomorphic to $\sigma^m \mc W_\lambda$ for some $m \in \mbb Z$. Thus $M$ fits into the short exact sequence
\begin{equation}
\begin{tikzcd}
0 \rar
    & \sigma^n \mc W_\lambda^{k-1} \rar
    & M \rar
    & \sigma^m \mc W_\lambda \rar
    & 0
\end{tikzcd}
\end{equation}
We now characterize all possible extensions of this type. In the column picture of $\sigma^n \mc W_\lambda^{k-1}$, there is a $(\partial_{-n})^\mu ( \mbb C[\partial_{-n}, \partial_{-n}\inv] \oplus \cdots \oplus \mbb C[\partial_{-n}, \partial_{-n}\inv][\log^{k-2} \partial_{-n}])$ in the $\bg_{-n}$ column, and everything in the columns to its left and right are $\mbb C[x]$'s and $\mbb C[\partial]$'s, respectively. The column picture of $\sigma^m \mc W_\lambda$ has a $(\partial_{-m})^\mu \mbb C[\partial_{-m}, \partial_{-m}\inv]$ in the $\bg_{-m}$ column, and the other columns are similarly $\mbb C[x]$'s and $\mbb C[\partial]$'s. Our results about the representation theory of $H$ dictate that $m=n$ in order to have a non-trivial extension. Furthermore, when $m=n$, the same results tell us that the unique non-trivial extension is $(\partial_{-n})^\mu ( \mbb C[\partial_{-n}, \partial_{-n}\inv] \oplus \cdots \oplus \mbb C[\partial_{-n}, \partial_{-n}\inv][\log^{k-1} \partial_{-n}])$. Thus $M \cong \sigma^n \mc W_\lambda^k$, finishing the induction.
\end{proof}

To characterize the indecomposables in $\mc C_{\bg, \mbb Z}$, we introduce a new class of modules, called \emph{roofs}, with the following property: each module in $\mc C_{\bg, \mbb Z}$ can be covered by a finite direct sum of roofs. To construct a roof, one first takes the direct sum of a positive and a negative chain that have the same head, and then one takes the submodule generated by the diagonal of the head. For example, the heads of $\mc W_0^{-,n}$ and $\sigma \mc W_0^{+,m}$ are both $\mc V$, so the roof $\mc R_{2n,2m}$ is the unique submodule of $\mc W_0^{-,n} \oplus \sigma \mc W_0^{+,m}$ generated by the diagonal of the head $\mc V$. The Loewy diagram of $\mc R_{4,4}$, rotated $90^\circ$ clockwise to fit better on the page, looks like
\begin{equation}
\begin{tikzcd}
\sigma\inv \mc V \rar
    & \mc V \rar
    & \sigma\inv \mc V \drar \\
{}
    & 
    & 
    & \mc V \\
\sigma \mc V \rar
    & \mc V \rar
    & \sigma \mc V \urar    
\end{tikzcd}
\end{equation}
which is essentially diagrams (\ref{w-2chain}) and a spectral flow of (\ref{w+2chain}) pinched together at the head. This diagram looks like a tall roof, when unrotated, hence the name. The subscripts $a$ and $b$ on $\mc R_{a,b}$ represent the length of the left and right sides of the roof, respectively. When $a=b$, we drop the redundant subscript
\begin{equation}
    \mc R_a \defeq \mc R_{a,a}
\end{equation}

Our proof of the characterization theorem uses results about the extensions of a chain by $\sigma^m \mc V$. We state the necessary results without proof, but one can easily compute these extension groups with inductive arguments and homological techniques similar to those used below.
\begin{align}
\ext^1(\sigma^n \mc W_0^{+,k}, \sigma^m \mc V)
    & = \mbb C \, \delta_{m,n-2} \oplus \mbb C \, \delta_{m,n-1} \\
\ext^1(\sigma^n \mc W_0^{-,k}, \sigma^m \mc V)
    &= \mbb C \, \delta_{m,n} \oplus \mbb C \, \delta_{m,n+1}
\end{align}

\begin{theorem}\label{cbgindecompatyp}
Every indecomposable object in $\mc C_{\bg, \mbb Z}$ is isomorphic to a quotient of a finite direct sum of $\sigma^n \mc R_k$ for various $n \in \mbb Z$ and $k \in \mbb Z_{\ge 0}$.
\end{theorem}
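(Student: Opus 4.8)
The plan is to mimic the inductive strategy used in the proof of Theorem \ref{cbgindecomptyp}, but now bootstrapping from the roof modules $\sigma^n \mc R_k$ instead of directly from chains. First I would observe that the simple objects of $\mc C_{\bg, \mbb Z}$ are exactly the spectral flows $\sigma^n \mc V$, and each such simple is a quotient of a roof (indeed $\sigma^n \mc V$ is the head $\mc R_0$, up to spectral flow). This gives the base of the induction. Then, given an indecomposable $M$ of length $k$, pick a composition series and let $M'$ be the length $(k-1)$ submodule. By the inductive hypothesis $M'$ is a quotient of a finite direct sum $\bigoplus_i \sigma^{n_i} \mc R_{k_i}$; the quotient $M/M'$ is a simple $\sigma^m \mc V$. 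The key point is to lift the extension $0 \to M' \to M \to \sigma^m \mc V \to 0$ along the covering $P \defeq \bigoplus_i \sigma^{n_i} \mc R_{k_i} \twoheadrightarrow M'$ to produce a covering of $M$ itself by a direct sum of roofs.

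The technical heart is therefore a pushout/pullback argument: form the pullback of $M \to \sigma^m \mc V$ against a cover of $\sigma^m \mc V$ by a roof, and separately analyze how the extension class of $M$ over $M'$ pulls back along $P \to M'$. Using the stated $\ext^1$ computations
\begin{align}
\ext^1(\sigma^n \mc W_0^{+,k}, \sigma^m \mc V)
    & = \mbb C \, \delta_{m,n-2} \oplus \mbb C \, \delta_{m,n-1}, \\
\ext^1(\sigma^n \mc W_0^{-,k}, \sigma^m \mc V)
    &= \mbb C \, \delta_{m,n} \oplus \mbb C \, \delta_{m,n+1},
\end{align}
together with the analogous groups $\ext^1(\sigma^n\mc R_k, \sigma^m \mc V)$ (which split into a positive-side and negative-side contribution, each being at most one-dimensional in the relevant degrees), one sees that any extension of $\sigma^m\mc V$ by a roof is again covered by a roof: adding a new socle element $\sigma^m\mc V$ to one side of $\mc R_{a,b}$ produces a quotient of $\mc R_{a+2,b}$ or $\mc R_{a,b+2}$ (or, when the extension class is "diagonal", of a slightly larger single roof, or a direct sum of two roofs). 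Concretely I would argue: the extension class of $M$ over $P$ lives in $\ext^1(\sigma^m\mc V, P) \cong \bigoplus_i \ext^1(\sigma^m\mc V, \sigma^{n_i}\mc R_{k_i})$, and each summand is realized as a quotient of a single (possibly longer) roof $\sigma^{n_i}\mc R_{k_i'}$ with $k_i' = k_i$ or $k_i+1$; replacing each $\sigma^{n_i}\mc R_{k_i}$ in $P$ by the corresponding $\sigma^{n_i}\mc R_{k_i'}$ yields a cover of $M$, after passing to the quotient that realizes the correct gluing.

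The main obstacle I anticipate is bookkeeping the "diagonal" extension classes — those where the new $\sigma^m\mc V$ in the socle is glued simultaneously to the positive side and the negative side of one or more roofs, so that the naive cover by a disjoint union of enlarged roofs is too big and one must take a carefully chosen quotient (the submodule-of-a-direct-sum-generated-by-a-diagonal construction, exactly as in the definition of $\mc R_{a,b}$ itself). Making sure this quotient is still a \emph{finite} direct sum of roofs modulo the chosen identifications, and that it genuinely surjects onto $M$ rather than onto a proper submodule, requires checking that the socle of the covering surjects onto the socle of $M$ — which follows because each $\sigma^m\mc V$ in $\mathrm{soc}(M)$ either already appears in $\mathrm{soc}(P)$ or is the new one being attached, and in both cases a roof summand of the covering maps onto it. I would also need the auxiliary fact (provable by the same column-picture/Weyl-algebra analysis as Lemma \ref{bgfinext} and Theorem \ref{cbgindecomptyp}) that $\ext^1$ between two chains of opposite sign, or between a roof and a chain, is concentrated in the degrees dictated by matching heads/socles, so that no exotic gluings occur outside the roof family; this is the step I would expect to write out most carefully.
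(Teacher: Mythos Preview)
Your induction is set up in the wrong direction, and this is not merely cosmetic. You take $M'$ to be a maximal submodule, so that $\sigma^m\mc V \cong M/M'$ sits at the \emph{top} of $M$; then you try to lift the class $[M]\in\ext^1(\sigma^m\mc V,M')$ along the cover $P\twoheadrightarrow M'$. But the induced map $\ext^1(\sigma^m\mc V,P)\to\ext^1(\sigma^m\mc V,M')$ is a pushforward, and there is no reason it should be surjective: the obstruction lives in $\ext^2(\sigma^m\mc V,\ker(P\to M'))$, which you never address. Even if a lift existed, the resulting object is an extension $0\to P\to\widetilde M\to\sigma^m\mc V\to 0$, i.e.\ you are attaching $\sigma^m\mc V$ to the \emph{head} of a roof, not to its socle (your text slips between ``socle'' and ``head'' here). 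Putting $\sigma^{n\pm1}\mc V$ on top of $\sigma^n\mc R_k$ does \emph{not} produce a roof or an obvious quotient of one: the two legs below the new head would read $\sigma^{n\pm1}\mc V\to\sigma^n\mc V\to\sigma^{n-1}\mc V\to\cdots$ and $\sigma^{n\pm1}\mc V\to\sigma^n\mc V\to\sigma^{n+1}\mc V\to\cdots$, and one of these is no longer an alternating chain. So both the lifting step and the ``enlarge the roof'' step are genuine gaps.

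The paper avoids both problems by reversing the roles: take $\sigma^m\mc V$ to be a simple \emph{submodule}, set $\widetilde M=M/\sigma^m\mc V$, cover $\widetilde M$ by $P=\bigoplus_i\sigma^{n_i}\mc R_{k_i}$, and form the \emph{fiber product} $M'$ of $M\to\widetilde M\leftarrow P$. This always exists and surjects onto $M$, and $M'$ sits in $0\to\sigma^m\mc V\to M'\to P\to 0$. The whole argument then reduces to showing $\ext^1(\sigma^{n}\mc R_k,\sigma^m\mc V)=0$, and here the paper exploits a freedom you did not notice: since any roof is a quotient of a taller roof, one may choose the \emph{parity} of each $k_i$ at will. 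A short long-exact-sequence computation (splitting the roof into its left and right halves, or into the defining pair of chains) shows that for each fixed $m-n$ one parity of $k$ kills the $\ext^1$. Hence $M'\cong\sigma^m\mc V\oplus P$ is already a direct sum of roofs, and no ``diagonal gluing'' bookkeeping is needed at all.
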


\begin{proof}
Let $M$ be a length $\ell$ indecomposable object in $\mc C_{\bg, \mbb Z}$. We induct on $\ell$. The statement holds for $\ell = 1$ since all simple modules in $\mc C_{\bg, \mbb Z}$ are $\sigma^n \mc V$, which are equal to $\sigma^n \mc R_1$.

Take any composition series for $M$ and suppose $\sigma^m \mc V$ is the first term of the series. By induction, $\widetilde M \defeq M/\sigma^m \mc V$ is a quotient of $\bigoplus_i \sigma^{n_i} \mc R_{a_i, b_i}$. Since a roof can be covered by a longer roof, $\widetilde M$ is also a quotient of $\bigoplus_i \sigma^{n_i} \mc R_{k_i}$ for $k_i$ sufficiently large. We will see that one can choose the $k_i$ strategically to simplify the proof.

So far, we have the following exact diagram
\begin{equation}
\begin{tikzcd}
{}
    & 
    & 
    & 0 
    & \\
0 \rar
    & \sigma^m \mc V \rar
    & M \rar{\pi}
    & \widetilde M \rar \uar
    & 0 \\
{}
    & 
    & 
    & \bigoplus_i \sigma^{n_i} \mc R_{k_i} \uar{\pi'}
    & 
\end{tikzcd}
\end{equation}
Letting $M'$ be the fiber product of $\pi$ and $\pi'$, the above exact diagram can be extended to
\begin{equation}
\begin{tikzcd}
{}
    & 
    & 0
    & 0 
    & \\
0 \rar
    & \sigma^m \mc V \rar
    & M \rar{\pi} \uar
    & \widetilde M \rar \uar
    & 0 \\
0 \rar
    & \sigma^m \mc V \rar \uar{=}
    & M' \rar \uar
    & \bigoplus_i \sigma^{n_i} \mc R_{k_i} \uar{\pi'} \rar
    & 0
\end{tikzcd}
\end{equation}
The proof will be complete if we can show that $M'$ is a quotient of roofs, so we must analyze the extensions of $\bigoplus_i \sigma^{n_i} \mc R_{k_i}$ by $\sigma^m \mc V$. Since 
\begin{equation}\label{indext}
    \ext^1 \left( \bigoplus_i \sigma^{n_i} \mc R_{k_i}, \sigma^m \mc V \right) \cong \bigoplus_i \ext^1 (\sigma^{n_i} \mc R_{k_i}, \sigma^m \mc V)
\end{equation}
we will see that we can assume, without loss of generality, that $\widetilde M$ is covered by a single roof $\sigma^n \mc R_k$. We split the analysis into cases based on the value of $m$.

If $m=n$, then choose $k$ to be odd. Let $L$ be the length $k-1$ submodule of $\sigma^n \mc R_k$ constituting its ``left half", i.e. $L$ looks like
\begin{equation}
\begin{tikzcd}
\sigma^n \mc V \rar
    & \sigma^{n-1} \mc V \rar
    & \sigma^n \mc V \rar
    & \cdots \rar
    & \sigma^{n-1} \mc V
\end{tikzcd}
\end{equation}
Then $R \defeq \sigma^n \mc R_k / L$ is the length $k$ ``right half" of the roof, which looks like
\begin{equation}
\begin{tikzcd}
\sigma^n \mc V \rar
    & \sigma^{n+1} \mc V \rar
    & \sigma^n \mc V \rar
    & \cdots \rar
    & \sigma^n \mc V
\end{tikzcd}
\end{equation}
Applying $\text{Ext}(-, \sigma^m \mc V)$ to
\begin{equation}
\begin{tikzcd}
0 \rar
    & L \rar
    & \sigma^n \mc R_k \rar
    & R \rar
    & 0
\end{tikzcd}
\end{equation}
produces the long exact sequence
\begin{equation}\label{m=nles}
\begin{tikzcd}[row sep=.1cm]
0 \rar
    & \hom(R, \sigma^m \mc V) \rar
    & \hom(\sigma^n \mc R_k, \sigma^m \mc V) \rar
    & \hom(L, \sigma^m \mc V) \\
{} \rar
    & \ext^1 (R, \sigma^m \mc V) \rar
    & \ext^1 (\sigma^n \mc R_k, \sigma^m \mc V) \rar
    & \ext^1 (L, \sigma^m \mc V)
\end{tikzcd}
\end{equation}
which is
\begin{equation}
\begin{tikzcd}[row sep=.1cm]
0 \rar
    & \mbb C \rar
    & \mbb C \rar
    & 0 \\
{} \rar
    & 0 \rar
    & \ext^1 (\sigma^n \mc R_k, \sigma^m \mc V) \rar
    & 0
\end{tikzcd}
\end{equation}
hence there are no non-trivial extensions. This means that $M' \cong \sigma^m \mc V \oplus \sigma^n \mc R_k$ in this case, which indeed is a (trivial) quotient of roofs.

If $m = n \pm 1$, then choose $k$ to be even so that we have a short exact sequence
\begin{equation}
\begin{tikzcd}
0 \rar
    & \sigma^n \mc R_k \rar
    & \sigma^n \mc W_0^{-,k/2} \oplus \sigma^{n+1} \mc W_0^{+,k/2} \rar
    & \sigma^n \mc V \rar
    & 0
\end{tikzcd}
\end{equation}
Applying $\ext (-, \sigma^m \mc V)$ to this yields
\begin{equation}
\begin{tikzcd}[row sep=.1cm]
0 \rar
    & \hom(\sigma^n \mc V, \sigma^m \mc V) \rar
    & \hom(\sigma^n \mc W_0^{-,k/2} \oplus \sigma^{n+1} \mc W_0^{+,k/n}, \sigma^m \mc V) \rar
    & \hom(\sigma^n \mc R_k, \sigma^m \mc V) \\
{} \rar
    & \ext^1 (\sigma^n \mc V, \sigma^m \mc V) \rar
    & \ext^1 (\sigma^n \mc W_0^{-,k/2} \oplus \sigma^{n+1} \mc W_0^{+,k/n}, \sigma^m \mc V) \rar
    & \ext^1 (\sigma^n \mc R_k, \sigma^m \mc V) \\
{} \rar
    & \ext^2 (\sigma^n \mc V, \sigma^m \mc V)
\end{tikzcd}
\end{equation}
which is
\begin{equation}
\begin{tikzcd}[row sep=.1cm]
0 \rar
    & 0 \rar
    & 0 \rar
    & 0 \\
{} \rar
    & \mbb C \rar
    & \mbb C \rar
    & \ext^1 (\sigma^n \mc R_k, \sigma^m \mc V) \\
{} \rar
    & 0
\end{tikzcd}
\end{equation}
hence there are no non-trivial extensions.

If $m = n + 2$, then choose $k$ to be odd and define $L$ and $R$ as in the $m=n$ case above. Starting from the same short exact sequence, the evaluation of the long exact sequence (\ref{m=nles}) for $m = n + 2$ gives
\begin{equation}
\begin{tikzcd}[row sep=.1cm]
0 \rar
    & 0 \rar
    & 0 \rar
    & 0 \\
{} \rar
    & 0 \rar
    & \ext^1(\sigma^n \mc R_k, \sigma^m \mc V) \rar
    & 0
\end{tikzcd}
\end{equation}
hence there are no non-trivial extensions. The case $m = n - 2$ is similar but one must instead choose $R$ to be the length $k-1$ submodule involving $\sigma^{n+1} \mc V$ and $L$ to be the corresponding length $k$ quotient.

By a similar argument, there are no non-trivial extensions if $|m-n| > 2$.

Therefore, with a suitable choice for $k$, we have shown $\ext^1 (\sigma^{n_i} \mc R_k, \sigma^m \mc V) = 0$. Removing our assumption, the same technique can be used to show that equation (\ref{indext}) is zero (with suitably chosen $k_i$), hence $M'$ is a direct sum of roofs, as desired.
\end{proof}
As a corollary of these two theorems, we now understand why $\cbg$ does not have enough projectives. Suppose we suspect an object $P$ to be projective in $\mc C_{\bg, \lambda}$. By the previous two theorems, $P$ is a quotient of a direct sum of chains $C$ described by a surjective map $\map{f}{C}{P}$. Let $C'$ be the direct sum of the same chains that appear in $C$ but make them, say, 3 times as long. From our remarks earlier in the section, there exists another surjection $\map{\pi}{C'}{C}$ that maps the top third of $C'$ onto $C$ and the bottom two-thirds of $C'$ to 0. If $P$ is projective, then there should exist some map $g$ making the following commute
\begin{equation}
\begin{tikzcd}
{}
    &
    & C' \dar{f \circ \pi} \\
C \rar{f}
    & P \rar{\unit} \urar{g}
    & P
\end{tikzcd}
\end{equation}
Such a map does not exist because the image of $g \circ f$ is contained in the kernel of $\pi$: chains cannot map to composition factors in another chain that are ``higher up the chain" (i.e. further from the bottom of the chain) than the length of the original chain, hence $g \circ f$ maps to the bottom two-thirds of $C'$. Thus $P$ is not projective. The argument for objects in $\mc C_{\bg, \mbb Z}$ is similar; simply replace each occurrence of ``chain" with ``roof". In conclusion, we have actually managed to show that $\cbg$ does not even contain a single projective object!

\subsection{Tensor Structure}\label{2.4}

We briefly recall the relevant definitions of the $P(z)$-fusion product \cite[Definition 4.15]{hlz3} for the benefit of the unfamiliar reader. Given objects $A,B,$ and $C$ of $\cbg$, a \emph{$P(z)$-intertwining map} of type $\binom{C}{A\, B}$ is a linear map from $A \otimes B$ to $\ol C$ satisfying certain compatibility conditions \cite[Definition 4.2]{hlz3}. Here $\ol C$ denotes the completion of $C$ with respect to its $L_0$ grading. For any two modules $A,B$ in $\cbg$, a \emph{$P(z)$-product} of $A$ and $B$ is a module $C$ of $\cbg$ together with a $P(z)$-intertwining map $I$ of type $\binom{C}{A\, B}$. Now, a \emph{$P(z)$-fusion product} of $A$ and $B$, which we denote by $(A \times B, I)$, is a universal such $P(z)$-product in the following sense: for any other $P(z)$-product $(D,I')$ of $A$ and $B$, there exists a unique morphism $f$ from $A\times B$ to $D$ such that $I' = \ol f \circ I$. Here $\ol f$ represents the completion of $f$ with respect to the generalized eigenspace decompositions of $A \times B$ and $D$ under $L_0$.

Our category $\cbg$ possesses the structure of a braided tensor category given by the $P(z)$-fusion product (or fusion product in short). It is not straightforward to prove that $\cbg$ satisfies the assumptions in the work of \cite{hlz1}-\cite{hlz8}, since the modules fail to have bounded-from-below conformal weights. Thus we cannot directly conclude that $P(z)$-fusion products actually define a tensor structure on $\cbg$. Moreover, performing computations in $\cbg$ with this universal definition is very difficult in practice. To circumvent this roadblock, we will use the idea of mirror symmetry to connect $\cbg$ to the category of modules for a simple current extension of the VOA associated to $\widehat \gl$. This approach was successfully executed in \cite{adamovic2019fusion} to determine fusion rules in the subcategory of \emph{weight} modules studied by \cite{rw}. The advantage of following this approach to study the larger category of modules of $\widehat \gl$ that we consider is that the grading restriction is automatically satisfied, hence the machinery of \cite{hlz1}-\cite{hlz8} can be applied. In Section \ref{4}, we demonstrate that $\cbg$ sits in a larger category $\text{Rep}^0(\mc V_{ext})$ which is related to the category of modules of $\widehat \gl$. By the work of \cite{creutzig2020direct, creutzig2020tensor}, we can show that $\text{Rep}^0(\mc V_{ext})$  is a braided tensor category defined by $P(z)$-intertwining maps, which will lead to a braided tensor structure on $\cbg$. Moreover, using our classification results in Section \ref{2.3}, we will prove the 3d mirror symmetry statement, namely the second half of Theorem \ref{LALBequiv}. Let us now turn to $\widehat \gl$, which is the next main ingredient of our story.

\section{The affine Lie superalgebra $\widehat \gl$}\label{3}

In this section, we will study the representation theory of the affine Lie superalgebra $\widehat \gl$. In Section \ref{3.1}, we recall the definition of the Lie superalgebra $\gl$ and its category of finite-dimensional modules. In Section \ref{3.2}, we review the affine Lie superalgebra $\widehat \gl$, following the work of \cite{creutzig2020tensor}. In Section \ref{3.3}, we describe the category $KL$, again following the work of \cite{creutzig2020tensor}; we then proceed to prove in Proposition \ref{glaffinerep} that $KL$ is closely related to the category of finite-dimensional modules of $\gl$. In Section \ref{3.4}, we use the result of \cite{creutzig2020tensor} to compute the fusion product of indecomposable modules of $\widehat \gl$. We especially see how the structure of the representation categories of $\gl$ and $\widehat \gl$ are related to each other.

\subsection{The Lie superalgebra $\gl$}\label{3.1}
The Lie superalgebra $\gl$ is defined as the endomorphism algebra of the superspace $\mathbb{C}^{1|1}$. This Lie algebra has basis
\begin{equation}
N = \frac{1}{2} \left( 
    \begin{array}{cc}
    1 & 0 \\
    0 & -1
    \end{array}\right)
~~~
E = \left(
    \begin{array}{cc}
    1 & 0 \\
    0 & 1
    \end{array}\right)
~~~
\psi^+ = \left(
    \begin{array}{cc}
    0 & 1 \\
    0 & 0
    \end{array}\right)
~~~
\psi^- = \left(
    \begin{array}{cc}
    0 & 0 \\
    1 & 0
    \end{array}\right)
\end{equation}
where $N$ and $E$ are even and $\psi^\pm$ are odd. The non-trivial commutation relations are
\begin{equation}\label{eqfingl11}
[N,\psi^\pm]=\pm \psi^\pm,~~\{\psi^+,\psi^-\}=E.
\end{equation}
There is a supersymmetric, even, non-degenerate, invariant bilinear form $\kappa(\cdot,\cdot)$ on $\gl$ whose non-zero values on basis elements are
\begin{equation}
\kappa(N,E)=\kappa(E,N)=1,~~~\kappa(\psi^+,\psi^-)=-\kappa(\psi^-,\psi^+)=1.
\end{equation}

\subsubsection{The relevant representation category}
To make contact with $\cbg$, we must carefully choose the $\gl$ representation subcategory that we study. For now, let $\mc C$ be the supercategory of finite-dimensional modules for the Lie superalgebra $\gl$, enriched to contain morphisms of odd degree \cite{BrundanSuperStuff}. We do not require $N$ nor $E$ to act semisimply on $\mc C$, unlike most of the literature on the representation theory of $\gl$. However, we will eventually restrict to a full subcategory wherein a specified linear combination of $N$ and $E$ \emph{does} act semi-simply. With this in mind, we organize modules into families parametrized by a $\cp^1$-valued parameter $x$, which indicates that $N-xE$ acts semi-simply (the case $x = \infty$ is understood to mean that $E$ acts semi-simply). We drop the label when the family contains exactly one module.

In the following sections, one may notice that we have described module families and given proofs in separate cases based on the value of $x$. Let us briefly digress to explain why this was necessary. Our work will often utilize a certain linear combination of $N$ and $E$ that possesses a non-zero nilpotent part. If it were possible to provide the same module descriptions and proofs for every $x \in \cp^1$, we would need a continuous parameterization of the linear combinations $N - \alpha(x) E$, with $\alpha(x) \in \cp^1$, that acts non-semisimply on modules with label $x$. This is equivalent to finding a continuous map $\map{\alpha}{\cp^1}{\cp^1}$ such that $\alpha(x) \ne x$ for all $x \in \cp^1$. The Brouwer fixed point theorem tells us this cannot be done.

\subsubsection{Elementary modules}
We introduce some basic objects in $\mc C$ that will be heavily used in this paper.
\begin{enumerate}[leftmargin=0pt]
\item \textbf{Singletons: $A^k_n \quad (k \in \mbb Z,\, n \in \mbb C)$} \\

$A^k_n$ is a $k$-dimensional module on which $E$ and $\psi^\pm$ act as zero, and $N$ has a rank $k$ Jordan block with eigenvalue $n$.

\item \textbf{Typical chains:} $V^k_{n,e,x} \quad (k \in \mbb Z, \, n \in \mbb C,\, e \in \mbb C \setminus \{0\})$ \\

The chain $V^k_{n,e,x}$ for $x \ne \infty$ is uniquely characterized (up to isomorphism) by the following property: there exists a vector $v_1$ such that
\begin{itemize}
\item $v_1 \in \geig (N, n + \frac 1 2)$

\item $\psi^+ v_1 = 0$

\item $(E-e)^k v_1 = 0$ but $(E-e)^{k-1} v_1 \ne 0$ and $(E-e)^{k-1} \psi^- v_1 \ne 0$

\item Defining $v_j = (E-e)^{j-1} v_1$ and $\ol v_j = \psi^- v_j$ for $1 \le j \le k$, $\{ v_1, \ol v_1, \ldots , v_k , \ol v_k \}$ form a basis for $V^k_{n,e,x}$.

\item $N - xE$ acts semi-simply
\end{itemize}
where $\geig (N, \lambda)$ denotes the generalized eigenspace of $N$ corresponding to eigenvalue $\lambda$. One can use the Jordan-Chevalley decomposition to check that $N$ is composed of a Jordan block of rank $k$ corresponding to eigenvalue $n+\frac 1 2$ and a Jordan block of rank $k$ corresponding to eigenvalue $n - \frac 1 2$.

The chain $V^k_{n,e,\infty}$ is uniquely characterized (up to isomorphism) by the following property: there exists a vector $v_1$ such that
\begin{itemize}
\item $\psi^+ v_1 = 0$

\item $(N-(n+ \frac 1 2))^k v_1 = 0$ but $(N-(n+\frac 1 2))^{k-1} v_1 \ne 0$ and $(N-(n- \frac 1 2))^{k-1} \psi^- v_1 \ne 0$

\item Defining $v_j = (N-(n + \frac 1 2))^{j-1} v_1$ and $\ol v_j = \psi^- v_j$ for $1 \le j \le k$, $\{ v_1, \ol v_1, \ldots , v_k , \ol v_k \}$ form a basis for $V^k_{n,e,\infty}$.

\item $E$ acts semi-simply
\end{itemize}

As a visual aid, we depict $V^3_{n,e,x}$ ($x \ne \infty$) here:
\begin{equation}
\begin{tikzcd}[every arrow/.append style={mapsto}, column sep=large, row sep=large]
v_1 \arrow[r, bend left, ] \arrow[d, squiggly, "x"']
	& \ol v_1 \arrow[dotted, l, bend left, "e"'] \arrow[dl, dotted] \arrow[d, squiggly, "x"] \\
v_2 \arrow[r, bend left, start anchor={[yshift=-3pt]}, end anchor={[yshift=-3pt]}] \arrow[d, squiggly, "x"']
	& \ol v_2 \arrow[dotted, l, bend left, "e"'] \arrow[dl, dotted] \arrow[d, squiggly, "x"] \\
v_3 \arrow[r, bend left, start anchor={[yshift=-3pt]}, end anchor={[yshift=-3pt]}]
	& \ol v_3 \arrow[dotted, l, bend left, "e"']
\end{tikzcd}
\end{equation}
The squiggly lines represent the off-diagonal diagonal action of $N$; the semi-simple part of $N$ is $n + \frac 1 2$ on the left column, and $n - \frac 1 2$ on the right column. The dotted arrows represent the action of $\psi^+$ and the solid arrows represent the action of $\psi^-$.

For $x = \infty$, $V^3_{n,e,\infty}$ looks like
\begin{equation}
\begin{tikzcd}[every arrow/.append style={mapsto}, column sep=large, row sep=large]
v_1 \arrow[r, bend left] \arrow[d, squiggly]
	& \ol v_1 \arrow[dotted, l, bend left, "e"'] \arrow[d, squiggly] \\
v_2 \arrow[r, bend left] \arrow[d, squiggly]
	& \ol v_2 \arrow[dotted, l, bend left, "e"'] \arrow[d, squiggly] \\
v_3 \arrow[r, bend left] 
	& \ol v_3 \arrow[dotted, l, bend left, "e"']
\end{tikzcd}
\end{equation}

Note that $V^k_{n,e,x}$ is the unique $(k-1)\th$-iterated self extension of $V_{n,e,x}$.

\item \textbf{Atypical chains $V^k_{n,0,\pm,x} \quad (k \in \mbb Z, \, n \in \mbb C)$} \\

There are two ways to take the heuristic limit $\lim_{e \to 0} V^k_{n,e,x}$, and each results in a distinct module. They look very similar to the typical chains, but we instead choose to describe them in terms of their Loewy diagram. For $x \ne \infty$, the \emph{positive} chains look like
\begin{equation}
V^k_{n,0,+,x} \defeq 
\begin{tikzcd}
A^1_{n - \frac 1 2} \rar
    & A^1_{n + \frac 1 2} \rar
    & \cdots \rar
    & A^1_{n - \frac 1 2} \rar
    & A^1_{n + \frac 1 2}
\end{tikzcd}
\end{equation}
whereas the \emph{negative} chains look like
\begin{equation}
V^k_{n,0,-,x} \defeq 
\begin{tikzcd}
A^1_{n + \frac 1 2} \rar
    & A^1_{n - \frac 1 2} \rar
    & \cdots \rar
    & A^1_{n + \frac 1 2} \rar
    & A^1_{n - \frac 1 2}
\end{tikzcd}
\end{equation}
Together with the semisimplicity condition on $N-xE$, these Loewy diagrams uniquely describe the atypical chains. The Loewy diagrams look slightly different when $x = \infty$, but this case will not be important for us.

\item \textbf{Diamonds $P^k_{n,x} \quad (k \in \mbb Z, n \in \mbb C)$} \\

The diamond $P^k_{n,x}$ for $x \ne \infty$ is uniquely characterized (up to isomorphism) by the following property: there exists a vector $v_1$ such that
\begin{itemize}
    \item $v_1 \in \geig(N,n)$
    \item $E^k v_1 = 0$
    \item defining
        \begin{equation}
            w_1 = \psi^+ v_1 \qquad
        v_j = \begin{cases}
            \psi^- v_{j-1}
                & j \text{ even} \\
            \psi^+ v_{j-1}
                & j \text{ odd}
            \end{cases} \qquad
        w_j = \begin{cases}
            \psi^- w_{j-1}
                & j \text{ even} \\
            \psi^+ w_{j-1}
                & j \text{ odd}
            \end{cases}
        \end{equation}
        for $2 \le j \le 2k$, we have that $\{ v_1, w_1, \ldots, v_{2k}, w_{2k} \}$ forms a basis for $P^k_{n,x}$
    \item $N-xE$ acts semi-simply
\end{itemize}

The diamond $P^k_{n,\infty}$ is uniquely characterized (up to isomorphism) by the following property: there exists a vector $v_1$ such that
\begin{itemize}
    \item $v_1 \in \geig(N,n)$
    \item defining
        \begin{equation}
        v_j = (N-n)^{j-1} v_1 \qquad
        x_j = \psi^+ v_j \qquad
        y_j = \psi^- v_j \qquad
        w_j = \psi^- \psi^+v_j
        \end{equation}
        for $1 \le j \le k$, we have that $\{ v_1, x_1, y_1, w_1, \ldots, v_k, x_k, y_k, w_k \}$ forms a basis for $P^k_{n,\infty}$
    \item $E = 0$ on the entire module
\end{itemize}

We first depict the diamonds when $x=0$ to remove some clutter that might otherwise obfuscate their core structure. $P_{n,0}$ looks like
\begin{equation}
\begin{tikzcd}[every arrow/.append style={mapsto}, cramped]
{}
    & v_1 \ar[rd, "\psi^-"] \ar[ld, "\psi^+",']
    & \\
w_1 \ar[rd, "\psi^-",']
    & 
    & v_2 \ar[ld, "-\psi^+"] \\
{}
    & w_2
\end{tikzcd}
\end{equation}
where $N v_1 = n v_1$. There exist iterated self-extensions of $P_{n,0}$ called $P_{n,0}^k$. The positive integer $k$ refers to the ``number of $P_{n,0}$'s it contains''. For example, $P_{n,0}^2$ is
\begin{equation}
\begin{tikzcd}[every arrow/.append style={mapsto}, column sep=tiny]
 & v_1 \dlar[']{\psi^+} \ar[rd,"\psi^-"] &
\\
w_1 \dar[']{\psi^-} \ar[ddrr, dash, dashed]
	& 
	& v_2 \dar{\psi^+} \ar[ddll, dash, dashed]
\\
w_2 \dar[']{\psi^+}
	& 
	& v_3 \dar{\psi^-}
\\
w_3 \drar[']{\psi^-}
	& 
	& v_4 \dlar{-\psi^+}
\\
 & w_4
\end{tikzcd}
\end{equation}
The dashed lines here illustrate the $P_{n,0}$'s that $P^2_{n,0}$ contains as submodules and quotients. The submodule (bottom diamond) is generated by $w_2 + v_3$. When we quotient by this bottom diamond, the bottom vector of the quotient (top diamond) is given by the equivalence class of $w_2 - v_3$.

Restoring $x$, we may illustrate $P^3_{n,x}$ with $x \ne \infty$ as:
\begin{equation}
\begin{tikzcd}[every arrow/.append style={mapsto}]
{}
    & v_1 \dlar[']{\psi^+} \drar{\psi^-} \arrow[ddl, bend left, squiggly, "x"] \arrow[ddr, squiggly, bend right]
    & \\
w_1 \dar[']{\psi^-} \arrow[dd, bend right=100, squiggly, "x"', start anchor=west, end anchor={[yshift=5pt]}]
	& 
	& v_2 \dar{\psi^+} \arrow[dd, bend left=100, squiggly, "x", start anchor=east, end anchor={[yshift=5pt]}] \\
w_2 \dar[']{\psi^+} \arrow[dd, bend left=50, squiggly, "x"]
	& 
	& v_3 \dar{\psi^-} \arrow[dd, bend right=50, squiggly, "x"'] \\
w_3 \dar[']{\psi^-} \arrow[dd, bend right=100, squiggly, "x"', start anchor={[yshift=-5pt]}]
	& 
	& v_4 \dar{\psi^+} \arrow[dd, bend left=100, squiggly, "x", start anchor={[yshift=-5pt]}] \\
w_4 \dar[']{\psi^+} \arrow[ddr, bend left, squiggly, "x"', end anchor={[xshift=-2pt]}]
	& 
	& v_5 \dar{\psi^-} \arrow[ddl, bend right, squiggly, "-x", end anchor={[xshift=2pt]}] \\
w_5 \drar[']{\psi^-}
	& {}
	& v_6 \dlar{-\psi^+} \\
{}
    & w_6
    & 
\end{tikzcd}
\end{equation}
where the squiggly arrows represent the off-diagonal action of $N$. 

The module $P^2_{n,\infty}$ is drawn as:
\begin{equation}
\begin{tikzcd}[every arrow/.append style={mapsto}]
{}
    & v_1 \dlar[']{\psi^+} \drar{\psi^-} \arrow[ddd, squiggly, bend right]
    & \\
x_1 \drar[']{\psi^-} \arrow[ddd, squiggly]
	& 
	& y_1 \dlar{-\psi^+} \arrow[ddd, squiggly]\\
{}
    & w_1 \arrow[ddd, squiggly, bend left]
    & \\
{}
    & v_2 \dlar[']{\psi^+} \drar{\psi^-}
    & \\
x_2 \drar[']{\psi^-}
	& 
	& y_2 \dlar{-\psi^+} \\
{}
    & w_2
    & 
\end{tikzcd}
\end{equation}
where the squiggly arrows represent the off-diagonal action of $N$.
\end{enumerate}

At this point, the reader may wish to look back at the $\bg$ modules (their Loewy diagrams, in particular) introduced in Section \ref{bgrepcat} to get a sense of what the categorical equivalence will ultimately look like. A small detail that we have swept under the rug is the $\mbb Z_2$ grading on the modules and morphisms in the supercategory $\mc C$. Given an object $X \in \text{Ob}(\mc C)$, its parity conjugate $\Pi X$ is also an object in $\mc C$. There is no corresponding notion of the parity-shifted version of a module in $\cbg$, therefore one might be concerned about the correctness of the equivalence. It turns out that $X$ and $\Pi X$ are isomorphic in $\mc C$, albeit via an \emph{odd} isomorphism, for which there is no analogue in $\cbg$. To foreshadow the resolution, we will instead find that our category matches $\cbg \boxtimes \text{SVect}$. This does not impede our ultimate goal because SVect has a nearly transparent effect on the tensor structure, so we can straightforwardly extract the fusion structure on $\cbg$.

\subsubsection{Tensor structure}
As we have mentioned before, we still need to pass through a few more categories and constructions before connecting with $\cbg$. However, much of the tensor structure on $\cbg$ can ultimately be obtained from the structure on $\mc C$, where computations are much easier. In this section, we give the tensor product decompositions that we computed that will be relevant for the remainder of this paper.

The action of $\gl$ on a tensor product of super modules is
\begin{equation}
x \cdot (v\otimes w) = (x \cdot v) \otimes w + (-1)^{|x||v|}v \otimes (x \cdot w)
\end{equation}
for homogeneous $x$ and $v$. Also the map $\map{\tau}{V \otimes W}{W \otimes V}$ given by $\tau(x \otimes y) = (-1)^{|x| |y|} y \otimes x$ is an isomorphism. Since we will eventually restrict to a subcategory labeled by a particular value of $x \in \cp^1$, we only compute tensor products between modules with the same $x$ label and will ignore the $x = \infty$ case. By tedious but not conceptually challenging computations, one sees
\begin{align}
V^s_{n, e, x} \otimes V^t_{m, f, x}
    & \cong \begin{cases}
    \bigoplus_{k = 0}^{\min(s,t) - 1} \left ( V^{s+t-1-2k}_{n+m+\frac 1 2, e+f, x} \oplus V^{s+t-1-2k}_{n+m-\frac 1 2, e+f, x} \right )
        & e+f \ne 0 \\
    \bigoplus_{k=0}^{\min(s,t)-1} P^{s+t-1-2k}_{n+m,x}
        & e+f=0
    \end{cases} \\
V^s_{n,e,x} \otimes V^t_{m,0,\pm,x}
    & \cong \bigoplus_{k=0}^{\min(s,t)-1} \left( V^{s+t-1-2k}_{n+m+\frac 1 2, e, x} \oplus V^{s+t-1-2k}_{n+m-\frac 1 2, e, x} \right ) \\
V^s_{n,0,\epsilon_1,x} \otimes V^t_{m,0,\epsilon_2,x}
    & \cong \begin{cases}
        V^t_{n+m+\frac 1 2,0,\epsilon_1,x} \oplus V^t_{n+m-\frac 1 2,0,\epsilon_2,x}
            & \epsilon_1 = \epsilon_2, \, s=1 \\
        \bigoplus_{k=0}^{\min(s,t)-1} P^{s+t-1-2k}_{n+m,x}
            & \epsilon_1 = - \epsilon_2
        \end{cases}
\end{align}

The tensor products where $\epsilon_1 = \epsilon_2$ with $s>1$ can generate new indecomposables that we will not write down here. Furthermore:
\begin{align}
P^s_{n,x} \otimes P^t_{m,x}
    & \cong \bigoplus_{j =0}^{\min(s,t) - 1} \left ( P^{s+t - 1 - 2j}_{m+n+1, x} \oplus 2\, P^{s+t - 1 - 2j}_{m+n, x} \oplus P^{s+t - 1 - 2j}_{m+n-1, x} \right ) \\
P^s_{n,x} \otimes V^t_{m,e,x}
    & \cong \bigoplus_{j=0}^{\min(s,t)-1} \left( V^{s+t-1-2j}_{m+n+1,e,x} \oplus 2\, V^{s+t-1-2j}_{m+n,e,x} \oplus V^{s+t-1-2j}_{m+n-1,e,x} \right )
\end{align}

\subsection{The vertex superalgebra $\widehat \gl$}\label{3.2}

Now we turn to the definition of the affine Lie superalgebra $\widehat \gl$ associated to the bilinear form $\kappa$. It is defined as the super vector space $\gl\otimes \mathbb{C}[t,t^{-1}]\oplus \mathbb{C}\mathbf{k}$ where $\mathbb{C}[t,t^{-1}]$ and $\mathbf{k}$ are even, together with the following non-trivial Lie brackets:
\begin{equation}
[N_r,E_s]=r\mathbf{k}\delta_{r+s,0},~~[N_r,\psi^\pm_s]=\pm \psi^\pm_{r+s},~~\{\psi_r^+,\psi_s^-\}=E_{r+s}+r\mathbf{k}\delta_{r+s,0}
\end{equation}
where $a_r$ denotes $a\otimes t^r$. 

Given a module $M$ of $\gl$, one may obtain a module of $\widehat \gl$ as follows: one first view $M$ as a module of $\gl\otimes \mathbb{C}[t]\oplus \mathbb{C}\mathbf{k}$ such that $\gl\otimes t\mathbb{C}[t]$ acts trivially and $\mathbf{k}$ acts as a number $k\in \mathbb{C}$; one may then define the induced module:
\begin{equation}
    \widehat{M}^k=\mathcal{U}(\widehat \gl)\otimes_{\mathcal{U}(\gl\otimes \mathbb{C}[t]\oplus \mathbb{C}\mathbf{k})} M
\end{equation}
as a representation of $\widehat \gl$. When $M$ is the trivial module, $\widehat{M}^k$ has the structure of a vertex operator superalgebra (VOSA), which we denote by $V_k(\widehat \gl)$. For general $M$, $\widehat{M}^k$ becomes a module of $V_k(\widehat \gl)$. The assignment $M\to \widehat{M}^k$ defines a functor ${\mc I}nd$, which we will call the induction functor. 

\vspace{5pt}

\noindent\textbf{Remark.} It turns out, as explained in \cite{creutzig2020tensor}, that for different choice of $k\ne 0$, the vertex algebras $V_k(\widehat \gl)$ are isomorphic to each other. Thus we will once and for all fix $k=1$, and drop $k$ from all notations. 

\vspace{5pt}
The VOSA $V(\widehat \gl)$ has the following conformal element:
\begin{equation}
    \omega=\frac{1}{2}(N_{-1}E_{-1}+E_{-1}N_{-1}-\psi_{-1}^+\psi_{-1}^-+\psi_{-1}^-\psi_{-1}^+)+\frac{1}{2}E_{-1}^2
\end{equation}
with the associated Virasoro zero mode:
\begin{equation}
\begin{aligned}
    L_0=&\sum_{r>0}\left(N_{-r}E_{r}+E_{-r}N_{r}-\psi_{-r}^+\psi_{r}^-+\psi_{-r}^-\psi_{-r}^+\right)+\sum_{r>0}E_{-r}E_r\\& + (N_0+E_0/2)E_0-\frac{1}{2}(\psi_0^+\psi_0^--\psi_0^-\psi_0^+)
    \end{aligned}
\end{equation}
It also enjoys spectral flow symmetries $\sigma^l$:
\begin{equation}
\sigma^l(N_r)=N_r, ~~\sigma^l(E_r)=E_r-l\delta_{r,0},~~\sigma(\psi_r^\pm)=\psi^\pm_{r\mp l}
\end{equation}
as well as the conjugation $w$:
\begin{equation}
w(N_r)=-N_r,~~w(E_r)=-E_r,~~w(\psi_r^+)=\psi_r^-,~~w(\psi_r^-)=-\psi_r^+.
\end{equation}
These can be used to twist representations to obtain new representations.

\subsection{The Kazhdan-Lusztig category}\label{3.3}

In this section, we recall the Kazhdan-Lusztig category $KL$ of representations of $V(\widehat \gl)$ that we are interested in. This category is characterized by satisfying certain weight constraint. For a generalized $V(\widehat \gl)$ module $W$, it is called \emph{finite-length} if it has a finite composition series of irreducible $V(\widehat \gl)$ modules. $W$ is called \emph{grading restricted} if it is graded by generalized conformal weights (the generalized eigenvalues of $L_0$) and the generalized conformal weights are bounded from below. For more details, see \cite{creutzig2017tensor}. 

\newtheorem{DefKL}{Definition}[section]

\begin{DefKL}
The Kazhdan-Lusztig category $KL$ is defined as the supercategory of finite-length grading-restricted generalized $V(\widehat \gl)$ modules. 
\end{DefKL}

Any simple module of this category is generated by its lowest conformal weight space, which is a finite-dimensional representation of $\gl$, and thus any simple module is a quotient of $\widehat{V}_{n,e}$ for $e\ne 0$ or $\widehat{A}_{n}$. In what follows, when we write $\widehat{M}$, we always mean the image of a module $M$ of $\gl$ under the induction functor ${\mc I}nd$.
The following is shown in \cite{creutzig2013relating}:
 
 \begin{itemize}

\item $\widehat{V}_{n,e}$ is irreducible iff $e\notin\mathbb{Z}$.

\item When $e=0$, $\widehat{A}_{n}$ is irreducible, and there are non-split exact sequences:
\begin{equation}
\begin{tikzcd}
 0\rar & \widehat{A}_{n-\frac{1}{2}}\rar & \widehat{V}_{n,0,+}
\rar &\widehat{A}_{n+\frac{1}{2}}\rar &0\\
0\rar &\widehat{A}_{n+\frac{1}{2}}\rar &\widehat{V}_{n,0,-}
\rar & \widehat{A}_{n-\frac{1}{2}}\rar & 0
\end{tikzcd}
\end{equation}

\item When $e\in\mathbb{Z}\setminus \{0\}$, there is additional simple modules $\widehat{A}_{n,e}$. They fit in the following short exact sequences:
\begin{equation}
\begin{tikzcd}
0\rar &  \widehat{A}_{n+1,e}\rar & \widehat{V}_{n,e}
\rar & \widehat{A}_{n,e}\rar & 0 & (e>0)\\
 0\rar & \widehat{A}_{n-1,e}\rar & \widehat{V}_{n,e}
\rar & \widehat{A}_{n,e}\rar & 0 & (e<0)
\end{tikzcd}
\end{equation}
\end{itemize}

\noindent\textbf{Remark.} To unify the notation, we will write $\widehat{A}_{n,0}$ for $\widehat{A}_n$. The modules $\widehat{A}_{n,e}$ for $e\in \mathbb{Z}$ are called simple currents. 

\vspace{5pt}

Since $E_0$ is a central element, any representation of $V(\widehat \gl)$ can be decomposed into direct sums according to the generalized eigenvalues of $E_0$, which is possible by finite-length property. We may thus write 
\begin{equation}
KL=\bigoplus_{e\in \mathbb{C}}KL_e    
\end{equation}
where $KL_e$ is the subcategory such that the generalized eigenvalue of $E_0$ is $e$. From the description of the above simple modules, it is clear that $KL_e$ for $e\notin\mathbb{Z}$ is generated by $\widehat{V}_{n,e}$ and for $m\in \mathbb{Z}$, $KL_m$ is generated by $\widehat{A}_{n,m}$. There is a similar decomposition of $\mc C$, the category of finite-dimensional representations of $\gl$, so we write ${\mc C}_e$ to be the subcategory of $\mc C$ where the action of $E$ has generalized weight $e$. Clearly, induction is a functor from ${\mc C}_e$ to $KL_e$. In fact, more is true about this induction:

\newtheorem{glaffinerep}[DefKL]{Proposition}

\begin{glaffinerep}\label{glaffinerep}
When $e\notin\mathbb{Z}$ or $e=0$, induction functor ${\mc I}nd$ gives an equivalence of abelian supercategories:
\begin{equation}\label{eqCeKLe}
\mathcal{C}_e\cong KL_e
\end{equation} 
\end{glaffinerep}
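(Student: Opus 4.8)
\noindent\emph{Plan of proof.}
Write $\mathfrak p=\gl\otimes\mbb C[t]\oplus\mbb C\mathbf k$, so that $\widehat M={\mc I}nd(M)=\mc U(\widehat\gl)\otimes_{\mc U(\mathfrak p)}M$ with $\gl\otimes t\mbb C[t]$ acting trivially on $M$. The plan is to produce an explicit quasi-inverse and to verify exactness, full faithfulness, and essential surjectivity of ${\mc I}nd$ in that order, the last being the only substantial point. Exactness is immediate since $\mc U(\widehat\gl)$ is free as a right $\mc U(\mathfrak p)$-module by PBW. That ${\mc I}nd$ lands in $KL_e$ needs the grading-restriction and finite-length conditions: the former holds because negative modes strictly raise the generalized $L_0$-eigenvalue, so the conformal weights of $\widehat M$ are bounded below by those on the finite-dimensional bottom $M$; the latter follows from exactness together with the facts recalled from \cite{creutzig2013relating} that the generating simples $V_{n,e}$ ($e\notin\mbb Z$) and $A^1_n$ of $\mc C_e$ induce exactly to the irreducibles $\widehat V_{n,e}$ and $\widehat A_n$. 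The candidate quasi-inverse is the functor $\Omega\colon KL_e\to\mc C_e$ sending $W$ to its subspace of $\gl\otimes t\mbb C[t]$-invariants; this is a $\gl$-submodule on which $E$ has generalized eigenvalue $e$, and it is finite-dimensional because $\Omega$ is left exact and $\Omega$ of a simple is its finite-dimensional bottom.

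A PBW-filtration argument on $\widehat N$ shows that its only $\gl\otimes t\mbb C[t]$-invariants are those in $N\otimes 1$, so $\Omega\circ{\mc I}nd\cong\mathrm{id}_{\mc C_e}$. For full faithfulness, use the adjunction $\mathrm{Hom}_{\widehat\gl}(\widehat M,W)\cong\mathrm{Hom}_{\mathfrak p}(M,\mathrm{Res}_{\mathfrak p}W)$: any $\mathfrak p$-homomorphism out of $M$ has image annihilated by $\gl\otimes t\mbb C[t]$, hence lands in $\Omega(W)$, on which the residual $\gl\otimes t^0=\gl$-action is the original one, so $\mathrm{Hom}_{\mathfrak p}(M,\mathrm{Res}_{\mathfrak p}W)=\mathrm{Hom}_{\gl}(M,\Omega(W))$; taking $W=\widehat N$ and $\Omega(\widehat N)=N$ gives $\mathrm{Hom}_{\widehat\gl}(\widehat M,\widehat N)\cong\mathrm{Hom}_{\gl}(M,N)$, naturally and compatibly with composition. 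Combined with the classification of simples --- for $e\notin\mbb Z$ exactly the $\widehat V_{n,e}$, and for $e=0$ exactly the $\widehat A_n$, all of which are induced --- this shows ${\mc I}nd$ is exact, fully faithful, length-preserving, and a bijection on isomorphism classes of simples.

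It remains to prove essential surjectivity, which I would do by induction on the composition length of $W\in KL_e$; the length-one case is the classification of simples. For the inductive step it suffices to show the counit $\varphi_W\colon\widehat{\Omega(W)}\to W$ (adjoint to $\Omega(W)\hookrightarrow\mathrm{Res}_{\mathfrak p}W$) is an isomorphism, since then $W\cong{\mc I}nd(\Omega(W))$. Injectivity of $\varphi_W$ is easy: its kernel is a submodule of $\widehat{\Omega(W)}$, again an object of $KL_e$, meeting the bottom trivially and hence with vanishing $\Omega$; but a nonzero grading-restricted module has a nonzero lowest-conformal-weight space, all of whose vectors are killed by the positive modes and therefore lie in $\Omega$ --- a contradiction. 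Since ${\mc I}nd$ is length-preserving, $\varphi_W$ is then an isomorphism as soon as $\widehat{\Omega(W)}$ and $W$ have the same length, i.e. as soon as $\Omega$ is right exact (not just left exact) on short exact sequences of induced modules; equivalently, as soon as the injection $\ext^1_{\mc C_e}(M',M'')\hookrightarrow\ext^1_{KL_e}(\widehat{M'},\widehat{M''})$ coming from exactness and full faithfulness of ${\mc I}nd$ is surjective. \textbf{This $\ext^1$-isomorphism --- equivalently, the statement that every object of $KL_e$ is generated by its $\gl\otimes t\mbb C[t]$-invariants --- is the main obstacle.} To establish it I would show that, given an extension $0\to\widehat{M''}\to W\to\widehat{M'}\to 0$, every $\gl\otimes t\mbb C[t]$-invariant of $\widehat{M'}$ lifts to one of $W$: choosing a vector-space lift of a bottom vector of $\widehat{M'}$, the failure of $\gl\otimes t\mbb C[t]$ to annihilate it is a cocycle valued in $\widehat{M''}$, which one kills by a finite downward induction on the $\mbb C[t^{-1}]$-degree, using the commutation relations of $\widehat\gl$ and the rigidity of the composition factors of $\widehat{M''}$ for the special values $e=0$ and $e\notin\mbb Z$, where (as one checks from the explicit formula for $L_0$) the relevant $\ext^1$'s in $\mc C_e$ are confined to simples whose parameters differ by $0$ or $1$. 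Granting this, the $KL_e$-extension is ${\mc I}nd$ of a $\gl$-extension of $M'$ by $M''$, and the five lemma applied to $\varphi_W$ identifies $W$ with $\widehat{\Omega(W)}$, closing the induction.
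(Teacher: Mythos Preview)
Your approach is essentially the paper's, recast in categorical language: the paper also inducts on length and reduces everything to lifting a bottom vector of the simple quotient through the extension so that it is annihilated by all positive modes. Your functor $\Omega$ and the counit $\varphi_W$ simply repackage this lifting as the statement that $\Omega$ is right exact on $KL_e$; the paper instead works by hand, afterwards checking that the kernel of the resulting surjection $\widehat V\to M$ is again induced.

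The one place your argument remains a sketch is exactly the step you flag as the main obstacle, and your stated justification for why the hypothesis $e\notin\mbb Z\setminus\{0\}$ enters --- ``rigidity of the composition factors'' and the shape of $\ext^1$ in $\mc C_e$ --- is not the operative reason: those are facts about $\mc C_e$, whereas what you need is a computation inside $KL_e$. The actual mechanism, which the paper carries out explicitly, is that to kill $\psi_t^{\pm}m$ one uses $\{(\psi_t^{\pm})^l,\psi_{-t}^{\mp}\}=l(E_0\pm t)(\psi_t^{\pm})^{l-1}$ and replaces $m$ by $l(E_0\pm t)m-\psi_t^{\pm}\psi_{-t}^{\mp}m$; this reduces the nilpotency order of $\psi_t^{\pm}$ by one while keeping the image in the quotient nonzero \emph{precisely when} $e\pm t\ne 0$. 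Requiring this for all $t>0$ forces $e\notin\mbb Z\setminus\{0\}$, and this is the sole point where the hypothesis on $e$ is used. (The modes $E_t$ and $N_t$ are handled by $[E_t^l,N_{-t}]=tlE_t^{l-1}$ and its analogue, which impose no constraint on $e$.) Once you supply this computation your proof is complete and equivalent to the paper's.
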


\begin{proof}
We only consider the case $e\notin \mathbb{Z}$, since the proof for $e=0$ is almost identical. By definition, any simple module in $KL_e$ comes from induction. We now show that induction is essentially surjective. For this, choose $M\in KL_e$, we will use induction on the length of $M$. Choose a maximal sub-module $N\subseteq M$. Then by assumption $N$ is induced from $\mathcal{C}_e$ and $M/N\cong \widehat{V}_{n,e}$ for some $n$. We claim that we may choose generators of $M$ that are annihilated by all the positive modes of $\widehat \gl$. We may choose such generators for $N$ and $M/N$, say $n_1,\ldots,n_k$ and $\overline{m}$, that are generates the lowest-weight spaces for the respective module. Choosing a pre-image $m$ of $\overline{m}$, we will adjust $m$ step by step so that it is annihilated by all positive modes. 

First, suppose there exists $t>0$ such that $E_tm\ne 0$. By grading restriction, $E_t^lm=0$ for some $l>1$. Using $[E_{t}^l,N_{-t}]=(tl)E_t^{l-1}$, we see that $E_t^lN_{-t}m=(tl)E_t^{l-1}m$. Let $m'=m-\frac{1}{tl}E_tN_{-t}m$, then $E_t^{l-1}m'=0$, and $\overline{m'}=\overline{m}-\frac{1}{l}\overline{m}\ne 0$ since $l>1$. This is again a generator. Using this procedure, we may adjust $m$ such that $E_tm=0$ for all $t>0$.

Suppose $\psi_t^+m\ne 0$ for some $t>0$, and assume $(\psi_t^+)^lm=0$ for some $l>1$. Since $\{(\psi_t^+)^l,\psi_{-t}^-\}=l(E_0+t)(\psi_t^+)^{l-1}$, we have $(\psi_t^+)^l\psi_{-t}^-m=l(\psi_t^+)^{l-1}(E_0+t)m$. Let $m'=l(E_0+t)m-\psi_{t}^+\psi_{-t}^-m$, then $(\psi_t^+)^{l-1}m'=0$, and $\overline{m'}=(l-1)(e+t)\overline{m}$ which is still a generator since $e+t\ne 0$ and $l-1\ne 0$. If the starting $m$ satisfies $\psi_{j}^\pm m=E_km=0$ for $j>t$ and $k>0$, then after the adjustment, $\psi_{j}^\pm m'=E_km'=0$ for $j\geq t$ and $k>0$. We may then adjust $m$ downward from the largest $t$ such that $\psi_t^\pm m\ne 0$ , and obtain $m$ that is annihilated by all $\psi_t^\pm, E_t$ for $t>0$.

Finally we perform a similar procedure for $N_t$ to obtain the final $m$. We may also assume that $N_0\overline{m}=(n-1/2)\overline{m}$ for some $n\in \mathbb{C}$, which means that $(N_0-n+1/2)m\in N$. This is an element annihilated by all positive modes of $\widehat \gl$, which then must be in the lowest conformal weight space of $N$. 

Now consider the $\gl$ module $V$ generated by $m$ and the $n_i$. By the above consideration, this module is finite dimensional. The map $V\to M$ then induces a surjection $\widehat{V}\to M$ as $V(\widehat \gl)$ modules. We claim that the kernel $K$ must be of the form $\widehat{W}$ for some $W\subseteq V$. To prove the claim, we use an inductive argument on the number of composition factors of $\widehat{V}$. Choose a minimal sub-module $V_{n',e}$ of $V$ and let $U$ be the quotient $V/V_{n',e}$. We have a short exact sequence:
\begin{equation}
\begin{tikzcd}
    0 \rar & \widehat{V}_{n',e}\rar &  \widehat{V}\rar & \widehat{U}\rar & 0
    \end{tikzcd}
\end{equation}
If $K\cap \widehat{V}_{n',e}=0$, then we are done by inductive hypothesis. Otherwise, $K$ fits in the exact sequence:
\begin{equation}
\begin{tikzcd}
    0 \rar
        & \widehat{V}_{n',e} \rar 
        & \widehat{V} \rar
        & \widehat{U} \rar
        & 0 \\
    0 \rar
        & \widehat{V}_{n',e} \rar \arrow[u, "="]
        & K \rar \uar
        & K/\widehat{V}_{n',e} \rar \uar
        & 0
\end{tikzcd}
\end{equation}
By induction, $K/\widehat{V}_{n',e}$ is equal to $\widehat{L}$ for some $L\subseteq U$. Then by counting composition factors, it is clear that $K\cong \widehat{W}$ where $W$ is the pre-image of $L$ under the map $V\to U$. Thus $M$ is of the form $\widehat{V/W}$ as desired. 
\end{proof}

Because of this, the difficulty in the study of $KL$ lies in understanding $KL_n$ for $ n\in \mathbb{Z}\setminus \{0\}$. We will see in the next section, that fusion product with simple currents gives a way to study them. The objects in $KL_e$ for $e\in \mathbb{Z}$ are called atypical modules, while those in $KL_e$ for $e\notin \mathbb{Z}$ are called typical modules(c.f. Section \ref{2.3}).

\subsection{Fusion structure}\label{3.4}

By the work of \cite{hlz1}-\cite{hlz8}, $P(z)$-intertwining operators define a monoidal structure on $KL$. In this section, we will compute the fusion products of modules by relating the fusion product to the tensor product of $\gl$ modules. 

To start, we have the following statement, which is proved in \cite{creutzig2020tensor}:

\newtheorem{KLrigid}[DefKL]{Theorem}

\begin{KLrigid}
$KL$ is a rigid braided tensor supercategory; moreover, it is a ribbon category with even natural twist $\theta=e^{2\pi i L_0}$.

\end{KLrigid}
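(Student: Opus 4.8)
The plan is to establish the three assertions — braided tensor structure, rigidity, and the ribbon twist — in that order.

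For the braided tensor structure, I would invoke the Huang--Lepowsky--Zhang machinery of $P(z)$-intertwining operators \cite{hlz1}--\cite{hlz8}. By construction every object of $KL$ is finite length and grading restricted, so the content to be checked is: (i) $KL$ is closed under contragredient duals and under the $P(z)$-tensor product (the latter including that the fusion of two objects is again finite length and grading restricted); and (ii) the convergence and associativity axioms of \cite{hlz1}--\cite{hlz8} hold. The standard route to (ii) is to observe that each simple module in $KL$ is generated by a finite-dimensional $\gl$-module (the discussion preceding and including Proposition \ref{glaffinerep}), hence is $C_1$-cofinite, and that $C_1$-cofiniteness is preserved under extensions; so $KL$ is a category of $C_1$-cofinite $V(\widehat\gl)$-modules, to which the general existence theorems for the braided tensor structure apply. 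Point (i) — in particular that fusion does not leave $KL$ — would be handled using the block decomposition $KL=\bigoplus_e KL_e$ together with the explicit control of the simples in each block.

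For rigidity I would argue block by block. In the blocks $KL_e$ with $e\notin\mbb Z$ and $e=0$, Proposition \ref{glaffinerep} identifies $KL_e$ with the abelian supercategory $\mc C_e$ of finite-dimensional $\gl$-modules, and $\mc C$ is automatically rigid, since finite-dimensional representations of a Lie superalgebra always have duals. The point is then to upgrade this to a statement about the fusion product: one shows that on these blocks $\boxtimes$ is computed by the $\gl$-tensor product of top spaces — this is precisely the comparison carried out in Section \ref{3.4}, where $\widehat M\boxtimes\widehat N$ is identified with $\widehat{M\otimes N}$ via the universal property of induction followed by a length/character count — so that ${\mc I}nd$ becomes a braided monoidal equivalence and transports the duality data from $\mc C_e$. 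Alternatively, rigidity of the typical modules $\widehat V_{n,e}$ can be obtained by transporting it along a free-field realization to a category of modules for a Heisenberg- or lattice-type VOA, where rigidity is classical. The remaining blocks $KL_\ell$, $\ell\in\mbb Z\setminus\{0\}$, are reached from $KL_0$ by fusing with the invertible simple current $\widehat A_{0,\ell}=\sigma^\ell(V)$, which is a braided monoidal autoequivalence, so those blocks are rigid as well; hence all of $KL$ is rigid.

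Once $KL$ is a rigid braided tensor supercategory the ribbon structure is essentially formal: $\theta=e^{2\pi i L_0}$ is a natural automorphism of the identity functor, it is \emph{even} because $L_0$ preserves the $\mbb Z_2$-grading, the balancing axiom $\theta_{X\boxtimes Y}=c_{Y,X}\circ c_{X,Y}\circ(\theta_X\boxtimes\theta_Y)$ is the standard compatibility of the $L_0$-twist with the vertex-algebraic braiding, and compatibility with duals, $\theta_{X'}=(\theta_X)'$, holds because the contragredient preserves generalized $L_0$-eigenvalues. The main obstacle is the rigidity step, and within it the comparison of the fusion product with the $\gl$-tensor product (equivalently, the monoidality of ${\mc I}nd$ on the typical and $e=0$ blocks): the abelian equivalence of Proposition \ref{glaffinerep} is not a priori monoidal, and establishing that it is — without a truncation phenomenon of the kind that occurs for positive-integer-level affine algebras — requires the full $P(z)$-tensor construction together with careful bookkeeping of composition factors. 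This is also the step where one proves that $KL$ is closed under fusion in the first place, so the analytic input of \cite{hlz1}--\cite{hlz8} and the structural input of Section \ref{3.4} are genuinely intertwined here.
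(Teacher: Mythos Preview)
The paper does not prove this theorem at all: it is quoted from \cite{creutzig2020tensor}, and all of Section~\ref{3.4} takes it as input. Your proposal is therefore not comparable to ``the paper's own proof'' --- there is none --- but it can be compared against the logical structure of the paper and, loosely, against the argument in the cited reference.

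With that in mind, your primary route to rigidity has a circularity relative to this paper. You propose to transport rigidity from $\mc C_e$ along a monoidal equivalence ${\mc I}nd:\mc C_e\to KL_e$, pointing to Section~\ref{3.4} for the monoidality. But Corollary~\ref{goodfus}, which is precisely the identification $\widehat{M_e}\times\widehat{M_{e'}}\cong\widehat{M_e\otimes M_{e'}}$, explicitly \emph{uses} rigidity of $KL$ in its proof (``Rigidity of $KL$ implies that fusion is exact, and so the number of composition series can be computed\ldots''). So within this paper you cannot invoke Section~\ref{3.4} to establish rigidity without assuming what you want to prove. Your alternative suggestion --- obtaining rigidity of the simple typicals by transporting it along a free-field realization into a Heisenberg/lattice setting --- is in fact much closer to how \cite{creutzig2020tensor} actually argues, and avoids this loop. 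The remaining pieces of your sketch (HLZ for the braided tensor structure via $C_1$-cofiniteness, simple currents to reach the integer blocks, and $e^{2\pi i L_0}$ as ribbon twist) are the standard ingredients and match the cited reference in outline.
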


The fusion product (if it exists) of two generalized modules $W_1$ and $W_2$ of a VOSA $V$ will be denoted by $W_1\times_{V} W_2$, or simply $W_1\times W_2$ if it is clear what the VOSA is. It turns out that one can understand the fusion structure of $KL$ using the tensor structure of $\gl$ modules. Let $M_1,M_2$ and $M_3$ be finite-dimensional $\gl$ modules. Given an intertwiner operator $\mathcal{Y}$ of type $\binom{\widehat{M}_3}{\widehat{M}_1\widehat{M}_2}$, consider the assignment $\map{\pi(\mathcal{Y})}{M_1\otimes M_2}{M_3}$ given by:
\begin{equation}
\pi(\mathcal{Y})(m_1\otimes m_2)=\pi_0(\mathcal{Y}(m_1,1)m_2)
\end{equation} 
where $\pi_0$ denotes projection onto the lowest conformal weight space. This $\pi(\mathcal{Y})$ is in fact a homomorphism of $\gl$ modules. The inverse of this is established in \cite{creutzig2020tensor}. To state it, let's recall the definition of  taking contragredient dual. For a full definition, see \cite{creutzig2017tensor}. Let $M$ be a finite-dimensional module of $\gl$, then the linear dual $M^*=\mathrm{Hom}(M,\mathbb{C})$ has the structure of a $\gl$ module, induced by the action of $\gl$ on $M$. There is a similar operation on modules of $\widehat \gl$. Given a grading-restricted generalized $\widehat \gl$ module $W$. Write the decomposition of $W$ into generalized conformal weight spaces as:
\begin{equation}
    W=\bigoplus_{h\in \mathbb{C}}W_{[h]}.
\end{equation}
Define $W'$ to be the super vector space:
\begin{equation}
    W'=\bigoplus_{h\in \mathbb{C}}W_{[h]}^*,
\end{equation}
together with the action of $V(\widehat \gl)$ by:
\begin{equation}
    \langle Y_{W'}(v,z)w', w\rangle=\langle w', Y_W(e^{zL_1}(-z^2)^{L_0}v,z^{-1})w\rangle.
\end{equation}
Here $\langle-,-\rangle$ is the natural pairing between $W$ and $W'$, and the above is well-defined by the grading restriction condition. The module $W'$ is called the contragredient dual of $W$. The following is proved in \cite{creutzig2020tensor}:

\newtheorem{glaffine}[DefKL]{Proposition}

\begin{glaffine}\label{glaffine}
Let $M_1$, $M_2$ and $M_3$ be finite-dimensional $\gl$ modules, and let $\map{f}{M_1\otimes M_2}{M_3^*}$ be a homomorphism of $\gl$ modules. Then there exists a unique intertwiner operator $\mathcal{Y}$ of type $\binom{\widehat{M}_3'}{\widehat{M}_1\widehat{M}_2}$  such that $\pi(\mathcal{Y})=f$. 
\end{glaffine}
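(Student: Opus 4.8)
The plan is to produce the intertwiner $\mathcal{Y}$ from the $\gl$-homomorphism $f$ by the standard Kazhdan-Lusztig/Zhu-algebra argument: an intertwiner of type $\binom{\widehat{M}_3'}{\widehat{M}_1\widehat{M}_2}$ is the same datum as a suitable $\gl[t]$-equivariant map out of $\widehat{M}_1$, and the induction adjunction reduces this to a map out of the lowest-weight space $M_1$. Concretely, I would first recall that for modules coming from induction, an intertwining operator $\mathcal{Y}(\cdot,z)\colon \widehat{M}_1\to \hom(\widehat{M}_2,\widehat{M}_3'\{z\})[\log z]$ is determined by its ``leading coefficient'' on lowest-weight vectors, because $\widehat{M}_1$ is generated over $\widehat\gl$ by $M_1$ and the Jacobi/$L_{-1}$-derivative identities propagate the action of $\mathcal{Y}(m_1,z)$ to all of $\widehat{M}_1$ once it is known on $M_1$. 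This is exactly the content that makes $\pi$ well-defined; the claim is that $\pi$ is a bijection, and $f\mapsto \mathcal{Y}$ is its inverse.

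The key steps, in order: (1) Set up the correspondence between intertwiners $\mathcal{Y}$ of type $\binom{\widehat{M}_3'}{\widehat{M}_1\widehat{M}_2}$ and, via the contragredient, ``three-point'' pairings $M_1\otimes \widehat{M}_2\otimes \widehat{M}_3\to \mathbb{C}\{z\}[\log z]$, and then restrict one slot at a time to lowest-weight spaces; the grading-restriction hypothesis on all three modules guarantees the relevant sums converge and the pairing is finite in each weight. (2) Using that $\widehat{M}_2=\mathcal{U}(\widehat\gl)\otimes_{\mathcal{U}(\gl[t]\oplus\mathbb{C}\mathbf{k})}M_2$ and similarly for $\widehat{M}_3$, invoke the induction adjunction to show that specifying a nonzero-weight-compatible intertwiner is equivalent to specifying a single linear map $M_1\otimes M_2\to M_3^*\cong (M_3')_{\text{low}}$ that intertwines the (zero-mode) $\gl$-actions — this is where the Zhu-algebra/associativity computation lives: one checks that the constraints imposed by the lower modes $\psi^\pm_{-r}, N_{-r}, E_{-r}$ ($r>0$) acting through the Jacobi identity are automatically solvable and force exactly the $\gl$-equivariance of the leading term, no more. (3) Given $f$, build $\mathcal{Y}$ explicitly by the universal formula (propagate $\mathcal{Y}(m_1,z)$ off $M_1$ using the commutator formula with $\widehat\gl$-modes and off lowest weights in $\widehat{M}_2$ similarly), then verify the Jacobi identity for $\mathcal{Y}$ reduces, after stripping to lowest-weight components, to the $\gl$-module-map condition on $f$, which holds by hypothesis. (4) Check $\pi(\mathcal{Y})=f$ by construction and uniqueness by running step (1) backwards: two intertwiners with the same $\pi$ agree on $M_1\otimes M_2$ hence everywhere.

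I expect the main obstacle to be step (2)/(3): verifying that the single $\gl$-equivariance condition on the leading coefficient is \emph{sufficient} for the full Jacobi identity of $\mathcal{Y}$ to hold — i.e. that no obstruction to extending the leading term to a genuine intertwiner survives. For generic affine Lie algebras this is the heart of the Kazhdan-Lusztig tensor-structure machinery and can require a careful induction on conformal weight together with the commutation relations of $\widehat\gl$; here the $\mathfrak{gl}(1|1)$ case is mild (the algebra is small, $E_0$ is central, and the nilpotency/grading-restriction bounds keep everything finite), but one still has to be careful with the odd modes $\psi^\pm$ and with the $\log z$ terms that appear precisely because $L_0$ acts non-semisimply on the atypical modules. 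I would handle the $\log z$ subtlety by working with the formal variable $\zeta=\log z$ as in \cite{hlz3} and noting that the leading ($\zeta^0$) coefficient already determines the rest via the $L_0$-conjugation relation. Since the forward direction $\pi$ and Theorem stating $KL$ is a braided tensor supercategory are already available from \cite{creutzig2020tensor}, in the actual writeup I would cite that reference for the technical core of step (2)/(3) and only spell out the $\gl$-specific bookkeeping.
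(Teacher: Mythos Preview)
The paper does not give its own proof of this proposition: it is stated as a result established in \cite{creutzig2020tensor} (see the sentence immediately preceding the statement), and the paper simply cites that reference. Your outline is the standard Zhu-algebra/induction-adjunction argument that underlies the cited result, and you yourself conclude by saying you would cite \cite{creutzig2020tensor} for the technical core; so your approach and the paper's treatment coincide.
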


This Proposition, together with rigidity and Proposition \ref{glaffinerep}, can help us compute fusion product explicitly for any pair modules in $KL$. Let $S_b:=\mathbb{Z}\setminus \{0\}$ and $S_g:=\mathbb{C}\setminus S_b$.

\newtheorem{restricteddual}[DefKL]{Lemma}

\begin{restricteddual}\label{restricteddual}
$\widehat{M_e^*}\cong \widehat{M_e'}$ for any $e\in S_g$. 

\end{restricteddual}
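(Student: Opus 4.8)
The plan is to compare the two $\widehat{\gl}$-module structures directly on the level of underlying vector spaces. For $e \in S_g$ (so $e \notin \mathbb{Z}$ or $e = 0$), both $\widehat{M_e^*}$ and $\widehat{M_e'}$ are, by Proposition \ref{glaffinerep}, objects of $KL_e$, and by the same proposition both are determined up to isomorphism by their lowest conformal weight space as a $\gl$-module together with the fact that they are induced. So it suffices to identify the lowest conformal weight space of $\widehat{M_e'}$ with $M_e^*$ as a $\gl$-module and to check that $\widehat{M_e'}$ is in fact an induced module (equivalently, that it is generated by its lowest weight space). Then both sides are induced from the same $\gl$-module $M_e^*$, hence isomorphic.

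First I would unwind the definition of the contragredient dual $W'$ for $W = \widehat{M}$: writing $W = \bigoplus_h W_{[h]}$, we have $W' = \bigoplus_h W_{[h]}^*$, and the lowest conformal weight space of $W'$ is the linear dual of the \emph{top}-degree-zero piece — but since $\widehat M$ is induced from $M$ sitting in degree $0$ and all other weights are strictly larger (this uses $e \in S_g$, which guarantees the $L_0$-spectrum on $\widehat M$ is bounded below with $M$ in lowest degree; cf. the formula for $L_0$ in Section \ref{3.2}), the lowest weight space of $W'$ is exactly $M^* = \mathrm{Hom}(M,\mathbb{C})$. Next I would check that the $\gl$-action (zero modes $N_0, E_0, \psi_0^\pm$) on this lowest weight space of $\widehat M'$, read off from the formula $\langle Y_{W'}(v,z)w',w\rangle = \langle w', Y_W(e^{zL_1}(-z^2)^{L_0}v, z^{-1})w\rangle$, reproduces precisely the contragredient $\gl$-action defining $M^*$. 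Here one only needs the constant term in $z$, and since the relevant vectors $v \in \{N_{-1}\mathbf 1, E_{-1}\mathbf 1, \psi^\pm_{-1}\mathbf 1\}$ have conformal weight $1$ and are annihilated by $L_1$ (being lowest weight in the vacuum), the substitution $e^{zL_1}(-z^2)^{L_0}v = -z^2 v$ collapses things nicely and one recovers the standard dual action up to the usual signs.

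The remaining point — and the one I expect to be the main obstacle — is to show that $\widehat M'$ is generated over $\widehat{\gl}$ by its lowest weight space $M^*$, so that the surjection $\widehat{M^*} \twoheadrightarrow \widehat M'$ coming from the universal property of induction is actually an isomorphism. Equality can be checked by a graded-dimension (character) count: contragredient duality preserves the dimension of each generalized conformal weight space, so $\widehat M'$ and $\widehat M$ have the same character; meanwhile $\widehat{M^*}$ and $\widehat M$ also have the same character since $\dim M^* = \dim M$ and induction from $\mathcal U(\gl \otimes t\mathbb C[t])$-triviality produces a PBW-type character depending only on $\dim M$. Hence the surjection $\widehat{M^*} \to \widehat M'$ is a degreewise iso. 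One must be slightly careful that the map is well-defined, i.e. that $M^* \hookrightarrow \widehat M'$ is a $\gl$-submodule of the lowest weight space annihilated by positive modes — this follows from the explicit action computation above together with the grading-restriction. For $e = 0$ the identical argument works since Proposition \ref{glaffinerep} also covers that case; the only modules in $S_g$ one needs to treat are $M = V_{n,e}$ ($e \notin \mathbb Z$) and direct sums/extensions thereof, and $M = A_n$ and its self-extensions $A_n^k$ when $e = 0$, but since the argument is uniform in $M$ no case analysis is actually required.
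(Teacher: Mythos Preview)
Your proposal is correct, and the core of it is cleaner than the paper's argument. The essential move you make is to invoke Proposition~\ref{glaffinerep} directly: since $e\in S_g$ implies $-e\in S_g$, the contragredient $\widehat{M_e}'$ lies in $KL_{-e}$ and is therefore (by that equivalence) induced from its lowest conformal weight space, which you identify as $M_e^*$. That already finishes the proof. The paper instead constructs the map $\widehat{M_e^*}\to\widehat{M_e}'$ and argues injectivity by hand: it supposes a nonzero kernel, picks a simple submodule $\widehat V$ inside it, and uses the known isomorphism $\widehat V\cong\widehat{V^*}'$ for simples (citing \cite{creutzig2020tensor}) to derive a contradiction; surjectivity then follows by counting composition factors. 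So the paper trades the appeal to Proposition~\ref{glaffinerep} for a socle argument plus an external reference. Your route is more conceptual; the paper's is more self-contained at this point in the text.

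One comment on presentation: your character-count paragraph is redundant once you have invoked Proposition~\ref{glaffinerep}. You frame ``$\widehat{M}'$ is generated by its lowest weight space'' as the main obstacle and then offer the character match as the resolution, but the character match alone does not give this---equal graded dimensions plus a graded map only yield an isomorphism once you already know the map is injective or surjective, and you have not established either independently. What actually closes the loop is Proposition~\ref{glaffinerep} itself (induced modules are generated by their degree-zero piece), after which the character count is unnecessary. So the argument is sound, but you should either drop the character paragraph or rephrase it as an alternative, standalone proof that first establishes injectivity (e.g.\ by the paper's socle method) and only then invokes equality of characters.
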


\begin{proof}
By definition, the lowest conformal weight space of $\widehat{M_e'}$ is $M_e^*$, thus the identity map $M_e^*\to M_e^*$ induces a map of VOSA modules $\widehat{M_e^*}\to \widehat{M_e'}$. We claim that this map is injective. Suppose otherwise, then its kernel would contain a minimal module $\widehat{V}_{n-1/2,e}$ for some $n$ in the case $e\ne 0$ or $\widehat{A}_{n,0}$ for the case $e=0$. In any case we denote such minimal module by $\widehat{V}$. Since $\mathrm{Hom}(V, M_e^*)\cong \mathrm{Hom}(\widehat{V}, \widehat{M_e^*})$, the embedding $\widehat{V}\to \widehat{M_e^*}$ comes from the induces map of a injection $V\to M_e^*$, and the map $\widehat{V}\to \widehat{M_e'}$ then comes from the map $\widehat{V}\to \widehat{V^*}'\to \widehat{M_e'}$. The first map is an isomorphism by \cite{creutzig2020tensor}, and the second is nonzero and thus is an embedding, a contradiction. Since this map is injective, by counting the number of composition factors, this is also surjective, so it is an isomorphism.

\end{proof}

As a corollary, we obtain the following:

\newtheorem{goodfus}[DefKL]{Corollary}

\begin{goodfus}\label{goodfus}
In the case when $e,e'\in S_g$ and $e+e'\in S_g$, $\widehat{M_e\otimes M_{e'}}\cong \widehat{M}_e\times \widehat{M}_{e'}$.

\end{goodfus}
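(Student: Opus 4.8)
The plan is to show that the fusion product $\widehat{M}_e\times\widehat{M}_{e'}$, which exists because $KL$ is a tensor category (Theorem \ref{KLrigid}), is induced from a finite-dimensional $\gl$-module, and then to identify that module with $M_e\otimes M_{e'}$ by a Yoneda-type argument resting on Proposition \ref{glaffine}, Lemma \ref{restricteddual}, and the equivalence of Proposition \ref{glaffinerep}.

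First I would check that $N:=\widehat{M}_e\times\widehat{M}_{e'}$ lies in $KL_{e+e'}$. Letting $\mathcal Y_N$ be the canonical intertwiner of type $\binom{N}{\widehat{M}_e\,\widehat{M}_{e'}}$, the commutator formula $[E_0,\mathcal Y_N(u,z)]=\mathcal Y_N(E_0u,z)$ (valid since $E$ has conformal weight one and $\kappa(E,E)=0$) shows that if $u\in\widehat{M}_e$ and $v\in\widehat{M}_{e'}$ have generalized $E_0$-eigenvalue $e$ and $e'$, then the coefficients of $\mathcal Y_N(u,z)v$ have generalized $E_0$-eigenvalue $e+e'$; since $N$ is generated by such coefficients (by universality of the fusion product), $N\in KL_{e+e'}$. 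As $e+e'\in S_g$, Proposition \ref{glaffinerep} gives $N\cong\widehat{P}$ for some finite-dimensional $P\in\mathcal C_{e+e'}$. Set $g:=\pi(\mathcal Y_N)\in\hom_{\gl}(M_e\otimes M_{e'},P)$; it then suffices to prove that $g$ is an isomorphism, for then $\widehat{M_e\otimes M_{e'}}\cong\widehat{P}\cong N$.

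To that end, for every $Q\in\mathcal C_{e+e'}$ I would produce a bijection $\hom_{\gl}(M_e\otimes M_{e'},Q)\cong\hom_{\gl}(P,Q)$, natural in $Q$, as the composite of four natural bijections: (i) the inverse of $\mathcal Y\mapsto\pi(\mathcal Y)$ from Proposition \ref{glaffine}, identifying $\hom_{\gl}(M_e\otimes M_{e'},Q)$ with the space of intertwiners of type $\binom{\widehat{Q^*}'}{\widehat{M}_e\,\widehat{M}_{e'}}$; (ii) the universal property of the fusion product, identifying that space with $\hom_{KL}(N,\widehat{Q^*}')$; (iii) the isomorphism $\widehat{Q^*}'\cong\widehat{(Q^*)^*}=\widehat{Q}$ of Lemma \ref{restricteddual} (applicable because $Q^*\in\mathcal C_{-(e+e')}$ with $-(e+e')\in S_g$) together with $N\cong\widehat{P}$, giving $\hom_{KL}(N,\widehat{Q^*}')\cong\hom_{KL}(\widehat{P},\widehat{Q})$; and (iv) full faithfulness of induction from Proposition \ref{glaffinerep}, giving $\hom_{KL}(\widehat{P},\widehat{Q})\cong\hom_{\gl}(P,Q)$. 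Chasing $h\in\hom_{\gl}(P,Q)$ back through these steps, and using that the induced map $\widehat{h}$ restricts on lowest conformal weight spaces to $h$ and strictly raises the conformal weight elsewhere --- so that $\pi(\overline{\widehat{h}}\circ\mathcal Y_N)=h\circ g$ --- shows the composite bijection is precomposition with $g$. Hence $\hom_{\gl}(M_e\otimes M_{e'},-)\cong\hom_{\gl}(P,-)$ as functors on $\mathcal C_{e+e'}$, so the Yoneda lemma forces $g\colon M_e\otimes M_{e'}\to P$ to be an isomorphism, and therefore $\widehat{M_e\otimes M_{e'}}\cong\widehat{P}\cong\widehat{M}_e\times\widehat{M}_{e'}$.

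The hypotheses $e,e',e+e'\in S_g$ are used precisely to keep $M_e$, $M_{e'}$, and the auxiliary modules $Q^*$ within the scope of Proposition \ref{glaffinerep} and Lemma \ref{restricteddual}. I expect the main obstacle to be verifying the \emph{naturality in $Q$} of bijections (i) and (ii) --- that the correspondence of Proposition \ref{glaffine} and the universal property of the fusion product are functorial --- and the bookkeeping needed to recognize the composite as precomposition with $g$; the charge-additivity step and the contragredient identifications, although essential, are routine. An alternative to the Yoneda step is to argue directly: $g$ is surjective because the intertwiner $N\to\widehat{P/\mathrm{Im}\,g}$ induced from $\mathcal Y_N$ has vanishing image under $\pi$, hence vanishes by Proposition \ref{glaffine}, so $\mathcal Y_N$ factors through $\widehat{\mathrm{Im}\,g}\subseteq N$ and universality forces $\widehat{\mathrm{Im}\,g}=N$; and $g$ is then injective by a composition-length count, using that induction is exact and preserves composition length on $\mathcal C_{S_g}$.
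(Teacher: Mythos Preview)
Your Yoneda approach is correct and takes a genuinely different route from the paper. The paper proceeds more directly: from the identity on $M_e\otimes M_{e'}$, Proposition \ref{glaffine} and Lemma \ref{restricteddual} produce a single map $\widehat{M}_e\times\widehat{M}_{e'}\to\widehat{M_e\otimes M_{e'}}$; surjectivity is argued much as in your alternative (the image, by Proposition \ref{glaffinerep}, would be $\widehat{W}$ for some $W\subseteq M_e\otimes M_{e'}$, forcing $W=M_e\otimes M_{e'}$ since $\pi(\mathcal Y)$ is the identity); and injectivity is obtained by a composition-factor count that \emph{imports the known fusion rules of simple modules} from \cite[Theorem 3.2.4]{creutzig2020tensor} (rigidity makes fusion exact, so $\ell(\widehat{M}_e\times\widehat{M}_{e'})$ can be read off from the simple case and matched against $\ell(M_e\otimes M_{e'})$). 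Your representability argument trades this external input for the functorial bookkeeping you flag; it is more self-contained and avoids appealing to the simple fusion rules at all, at the cost of having to track the isomorphism $\widehat{Q^*}'\cong\widehat{Q}$ and the behaviour of $\pi_0$ carefully.

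One caution about your alternative sketch at the end: the phrase ``$g$ is then injective by a composition-length count, using that induction is exact and preserves composition length'' is not sufficient as stated. You have $g\colon M_e\otimes M_{e'}\twoheadrightarrow P$ with $N\cong\widehat P$, so $\ell(P)\le\ell(M_e\otimes M_{e'})$; to get equality you need a lower bound on $\ell(P)=\ell(N)$, and nothing in your alternative supplies one. This is exactly the gap the paper closes by invoking the simple fusion rules. Your main Yoneda argument does not suffer from this, since it identifies $P$ with $M_e\otimes M_{e'}$ directly without any length comparison.
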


\begin{proof}
By Proposition \ref{glaffine} and Lemma \ref{restricteddual}, one has a map $\widehat{M}_e\times \widehat{M}_{e'}\to \widehat{M_e\otimes M_{e'}}$. This is surjective since $ \widehat{M_e\otimes M_{e'}}$ is generated by $M_e\otimes M_{e'}$. Indeed, by Proposition \ref{glaffinerep}, if the image of this intertwiner was contained in some sub-module, it would be a sub-module of the form $\widehat{W}$ for some $W\subseteq M_e\otimes M_{e'}$. This is contradicting the fact that $\pi(\mathcal Y)$ is an isomorphism onto $M_e\otimes M_{e'}$. To conclude the proof, we only need to compute the number of composition factors on both sides. Rigidity of $KL$ implies that fusion is exact, and so the number of composition series can be computed using fusion rules of irreducible modules as in  \cite[Theorem 3.2.4]{creutzig2020tensor}. Comparing this with the composition series of $M_e\otimes M_{e'}$, we conclude that the map is an isomorphism. 

\end{proof}

We can use Corollary \ref{goodfus} to also deal with fusion rules of $M_e$ for $e\in S_b$. Indeed, since fusion product with simple currents preserves socle series \cite{creutzig2013w}, the functor:
\begin{equation}
  \map{\widehat{A}_{n,l}\times -}{KL_e}{KL_{e+l}}
\end{equation}
 is an equivalence, with inverse given by $\widehat{A}_{-n,-l}$. In particular, for any $V \in KL_e$ where $e \in S_b$, we can ``shift" V to be in $KL_0$ using simple currents
\begin{equation}
V =  \widehat{A}_{n,e}\times (\widehat{A}_{-n,-e}\times V).
\end{equation}
Suppose we want to compute $V\times W$ for $V\in KL_e$ and $W\in KL_f$. If either $e,f$ or $e+f\in S_b$, then we can use simple currents to pull the computation into $KL_0$, so that we can apply Corollary \ref{goodfus}. The question now becomes identifying $\widehat{A}_{n,e}\times V$ for various $V$ and $e\in S_b$. It suffices to take $V$ to be indecomposable. We introduce a collection of indecomposable modules that will be relevant for connecting to $\mc C_\bg$ (they either come from induction, or a spectral flow of a module from induction). Note that this is not a complete list of indecomposables in $KL$. 
 \begin{itemize}
 
\item  $\widehat{A}_{n,l}$ for $l\in\mathbb{Z}$. When $l<0$, $\widehat{A}_{n,l}\cong \sigma^l(\widehat{A}_{n+l+\frac{1}{2}, 0})$. 	When $l>0$, $\widehat{A}_{n,l}\cong \sigma^l \circ w(\widehat{A}_{-n-l+\frac{1}{2}, 0})\cong \sigma^l( \widehat{A}_{n+l-\frac{1}{2}, 0})$. 

\item  $\widehat{V}_{n,0,\pm, x}^t$ for $x\in \mathbb{C}$ and $t\in \mathbb{N}$ as well as $\widehat{V}_{n,e, x}^t $ for $e\notin \mathbb{Z}$. These are the induced module from $V_{n,0,\pm, x}^t$ and $V_{n,e, x}^t$ respectively. For $l\in\mathbb{Z}$, define $\widehat{V}_{n,l,\pm,x}^t$ to be $\sigma^l(\widehat{V}_{n+l,0,\pm, x}^t)$. One sees then that when $l<0$, $\widehat{V}_{n,l,-,x}^t=\widehat{V}_{n,l, x}^t$, and when $l>0$, $\widehat{V}_{n,l,+,x}^t=\widehat{V}_{n,l, x}^t$. 

\item $\widehat{P}_{n,x}^t$, which is induced from $P_{n,x}^t$. Define $\widehat{P}_{n-l-\frac{1}{2},l,x}^t$ for $l<0$ to be $\sigma^l(\widehat{P}_{n,x}^t)$ and for $l>0$, define $\widehat{P}_{-n-l+\frac{1}{2},l,x}^t$ to be $\sigma^l\circ w (\widehat{P}_{n,x}^t)$. Note that all these can be obtained by fusion product with simple currents.

\end{itemize}

We will also introduce a function $\epsilon(l)$ on $\mathbb{Z}$ given by:
\begin{equation}
\epsilon(l) = \left\{
\begin{array}{rl}
-\frac{1}{2} & \text{if } l < 0,\\
0 & \text{if } l = 0,\\
\frac{1}{2} & \text{if } l > 0.
\end{array} \right.
\end{equation}
Define $\epsilon(l,l')=\epsilon(l)+\epsilon(l')-\epsilon(l+l')$. 
 
\newtheorem{fuswithsc}[DefKL]{Lemma}

\begin{fuswithsc}\label{fuswithsc}
Let $l'\in S_g$ and $l\in \mathbb{Z}$. One has the following fusion rule with simple currents:
\begin{equation}
\widehat{A}_{n,l}\times \widehat{V}_{n',l',x}^ t= \widehat{V}_{n'+n-\epsilon(l),l'+l,x}^t
\end{equation}

\end{fuswithsc}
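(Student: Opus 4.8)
The plan is to strip off a spectral flow so as to reduce to the situation of Corollary~\ref{goodfus}, where a fusion product becomes the induction of an honest $\gl$-tensor product. The first thing I would record is that $\widehat{A}_{n,l}$ is always a spectral flow of an \emph{induced} module: unravelling the three sign cases in the descriptions of the simple currents above, one obtains the uniform identity $\widehat{A}_{n,l}\cong\sigma^{l}\big(\widehat{A}_{n+l-\epsilon(l),0}\big)$, where $\widehat{A}_{n+l-\epsilon(l),0}$ is the induction of the one-dimensional singleton $A^{1}_{n+l-\epsilon(l)}$ and hence lies in $KL_{0}$. (For $l<0$ this reads $\sigma^{l}(\widehat{A}_{n+l+\frac12,0})$ and for $l>0$ it reads $\sigma^{l}(\widehat{A}_{n+l-\frac12,0})$, matching the formulas stated earlier; for $l=0$ it is tautological.)

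Next I would invoke the standard compatibility of spectral flow with the fusion product, $\sigma^{l}(M)\times N\cong\sigma^{l}(M\times N)$, which holds for the Heisenberg-type spectral flows of $V(\widehat\gl)$ (concretely, $\sigma^{l}$ is realized by fusing with the invertible simple current $\sigma^{l}(V(\widehat\gl))$; cf.\ \cite{creutzig2020tensor}). Combining this with Corollary~\ref{goodfus}, which applies because the relevant $E_{0}$-eigenvalues $0$, $l'$ and $0+l'=l'$ all lie in $S_{g}$ by hypothesis, gives
\begin{equation}
\widehat{A}_{n,l}\times\widehat{V}^{t}_{n',l',x}\;\cong\;\sigma^{l}\Big(\widehat{A}_{n+l-\epsilon(l),0}\times\widehat{V}^{t}_{n',l',x}\Big)\;\cong\;\sigma^{l}\Big(\widehat{A^{1}_{n+l-\epsilon(l)}\otimes V^{t}_{n',l',x}}\Big).
\end{equation}
I would then carry out the elementary $\gl$-computation $A^{1}_{a}\otimes V^{t}_{m,e,x}\cong V^{t}_{m+a,e,x}$: since $A^{1}_{a}$ is invertible in $\mc C$, tensoring with it preserves the actions of $E$ and $\psi^{\pm}$ and all Jordan block sizes and merely translates every generalized $N$-eigenvalue by $a$ (immediate on the basis $\{v_{j},\bar v_{j}\}$ characterizing $V^{t}_{m,e,x}$). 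Taking $a=n+l-\epsilon(l)$, inducing, and then applying $\sigma^{l}$ together with the fact that spectral flow relabels these modules by $\sigma^{l}\colon\widehat{V}^{t}_{m,e,x}\mapsto\widehat{V}^{t}_{m-l,\,e+l,\,x}$ (the very convention defining $\widehat{V}^{t}_{n,l,\pm,x}:=\sigma^{l}(\widehat{V}^{t}_{n+l,0,\pm,x})$), produces
\begin{equation}
\widehat{A}_{n,l}\times\widehat{V}^{t}_{n',l',x}\;\cong\;\sigma^{l}\big(\widehat{V}^{t}_{n'+n+l-\epsilon(l),\,l',\,x}\big)\;\cong\;\widehat{V}^{t}_{n'+n-\epsilon(l),\,l'+l,\,x},
\end{equation}
which is the claim.

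The steps I expect to need the most care form the spectral-flow bookkeeping: (i) verifying the uniform identity $\widehat{A}_{n,l}\cong\sigma^{l}(\widehat{A}_{n+l-\epsilon(l),0})$ case by case in the three sign regimes; (ii) being careful that the correct compatibility is $\sigma^{l}(M)\times N\cong\sigma^{l}(M\times N)$ and not the naive $\sigma^{l}(M)\times\sigma^{l}(N)$; and (iii) pinning down how $\sigma^{l}$ renames a typical induced module. For (iii) the cleanest argument is that $\sigma^{l}$ fixes every mode $N_{r}$ and raises $E_{0}$-eigenvalues by $l$, so $\sigma^{l}(\widehat{V}^{t}_{m,e,x})$ lies in $KL_{e+l}$ and, being the $(t-1)$-fold self-extension of a single simple object there, is forced by its $N_{0}$-content and lowest conformal weight to equal $\widehat{V}^{t}_{m-l,\,e+l,\,x}$, consistently extending the $e=0$ definition. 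Finally, for the borderline value $l'=0$ the modules acquire an extra $\pm$ label; the same chain of isomorphisms goes through verbatim, using the $\mc C$-tensor entry $A^{1}_{a}\otimes V^{t}_{m,0,\pm,x}\cong V^{t}_{m+a,0,\pm,x}$ in place of the generic one, and yields $\widehat{V}^{t}_{n'+n-\epsilon(l),\,l,\,\pm,x}$ with the sign unchanged.
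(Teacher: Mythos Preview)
Your argument is correct, but it follows a genuinely different strategy from the paper's. You pull the simple current back to $KL_{0}$ via spectral flow, invoke Corollary~\ref{goodfus} there (where all three $E_{0}$-values $0,l',l'$ lie in $S_{g}$), and then flow forward again. The paper instead avoids spectral flow entirely: it embeds the simple current $\widehat{A}_{n+2\epsilon(l),l}$ into the \emph{reducible but still induced} module $\widehat{V}_{n,l}$, applies Proposition~\ref{glaffine} directly to produce a surjection $\widehat{V}_{n,l}\times\widehat{V}^{t}_{n',l',x}\twoheadrightarrow\widehat{V_{n,l}\otimes V^{t}_{n',l',x}}$, upgrades it to an isomorphism by counting composition factors, and then reads off the image of the simple-current piece using that fusion with $\widehat{A}_{n\pm1,l}$ preserves socle series. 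The trade-off is that the paper's route needs no statement about how $\sigma^{l}$ relabels typical induced modules, whereas your route is conceptually cleaner but rests on the identity $\sigma^{l}(\widehat{V}^{t}_{m,e,x})\cong\widehat{V}^{t}_{m-l,e+l,x}$ for $e,e+l\notin\mathbb{Z}$, which the paper only records for $e=0$ (as a definition) and which you rightly flag as the step needing the most care. Your sketch for (iii) is on the right track, but note that spectral flow moves the lowest conformal weight space, so ``$N_{0}$-content'' alone only pins $m'$ down modulo $\mathbb{Z}$; one really does need the conformal-weight comparison (or, equivalently, the realization of $\sigma^{l}$ as fusion with $\sigma^{l}(V(\widehat\gl))$ together with the simple-current fusion rules) to nail the exact shift $m\mapsto m-l$.
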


\begin{proof}
One has a homomorphism of $V(\widehat \gl)$ modules: 
\begin{equation}
   \map{\mathcal{Y}}{\widehat{V}_{n,l}\times  \widehat{V}_{n',l',x}^t}{\widehat{V_{n,l}\otimes V_{n',l',x}^t}}, 
\end{equation}
 such that $\pi(\mathcal{Y})$ is an isomorphism. Since $l+l'\in S_g$, submodules of $\widehat{V_{n,l}\otimes V_{n',l',x}^t}$ are all induced from submodules of $V_{n,l}\otimes V_{n',l',x}^t$. By construction of $\mathcal{Y}$, if it were not surjective, then $\pi(\mathcal{Y})$ wouldn't be surjective. Thus the fusion rule is surjective. Now because $\widehat{V}_{n,l}$ is an extension of simple currents and fusion is right exact, the number of irreducible composition factors must not exceed two times the number of irreducible composition factors of $\widehat{V}_{n',l',x}^t$, and so by counting such factors one sees that this is an isomorphism.

Precomposing this with the embedding $\widehat{A}_{n+2\epsilon(l),l}\times \widehat{V}_{n',l',x}^t\to \widehat{V}_{n,l}\times  \widehat{V}_{n',l',x}^t$ and using that fusion with simple currents $\widehat{A}_{n\pm 1,l}$ preserve socle series, one gets the desired image.

\end{proof}

\newtheorem{fusextensionVbad}[DefKL]{Lemma}

\begin{fusextensionVbad}\label{fusextensionVbad}
Let $l \in \mathbb{Z}$. Then one has:
\begin{equation}
\widehat{A}_{n,l}\times \widehat{V}_{n',-l,x}^t=\left\{
    \begin{array}{rl}
     \widehat{V}_{n+n'-\frac{1}{2}, 0,-}^t & \text{if } l\geq 0,\\
      \\
       \widehat{V}_{n+n+\frac{1}{2}, 0, +}^t& \text{if } l<0.
    \end{array} \right.
\end{equation}

\end{fusextensionVbad}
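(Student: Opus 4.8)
The plan is to mimic the strategy used in the proof of Lemma \ref{fuswithsc}, but now working with a simple current $\widehat{A}_{n,l}$ and a module $\widehat{V}_{n',-l,x}^t$ whose $E_0$-eigenvalues add to $0 \in S_b$, so that the ``good case'' Corollary \ref{goodfus} is not directly available and we must instead track the socle/radical filtration carefully. The key input is that fusion with a simple current $\widehat{A}_{m,l}$ is an equivalence $KL_e \to KL_{e+l}$ that preserves socle (and hence composition) series \cite{creutzig2013w}, so the answer is forced to be an indecomposable module in $KL_0$ with the same length ($2t$ composition factors) as $\widehat{V}_{n',-l,x}^t$, and with the correct $N_0$-eigenvalues on its composition factors. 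The only indecomposables in $KL_0$ with $2t$ composition factors, all atypical of the singleton type, and the right pattern of $N_0$-weights are $\widehat{V}_{m,0,\pm}^t$ for appropriate $m$ and a choice of sign; so it remains to pin down $m$ and the sign.

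First I would recall from the list preceding the lemma that $\widehat{V}_{n',-l,x}^t$ is \emph{by definition} $\sigma^{-l}(\widehat{V}_{n'-l,0,-,x}^t)$ when $l > 0$ (so that $\widehat{V}_{n',-l,x}^t = \widehat{V}_{n',-l,-,x}^t$), and analogously $\widehat{V}_{n',-l,x}^t = \sigma^{-l}(\widehat{V}_{n'-l,0,+,x}^t)$ when $l<0$. Likewise $\widehat{A}_{n,l}\cong\sigma^l(\widehat{A}_{n+l+\frac12,0})$ for $l<0$ and $\widehat{A}_{n,l}\cong\sigma^l(\widehat{A}_{n+l-\frac12,0})$ for $l>0$. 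Next I would use that spectral flow is a tensor autoequivalence compatible with fusion in the appropriate sense — $\sigma^a(X)\times \sigma^b(Y)\cong \sigma^{a+b}(X\times Y)$ on $KL$ — to reduce the product $\widehat{A}_{n,l}\times \widehat{V}_{n',-l,x}^t$ to a spectral flow of a product taking place entirely inside $KL_0$: something of the form $\sigma^0\big(\widehat{A}_{*,0}\times \widehat{V}_{*,0,\mp,x}^t\big)$, i.e. a product in the good-enough range where we can appeal to Corollary \ref{goodfus}, Proposition \ref{glaffine}, and the explicit $\gl$-tensor decompositions of Section \ref{3.1} (in particular $V^1_{p,0,\epsilon,x}\otimes V^t_{q,0,\epsilon,x}$). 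Actually, since $0 = l + (-l)$ and $0 \in S_b$, the cleanest route is: compute instead in $KL$ via the equivalence $\widehat{A}_{-n,-l}\times(-)\colon KL_0 \to KL_{-l}$; that is, determine $X \in KL_0$ with $\widehat{A}_{-n,-l}\times X \cong \widehat{V}_{n',-l,x}^t$ by matching composition factors, using that both sides are indecomposable of length $2t$ and that fusing with the simple current just relabels the $N_0$-weights by a shift of $\mp\tfrac12$ dictated by $\epsilon(l)$. Comparing the Loewy diagrams of $\widehat{V}_{m,0,+}^t$ and $\widehat{V}_{m,0,-}^t$ — whose heads/socles differ by which of $\widehat{A}_{m\pm\frac12}$ sits on top — with the head of $\widehat{V}_{n',-l,x}^t$ after un-doing the spectral flow fixes both $m = n+n' \mp \tfrac12$ and the sign (the sign flips according to $\mathrm{sgn}(l)$ because the head of $\widehat{V}_{p,0,\pm,x}^t$ is $\widehat{A}_{p\pm\frac12}$, and spectral flow by $\pm l$ toggles which extreme weight is maximal).

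Concretely, the steps in order: (1) rewrite both factors via spectral flow so the computation is a $\sigma$-twist of a product in $KL_0$; (2) invoke Proposition \ref{glaffine}/Corollary \ref{goodfus}-type reasoning together with the $\gl$-module tensor decomposition $V^1_{a,0,\epsilon,x}\otimes V^t_{b,0,\epsilon,x}\cong V^t_{a+b+\frac12,0,\epsilon,x}\oplus V^t_{a+b-\frac12,0,\epsilon,x}$ and its atypical-chain analogue to identify the $\gl$-level answer; (3) note that $\widehat{A}_{n,l}$ is an extension of simple currents and fusion is right (indeed fully, by rigidity) exact, so the length of the answer is exactly $2t$, forcing a \emph{single} indecomposable summand rather than the direct sum appearing at the $\gl$ level — the simple-current fusion ``collapses'' the two chains into one longer-looking chain in the shifted block; (4) read off the precise subscript $n + n' \pm \tfrac12$ and the sign $\mp$ by tracking $N_0$-eigenvalues (equivalently, the labels $\epsilon(l)$) and which composition factor is the head, using the known socle-series-preservation of simple-current fusion. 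The main obstacle is step (3): controlling \emph{indecomposability} and the exact collapse of the $\gl$-level direct sum into one chain — this requires knowing that $\widehat{A}_{n,l}$ has the nontrivial self-extension structure of a length-two simple-current extension, and that tensoring with it glues the two length-$t$ chains in $KL_0$ into the length-$2t$ chain $\widehat{V}^t_{\bullet,0,\pm}$; this is exactly where one leans on \cite{creutzig2013w} (socle series preservation) and on the explicit structure of $\widehat{V}_{n,l}$ as an extension $0\to\widehat{A}_{n+\mathrm{sgn}(l),l}\to\widehat{V}_{n,l}\to\widehat{A}_{n,l}\to0$. Everything else is bookkeeping with spectral flow and weight labels.
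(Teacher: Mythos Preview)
Your overall strategy is sound and the spectral-flow reduction you sketch would in fact work cleanly: writing $\widehat{A}_{n,l}=\sigma^l(\widehat{A}_{n+l-\epsilon(l),0})$ and $\widehat{V}^t_{n',-l,x}=\sigma^{-l}(\widehat{V}^t_{n'-l,0,\pm,x})$ and using $\sigma^a(X)\times\sigma^b(Y)\cong\sigma^{a+b}(X\times Y)$, the product becomes $\widehat{A}_{*,0}\times\widehat{V}^t_{*,0,\pm,x}$ inside $KL_0$, which by Corollary~\ref{goodfus} is the induction of $A^1_{*}\otimes V^t_{*,0,\pm,x}\cong V^t_{*,0,\pm,x}$; the labels then fall out. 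However, your ``Concretely'' paragraph drifts into a different argument and contains errors: (i) $\widehat{A}_{n,l}$ is \emph{simple}, not ``an extension of simple currents'' --- you have confused it with $\widehat{V}_{n,l}$; (ii) after the spectral-flow reduction the relevant $\gl$-computation is $A^1_m\otimes V^t_{p,0,\pm,x}$, not $V^1_{a,0,\epsilon,x}\otimes V^t_{b,0,\epsilon,x}$, so no direct sum appears and no ``collapse'' step is needed; (iii) consequently your step~(3) is both incorrect as stated and irrelevant to the route you actually set up.

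The paper takes a genuinely different route and never invokes spectral-flow multiplicativity. It first computes the \emph{larger} fusion $\widehat{V}_{n,l}\times\widehat{V}^t_{n',-l,x}$: since the target lies in $KL_0$ (and $0\in S_g$), the intertwiner of Proposition~\ref{glaffine} furnishes an isomorphism onto $\widehat{V_{n,l}\otimes V^t_{n',-l,x}}\cong\widehat{P}^t_{n+n',0,x}$, exactly as in Corollary~\ref{goodfus}. Because $\widehat{A}_{n,l}$ is the simple quotient of $\widehat{V}_{n,l}$, exactness of fusion then realises $\widehat{A}_{n,l}\times\widehat{V}^t_{n',-l,x}$ as a \emph{quotient} of $\widehat{P}^t_{n+n',0,x}$. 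The paper finishes by pinning down which quotient: simple-current fusion preserves socle series, so the answer is indecomposable of length $2t$ with a unique minimal submodule, and the only quotient of $\widehat{P}^t_{n+n',0,x}$ with these properties and the correct composition factors is $\widehat{V}^t_{n+n'\mp\frac12,0,\mp}$, the sign being dictated by $\mathrm{sgn}(l)$. Your approach (done correctly) is arguably more direct; the paper's has the advantage of staying entirely within tools already established in Section~\ref{3}, without needing the compatibility of spectral flow with fusion, which the paper never states.
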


\begin{proof}
One has the following intertwiner from the isomorphism $V_{n,l}\otimes V_{n',-l,x}^t\to V_{n,l}\otimes V_{n',-l,x}^t$: 
\begin{equation}
    \map{\mathcal{Y}}{\widehat{V}_{n,l}\times \widehat{V}_{n',-l,x}^t}{\widehat{V_{n,l}\otimes V_{n',-l,x}^t}}.
\end{equation} Since now $V_{n,l}\otimes V_{n',-l,x}^t\cong P_{n+n',0}^t$ belongs to $S_g$, one can again show that this $\mathcal{Y}$ must be an isomorphism.

When $l\geq  0$, the composition factors of $\widehat{A}_{n,l}\times \widehat{V}_{n',-l,x,-}^t$ are $\widehat{A}_{n+n'-1,0}$ and $\widehat{A}_{n+n',0}$ and has a unique minimal submodule $\widehat{A}_{n+n'-1,0}$. The only quotient of $\widehat{P}_{n+n',0}^t$ having this property and the right number of irreducible factors is $\widehat{V}_{n+n'-\frac{1}{2},0,-}^t$.

When $l< 0$ the composition factors of $\widehat{A}_{n,l}\times \widehat{V}_{n',-l,x,-}^t$ are $\widehat{A}_{n+n'+1,0}$ and $\widehat{A}_{n+n',0}$ and has a unique minimal submodule $\widehat{A}_{n+n'+1,0}$. The only quotient of $\widehat{P}_{n+n',0}^t$ having this property and the right number of irreducible factors is $\widehat{V}_{n+n'+\frac{1}{2},0,+}^t$. 
This completes the proof. 

\end{proof}

Using this, we obtain the following fusion rules:

\newtheorem{diamondfus}[DefKL]{Corollary}

\begin{diamondfus}
For $l,l'\in \mathbb{Z}$ and $m\geq m'$, one has the following fusion rule: \begin{equation}\begin{aligned}\widehat{P}_{n,l,x}^m\times \widehat{P}_{n',l', x}^{m'}= &\bigoplus_{0\leq t\leq m'-1}\widehat{P}_{n+n'+1-\epsilon(l,l'),l+l',x}^{m+m'-1-2t}\\ &\bigoplus_{0\leq t\leq m'-1} 2\, \widehat{P}_{n+n'-\epsilon(l,l'),l+l',x}^{m+m'-1-2t}\\ &\bigoplus_{0\leq t\leq m'-1} \widehat{P}_{n+n'-1-\epsilon(l,l'),l+l',x}^{m+m'-1-2t}\end{aligned}\end{equation}

For $l>0$, one has:
\begin{equation}
\widehat{V}_{n,l,x}^m\times \widehat{V}_{n',-l,x}^{m'}=\bigoplus_{0\leq k\leq m'-1} \widehat{P}_{n+n',0,x}^{m+m'-1-2k}
\end{equation}

\end{diamondfus}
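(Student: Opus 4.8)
The plan is to transport every fusion product in the statement into the ``good'' block $KL_0$, where Corollary \ref{goodfus} identifies fusion with induction of a $\gl$-tensor product, and then to feed in the explicit tensor decompositions of $P_{\bullet,x}^{s}\otimes P_{\bullet,x}^{t}$ and of $V_{\bullet,0,\epsilon_1,x}^{s}\otimes V_{\bullet,0,\epsilon_2,x}^{t}$ recorded in Section \ref{3.1}, using that the induction functor ${\mc I}nd$ is additive. For the product $\widehat{P}_{n,l,x}^{m}\times\widehat{P}_{n',l',x}^{m'}$, recall that each $\widehat{P}_{n,l,x}^{m}$ is by construction a spectral flow $\sigma^{l}$ of some $\widehat{P}_{\bullet,0,x}^{m}$ (composed with the conjugation $w$ when $l>0$), and that $\sigma^{l}$ is realized as fusion with a simple current $\widehat{A}_{\bullet,l}$. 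Writing both factors this way and using associativity together with the braiding on $KL$, I would collect the two simple currents,
\begin{equation}
\widehat{P}_{n,l,x}^{m}\times\widehat{P}_{n',l',x}^{m'}\;\cong\;\bigl(\widehat{A}_{\bullet,l}\times\widehat{A}_{\bullet,l'}\bigr)\times\bigl(\widehat{P}_{\bullet,0,x}^{m}\times\widehat{P}_{\bullet,0,x}^{m'}\bigr).
\end{equation}
The inner product lives in $KL_0$ with $0\in S_g$, so Corollary \ref{goodfus} turns it into $\widehat{P_{\bullet,x}^{m}\otimes P_{\bullet,x}^{m'}}$; substituting the Section \ref{3.1} decomposition and using additivity of ${\mc I}nd$ yields the expected direct sum of $\widehat{P}_{\bullet,0,x}^{m+m'-1-2t}$'s. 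The simple-current factor is again computed via Corollary \ref{goodfus} inside $KL_0$ (since $A^1_a\otimes A^1_b=A^1_{a+b}$) together with the normalizations of the $\widehat{A}_{n,l}$, giving $\widehat{A}_{\bullet,l}\times\widehat{A}_{\bullet,l'}=\widehat{A}_{\bullet,l+l'}$ with a weight label differing from the naive sum by $-\epsilon(l,l')$. Fusing this simple current back onto each summand $\widehat{P}_{\bullet,0,x}^{k}$ --- which, since $A^1_c\otimes P_{q,x}^{k}=P_{q+c,x}^{k}$, merely shifts the weight label and reinstates the spectral-flow parameter $l+l'$ --- produces precisely the shift by $\epsilon(l,l')$ in all three weight labels and gives the stated formula. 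One runs this argument separately for the four sign patterns of $(l,l')$, since $w$ enters whenever a parameter is positive.

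For $\widehat{V}_{n,l,x}^{m}\times\widehat{V}_{n',-l,x}^{m'}$ with $l>0$, the two factors lie in $KL_l$ and $KL_{-l}$ with $l,-l\in S_b$, so Corollary \ref{goodfus} does not apply to the product directly. Instead I would peel off only the spectral flow of the first factor: since $l>0$, $\widehat{V}_{n,l,x}^{m}=\widehat{V}_{n,l,+,x}^{m}=\sigma^{l}(\widehat{V}_{n+l,0,+,x}^{m})=\widehat{A}_{\frac12-l,l}\times\widehat{V}_{n+l,0,+,x}^{m}$, so that
\begin{equation}
\widehat{V}_{n,l,x}^{m}\times\widehat{V}_{n',-l,x}^{m'}\;\cong\;\widehat{V}_{n+l,0,+,x}^{m}\times\bigl(\widehat{A}_{\frac12-l,l}\times\widehat{V}_{n',-l,x}^{m'}\bigr),
\end{equation}
and the inner product is exactly the content of Lemma \ref{fusextensionVbad} in its nonnegative branch, giving $\widehat{V}_{n'-l,0,-,x}^{m'}$. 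We are now fusing a ``$+$''-chain with a ``$-$''-chain, both in $KL_0$; Corollary \ref{goodfus} applies and produces the induction of $V_{n+l,0,+,x}^{m}\otimes V_{n'-l,0,-,x}^{m'}$, which by the $\epsilon_1=-\epsilon_2$ case of the Section \ref{3.1} formula equals $\bigoplus_{k=0}^{\min(m,m')-1}P_{n+n',x}^{m+m'-1-2k}$. Additivity of ${\mc I}nd$ and the hypothesis $m\ge m'$ then give $\bigoplus_{0\le k\le m'-1}\widehat{P}_{n+n',0,x}^{m+m'-1-2k}$.

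The conceptual steps above are short; the main obstacle is the bookkeeping of the $N_0$/conformal-weight labels under the three operations at play --- spectral flow $\sigma^{l}$, conjugation $w$, and fusion with the specific simple currents $\widehat{A}_{\bullet,l}$ --- and verifying that their combined effect on the weight labels is exactly the correction $\epsilon(l,l')=\epsilon(l)+\epsilon(l')-\epsilon(l+l')$. In particular one must track the $\pm\tfrac12$'s attached to $\epsilon(l)$ according to the sign of $l$, and the asymmetry between the $l<0$ normalization $\widehat{P}_{n,l,x}^{t}=\sigma^{l}(\widehat{P}_{n+l+\frac12,x}^{t})$ and the $l>0$ normalization $\widehat{P}_{n,l,x}^{t}=\sigma^{l}w(\widehat{P}_{-n-l+\frac12,x}^{t})$. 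A secondary point is to confirm that the associativity and braiding rearrangements are legitimate in the braided tensor supercategory $KL$ and that each intermediate module genuinely lands in $KL_0$, so that Corollary \ref{goodfus} is applicable.
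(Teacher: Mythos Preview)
Your approach is essentially the same as the paper's: reduce to $KL_0$ via simple currents, invoke Corollary \ref{goodfus}, and feed in the $\gl$ tensor decompositions from Section \ref{3.1}. The paper's proof is in fact the one-line version of what you wrote, using only that $\widehat{P}_{n,l,x}^{m}=\widehat{A}_{n,l}\times\widehat{P}_{0,0,x}^{m}$ and leaving all label-tracking implicit; your treatment of the second formula via Lemma \ref{fusextensionVbad} is more explicit than the paper's, which subsumes it under the same simple-current reduction.

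One small correction: you cannot compute $\widehat{A}_{\bullet,l}\times\widehat{A}_{\bullet,l'}$ ``via Corollary \ref{goodfus} inside $KL_0$'' when $l,l'\in S_b$, since those simple currents do not live in $KL_0$ and Corollary \ref{goodfus} requires all three parameters in $S_g$. The fusion of simple currents is instead taken as known input (they form a group under fusion, as noted just before Lemma \ref{fuswithsc}); once you cite that, the rest of your bookkeeping goes through.
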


\begin{proof}
When $l=0$, these follow from the tensor product structure of $\gl$. For $l\ne 0$, we use the fact that $\widehat{P}_{n,l,x}^m=\widehat{A}_{n,l}\times \widehat{P}_{0,0,x}^m$.
\end{proof}

\section{The mirror symmetry statement}\label{4}

In this section, we prove Theorem \ref{LALBequiv}. In Section \ref{4.1} we introduce the category $KL^0$, the de-equivariantization $KL^0/\mathbb{Z}$ as well as the lifting functor $\mathcal{F}$ from $KL^0$ to the category of modules of the VOSA $V_{\beta\gamma}\otimes V_{bc}$. The main ingredient here is a simple current extension described in \cite{creutzig2013w}, as well as the machinery of \cite{creutzig2017tensor, creutzig2020direct}. In Section \ref{4.2}, we recall free-field realizations of the VOSAs of our interest, following \cite{creutzig2020duality, aw}, and show that they are compatible with the simple current extension; we use them to show that certain objects in $KL$ do lie in $KL^0$, and so can be lifted by $\mathcal{F}$. As a consequence, we show the first part of Theorem \ref{LALBequiv}, namely that $\cbg$ has the structure of a braided tensor category defined by $P(z)$-intertwiners. In Section \ref{4.3}, we use our knowledge of $\cbg$ and $KL^0$ established in Section \ref{2} and Section \ref{3} to complete the proof of Theorem \ref{LALBequiv}. In Section \ref{4.4}, we compute fusion rules of indecomposable objects in $\cbg$ using the tensor equivalence $\mc F$. 

\subsection{$\beta\gamma$ as a simple current extension}\label{4.1}

As we have discussed in the introduction, due to the contribution of monopole operators, the physical boundary VOSA for $T_B$ is a simple current extension of $V(\widehat \gl)$. In \cite{creutzig2013w}, the authors showed that the VOSA $V(\widehat \gl)$ has many simple current extensions, which can often be identified as known VOSAs. A simple current extension is a direct sum of simple currents of $V(\widehat \gl)$ such that the resulting module carries a VOSA structure that extends the VOSA structure of $V(\widehat \gl)$. Among all the simple current extensions, the following is the one we  consider:
\begin{equation}\label{simplecurrentextension}
    \widehat{A}_{0,0}\oplus \bigoplus_{m>0} \widehat{A}_{\frac{1-m}{2},m}\oplus \widehat{A}_{\frac{m-1}{2},-m}.
\end{equation}
The choice is made by comparing the indices. The index of this module is computed in \cite{creutzig2013w}:
\begin{equation}
     \sum\limits_{m\in \mathbb{Z}} q^{m^2/2}s^{m/2}\prod\limits_{i=1}^\infty \frac{(1+sq^{i+m})(1+s^{-1}q^{i-m})}{(1-q^i)^2}. 
\end{equation}
Here the power of $s$ records the weights under the gauge group generator $N_0$ and the power of $q$ records the conformal weights. This correctly reproduces the index computed in \cite[Equation 3.31]{dimofte2018dual}.\footnote{Note that the boundary condition has effective level $k_{\mathrm{eff}}=1$ since each hypermultiplet contributes to $\frac{1}{2}$ of the level.}

The module in equation \eqref{simplecurrentextension} has the structure of a vertex operator superalgebra, and this VOSA is isomorphic to $V_{\beta\gamma}\otimes V_{bc}$, the tensor product of the $\beta\gamma$ VOA with a pair of free complex fermions. Let us denote by ${\mc V}_{ext}$ this extended VOSA. 

The following is explained in \cite{creutzig2017tensor}: given a vertex operator superalgebra extension $A$ in a VOA module category $\mathcal C$ where $P(z)$-intertwiners define a symmetric monoidal structure on $\mathcal C$, the category of local modules of $A$ in $\mathcal C$ coincides with the category of generalized modules of the VOA $A$ as braided tensor supercategories. However, in our situation (equation \eqref{simplecurrentextension}) as well as in many other cases, the object $A$ does not live in $\mathcal C$ but in a suitable completion of $\mathcal C$. Thus one needs to take a completion of $\mathcal C$ to allow infinite direct sums. This is explained in \cite[Theorem 1.1]{creutzig2020direct}: under suitable circumstances, one can extend the symmetric monoidal structure from $\mathcal C$ to a completion called $\mathrm{Ind}(\mathcal C)$, such that the object $A$ is now contained in $\mathrm{Ind}(\mathcal C)$.  The authors then showed \cite[Theorem 1.4]{creutzig2020direct} that the category of generalized $A$-modules in $\mathcal C$ also has a braided tensor supercategory structure defined via $P(z)$-intertwiners (see Section \ref{2.4}).   

We  apply this method to $KL$ and $\cbg$. Denote now by $\mathrm{Ind}(KL)$ the completion of $KL$ in the sense of \cite{creutzig2020direct}, then ${\mc V}_{ext}$ is a VOSA object in $\mathrm{Ind}(KL)$. Denote by $\mathrm{Rep}^0({\mc V}_{ext})$ the category of generalized modules of ${\mc V}_{ext}$ that lie in $\mathrm{Ind}(KL)$. This is a braided tensor supercategory via the $P(z)$-intertwining maps, as was shown in \cite[Theorem 1.4]{creutzig2020direct}.  

Given such an extension, for any object $W$ inside $\mathrm{Ind}(KL)$, the product ${\mc V}_{ext}\times W$ has the action by the mode algebra of the extended VOSA ${\mc V}_{ext}$. However, this action is not local in general. In \cite{creutzig2020simple}, the author explained that the monodromy determines whether the resulting action is local, thus becoming a generalized module of the VOSA ${\mc V}_{ext}$. More precisely, monodromy is defined by a composition of braiding isomorphisms: 
\begin{equation}
    M: {\mc V}_{ext}\times W\longrightarrow W\times {\mc V}_{ext}\longrightarrow {\mc V}_{ext}\times W.
\end{equation}
The module ${\mc V}_{ext}\times W$ is local if and only if the map $M$ is the identity morphism, or in other words, the monodromy acts trivially. Let us denote by $KL^0$ the full tensor subcategory of $KL$ consisting of $W$ such that monodromy on ${\mc V}_{ext}\times W$ is trivial. One has a functor: 
\begin{equation}
\map{\mathcal{F}}{KL^0}{\mathrm{Rep}^0({\mc V}_{ext})}
\end{equation}
that maps an object $W$ to ${\mc V}_{ext}\times W$ as an object in $\mathrm{Ind}(KL)$ together with the natural structure of a ${\mc V}_{ext}$ module. The result of \cite{creutzig2020simple} immediately implies:

\newtheorem{Liftmonodromy}{Theorem}[section]

\begin{Liftmonodromy}\label{Liftmonodromy}
 The functor:
\begin{equation}
  \map{\mathcal{F}}{KL^0}{\mathrm{Rep}^0({\mc V}_{ext})}
\end{equation}
is a tensor functor between braided tensor supercategories. 
\end{Liftmonodromy}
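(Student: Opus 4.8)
The plan is to deduce the theorem directly from the general theory of commutative (super)algebra objects in braided tensor categories developed in \cite{creutzig2017tensor,creutzig2020direct,creutzig2020simple}, applied to the algebra object $\mathcal{V}_{ext}$ living in $\mathrm{Ind}(KL)$. First I would record that the simple current extension \eqref{simplecurrentextension}, identified with $V_{\beta\gamma}\otimes V_{bc}$ via \cite{creutzig2013w}, endows $\mathcal{V}_{ext}$ with the structure of a commutative, associative (super)algebra object with trivial twist in the braided tensor supercategory $\mathrm{Ind}(KL)$; here one uses that $KL$ carries the $P(z)$-intertwining tensor structure of \cite{creutzig2020tensor} and that $\mathrm{Ind}(KL)$ is the direct-limit completion of \cite[Theorem 1.1]{creutzig2020direct}, in which $\mathcal{V}_{ext}$ genuinely lies. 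Granting this, \cite[Theorem 1.4]{creutzig2020direct} supplies the structure we ultimately want to feed into: the category $\mathrm{Rep}^0(\mathcal{V}_{ext})$ of \emph{local} generalized $\mathcal{V}_{ext}$-modules inside $\mathrm{Ind}(KL)$ is a braided tensor supercategory defined by $P(z)$-intertwining maps, the induction functor $\mathcal{V}_{ext}\times-$ from $\mathrm{Ind}(KL)$ to the category $\mathrm{Rep}(\mathcal{V}_{ext})$ of all (not necessarily local) generalized $\mathcal{V}_{ext}$-modules is monoidal, and for local modules the relative tensor product $\times_{\mathcal{V}_{ext}}$ agrees with the $\mathrm{Rep}^0$-fusion product.

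Next I would restrict attention to $KL^0$ and verify both that $\mathcal{F}$ actually lands in $\mathrm{Rep}^0(\mathcal{V}_{ext})$ and that it is compatible with the tensor structures. By \cite{creutzig2020simple}, the induced module $\mathcal{V}_{ext}\times W$ is local precisely when the monodromy $M_{\mathcal{V}_{ext},W}$ acts as the identity; by the very definition of $KL^0$ this holds for all $W\in KL^0$, so $\mathcal{F}$ takes values in $\mathrm{Rep}^0(\mathcal{V}_{ext})$. To see that $KL^0$ is itself a braided tensor supercategory, note that the unit $V(\widehat{\gl})$ lies in $KL^0$ since $\mathcal{V}_{ext}\times V(\widehat{\gl})\cong\mathcal{V}_{ext}$ has trivial monodromy, and that for $W_1,W_2\in KL^0$ the hexagon axiom together with naturality of the braiding yields the standard factorization of $M_{\mathcal{V}_{ext},W_1\times W_2}$ in terms of $M_{\mathcal{V}_{ext},W_1}$ and $M_{\mathcal{V}_{ext},W_2}$ (the computation for commutative algebra objects as in \cite{creutzig2017tensor}), so triviality of the latter two forces $W_1\times W_2\in KL^0$; hence $KL^0$ inherits the braided tensor supercategory structure of $KL$. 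Finally, since $\mathcal{V}_{ext}\times-$ on $\mathrm{Ind}(KL)$ is monoidal and its restriction to $KL^0$ lands in $\mathrm{Rep}^0(\mathcal{V}_{ext})$, where $\times_{\mathcal{V}_{ext}}$ computes the fusion product, the restriction $\mathcal{F}\colon KL^0\to\mathrm{Rep}^0(\mathcal{V}_{ext})$ is a monoidal functor; compatibility with the braidings is automatic because induction functors for commutative algebra objects are braided. This yields the theorem.

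I expect the only genuinely non-formal work to be the bookkeeping in the first step: checking that $KL$, $\mathrm{Ind}(KL)$ and $\mathcal{V}_{ext}$ satisfy the running hypotheses of \cite{creutzig2020direct,creutzig2020simple} — in particular that $\mathrm{Ind}(KL)$ carries the required braided tensor structure with $\mathcal{V}_{ext}$ a bona fide commutative algebra object therein (with trivial twist), and that the $P(z)$-intertwining tensor structure is available on all of $KL$, not merely on its typical blocks. Once those points are in place — which is essentially the content already imported from \cite{creutzig2020tensor,creutzig2020direct} — the remainder is a formal consequence of the cited theorems, so the proof should be short.
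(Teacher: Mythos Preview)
Your proposal is correct and follows exactly the paper's approach: the paper simply states that the theorem is an immediate consequence of \cite{creutzig2020simple} (together with the setup from \cite{creutzig2020direct,creutzig2017tensor,creutzig2020tensor} already recalled in the preceding paragraphs), and your proposal just spells out the details of that deduction. The additional verifications you sketch (closure of $KL^0$ under fusion, $\mathcal{F}$ landing in local modules, compatibility with braidings) are precisely the formal checks underlying that citation, so there is nothing to add or change.
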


The simple currents $\widehat{A}_{0,\pm 1}$ generate a tensor subcategory of $KL^0$ isomorphic to $\mathrm{Rep}(\mathbb{C}^*)$. This tensor subcategory lives in the center of $KL^0$. We  denote by $KL^0/\mathbb{Z}$ the de-equivariantization of $KL^0$ by $\mathrm{Rep}(\mathbb{C}^*)$ in the sense of \cite[Theorem 8.23.3]{egno}. The functor $\mc F$ identifies the image of $KL^0$ with the de-equivariantization $KL^0/\mathbb{Z}$. The notation comes from the isomorphism $\mathrm{Rep}(\mathbb{C}^*)\cong \mathrm{Coh}(\mathbb{Z})$, and the action of $\mathrm{Rep}(\mathbb{C}^*)$ on $KL^0$ can be understood as an action of $\mathbb{Z}$ on the objects of $KL^0$. The de-equivariantization can be understood as the quotient category, where objects that are related by $\mathbb{Z}$ are considered equivalent. There is an action of $\mathrm{Coh}(\mathbb{Z}_2)$ on $KL^0/\mathbb{Z}$ generated by parity shift, and the de-equivariantization $KL^0/(\mathbb{Z}\times \mathbb{Z}_2)$ is the category of line operators for $T_B$ discussed in the introduction.\footnote{We note that de-equivariantization by $\mathbb{Z}_2$ does not change much of the category, apart from identifying an object with its parity shift, and forgetting the parity of Hom spaces. In particular, no new homomorphism is introduced since the parity is generated by an internal symmetry of $V(\widehat \gl)$.}

The category $ \mathrm{Rep}^0({\mc V}_{ext})$ may seem abstract, or at least not immediately related to $\cbg$. However, since ${\mc V}_{ext}\cong V_{\beta\gamma}\otimes V_{bc}$, and the category of generalized modules for $V_{bc}$ is equivalent to the category of super vector spaces, this category $ \mathrm{Rep}^0({\mc V}_{ext})$ is the Deligne product of a category of modules of the VOA $V_{\beta\gamma}$ with $\mathrm{SVect}$, the category of super vector spaces. What we  show in the following sections is that the image of $\mathcal{F}$ is identified with $\cbg\boxtimes \mathrm{SVect}$. The difficulty of analyzing this functor lies in the fact that computation of monodromy is complicated; it lacks concrete algebraic expressions. One way to go around this difficulty is through free-field realizations. This method is based on the observation that lifting modules from one free-field VOA to another is much easier to deal with. We  now go on to introduce free-field realizations of $V(\widehat \gl)$ as well as of ${\mc V}_{ext}$ that are compatible with the embedding $V(\widehat \gl)\to {\mc V}_{ext}$. In what follows, when we compute fusion product of objects that are infinite direct sums, we  always mean the fusion product in the sense of  \cite{creutzig2020direct}.

\subsection{Free-field realizations}\label{4.2}

\subsubsection{A free-field realization of $V(\widehat \gl)$}

We start with describing the free-field realization of $V(\widehat \gl)$ given in \cite{creutzig2020duality}. Let $X,Y,Z$ be a triple of free bosons with the non-degenerate pairing ${(X,Y)=(Z,Z)=1}$. The Heisenberg VOA generated by this triple is denote by $\mathbb{F}_0$. For any linear combination of $X,Y,Z$, say $\mu=aX+bY+cZ$, there is a simple module, denoted by $\mathbb{F}_\mu$, generated by the vacuum vector $\vert \mu \rangle$ whose weights under $\partial X,\partial Y$  and $\partial Z$ are given by the non-degenerate pairing between $\mu$ and $X,Y,Z$. The module $\bigoplus_{n\in \mathbb{Z}} \mathbb{F}_{nZ}$ has the structure of a VOSA, generated by $\mathbb{F}_0$ together with the vertex operators $\normord{e^{\pm Z(z)}}$, defined by:
\begin{equation}
    \normord{e^{\pm Z(z)}} = e^{\pm Z}z^{\pm Z_0} \prod_{m\geq 1} \exp \!\left(\pm \frac{Z_{-m}}{m}z^m\right)\exp\!\left(\pm \frac{-Z_m}{m}z^{-m}\right).
\end{equation}
See the definition in \cite{frenkel2004vertex} Section 5.2. We treat the modes of $\normord{e^{\pm Z(z)}}$ as odd. This VOSA is denoted by $V_Z$. Through the Bose-Fermi correspondence (see \cite{frenkel2004vertex}, Section 5.3), the sub-algebra generated by $\normord{e^{\pm Z(z)}}$ is isomorphic to the $bc$ ghost VOSA $V_{bc}$, generated by two fermionic (odd degree) fields $b,c$ with OPE:
\begin{equation}
b(z)c(z)\sim \frac{1}{z-w}.    
\end{equation} 
Under this correspondence, $\normord{e^{Z(z)}}\mapsto b(z)$ and $\normord{e^{-Z(z)}}\mapsto c(z)$. 

There is an embedding of VOSAs $V(\widehat \gl)\to V_Z$ given by 
\begin{equation}\begin{aligned}& E(z)\mapsto \partial Y(z),~~~ N(z)\mapsto -\normord{c(z)b(z)}+ \partial X(z)-\frac{\partial Y(z)}{2},\\ &\psi^+(z)\mapsto b(z),~~~~ \psi^-(z)\mapsto c(z)\partial Y(z)+\partial c(z)\end{aligned}
\end{equation}

For each linear combination of $X$ and $Y$, say $\nu=aX+bY$, one obtain a simple module of $V_Z$ whose underlying object is:
\begin{equation}
V_{\nu, Z}:=V_Z\times_{\mathbb{F}_0} \mathbb{F}_\nu.
\end{equation}
On the other hand, if $\nu$ involves $cZ$ for $c\notin \mathbb{Z}$, the resulting module is not local with respect to $\normord{e^Z}$. This is a special example of Theorem \ref{Liftmonodromy}. We  now introduce the screening operator, which helps us understand the embedding $V(\widehat \gl)\to V_Z$:

\newtheorem{screen}[Liftmonodromy]{Definition}

\begin{screen}
Let us define an intertwiner $\map{S(z)}{V_{nY, Z}}{V_{(n-1)Y, Z}}$ by \begin{equation}S(z)= \normord{e^{Z(z)-Y(z)}} \end{equation} The screening operator is then defined as the residue: \begin{equation}S=\oint S(z)\mathrm{d}z.\end{equation}

\end{screen}

It is shown in \cite{creutzig2020duality} that $V(\widehat \gl)$ is the kernel of $\map{S}{V_{0, Z}}{V_{-Y,Z}}$. The following Proposition shows how we can identify modules of $V(\widehat \gl)$ as restrictions of modules of $V_Z$:

\newtheorem{freefieldresmod}[Liftmonodromy]{Proposition}

\begin{freefieldresmod}\label{freefieldresmod}

Let $\mu=aX+bY$. When $a\in \mathbb{Z}$, $V_{\mu,Z}\cong \widehat{V}_{(2b-a-1)/2, a,-}$ as $V(\widehat \gl)$ modules. When $a\notin\mathbb{Z}$, $V_{\mu,Z}\cong \widehat{V}_{(2b-a-1)/2, a}$ as $V(\widehat \gl)$ modules. 

\end{freefieldresmod}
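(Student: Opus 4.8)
The plan is to identify the $V(\widehat\gl)$-module structure on $V_{\mu,Z}$ by computing the action of the generating fields on the lowest-weight vector $|\mu\rangle$ and then invoking the classification of induced modules together with Proposition \ref{glaffinerep}. First I would record, using the explicit embedding $V(\widehat\gl)\to V_Z$, how $N_0$, $E_0$, $\psi_0^\pm$, and the positive modes act on $|\mu\rangle$ for $\mu = aX+bY$. Since $|\mu\rangle$ is a Heisenberg highest-weight vector, the positive modes of $\partial X,\partial Y,\partial Z$ annihilate it, and the mode expansion of the bosonized fermions $b,c$ shows that $b_r|\mu\rangle = c_r|\mu\rangle = 0$ for $r$ sufficiently positive; the subtlety is that the ``zero mode'' structure of $b,c$ depends on the $Z$-charge of $\mu$, which here is $0$, so the $bc$-sector sits in its standard vacuum. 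From $E(z)\mapsto \partial Y(z)$ we get $E_0|\mu\rangle = (Y,\mu)|\mu\rangle = a\,|\mu\rangle$; from $N(z)\mapsto -\normord{cb}+\partial X - \tfrac12\partial Y$ we get $N_0|\mu\rangle = \big((X,\mu)-\tfrac12(Y,\mu) - q_{bc}\big)|\mu\rangle = (b - \tfrac a2 - \tfrac12)|\mu\rangle$, where the $-\tfrac12$ comes from the normal-ordering/background charge of the $bc$ system (this is exactly the shift that produces the $(2b-a-1)/2$ in the statement). One checks $\psi^+_0 = b_0$ acts as $0$ on the $bc$-vacuum, so $|\mu\rangle$ is a $\psi^+$-highest-weight vector.

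Next I would determine the top space as a $\gl$-module. The top space of $V_{\mu,Z}$ is spanned by $|\mu\rangle$ and $\psi^-_0|\mu\rangle = c_0|\mu\rangle$ (higher applications vanish since $(\psi_0^-)^2=0$ and $\psi^+_0\psi^-_0 + \psi^-_0\psi^+_0 = E_0$ closes the space), so it is two-dimensional: precisely the $\gl$-module $V_{(2b-a-1)/2,\,a}$ when $a\neq 0$, or its atypical ``$-$'' degeneration when $a\in\mathbb Z$ — and I must check the $+/-$ distinction by examining which of $|\mu\rangle$, $c_0|\mu\rangle$ generates, i.e. that $\psi^+$ kills the top vector $|\mu\rangle$, forcing the ``$-$'' chain shape $A^1_{n+\frac12}\to A^1_{n-\frac12}$ rather than the ``$+$'' one. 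Then, by Proposition \ref{glaffinerep} (for $a\in S_g$, i.e. $a\notin\mathbb Z\setminus\{0\}$, the induction functor is an equivalence $\mathcal C_a\cong KL_a$), it suffices to show $V_{\mu,Z}$ is generated over $V(\widehat\gl)$ by its top space and that the induced map from $\widehat{(\text{top space})}$ is an isomorphism. Generation follows because $V_Z$ as a module over the image of $V(\widehat\gl)$, together with the Heisenberg field $\partial Z$ which lies in the kernel description — actually the cleaner route is a character/graded-dimension count: compute the graded character of $V_{\mu,Z}$ from the free-field side and compare with that of $\widehat V_{(2b-a-1)/2,a}$ (resp. $\widehat V_{(2b-a-1)/2,a,-}$), which is available from \cite{creutzig2020tensor}; matching characters plus a surjection forces an isomorphism since both are objects of a finite-length category with the same composition length.

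For the $a\in\mathbb Z$ case I would argue that $V_{\mu,Z}$, being generated by its top space which is now the atypical $\gl$-module, surjects onto $\widehat V_{n,a,-}$, and again a character comparison (or the fact that both have the same two composition factors $\widehat A_{n\pm\frac12,a}$ in the right order) closes it; the order is pinned down by the $\psi^+_0|\mu\rangle = 0$ computation above, which identifies the sub vs. quotient. The point to be careful about — and the main obstacle — is the normal-ordering constant in $N_0$ coming from the $bc$-system and making sure the Bose–Fermi dictionary $\normord{e^{Z}}\mapsto b$, $\normord{e^{-Z}}\mapsto c$ is applied with the conventions of \cite{frenkel2004vertex} consistent with the OPE $b(z)c(w)\sim (z-w)^{-1}$ fixed in the text; getting this half-integer shift right is what makes the formula $n = (2b-a-1)/2$ come out correctly rather than off by $\tfrac12$. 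Everything else is a routine mode computation plus the equivalence of Proposition \ref{glaffinerep} and a composition-factor count using the rigid fusion rules recalled from \cite{creutzig2020tensor}.
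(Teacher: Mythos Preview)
Your argument for $a\in S_g$ (that is, $a\notin\mathbb Z$ or $a=0$) is essentially what the paper invokes by citing \cite{creutzig2020duality}: identify the two--dimensional lowest conformal weight space as the finite $\gl$--module $V_{(2b-a-1)/2,\,a}$ (respectively the atypical $V_{(2b-a-1)/2,\,0,-}$), induce, and use Proposition~\ref{glaffinerep} together with a composition--factor count. For this range of $a$ your outline is correct, modulo the bookkeeping you flag with the $bc$ normal--ordering constant.

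The genuine gap is the case $a\in S_b=\mathbb Z\setminus\{0\}$, which you treat as if it were the same as $a=0$. For $a\neq 0$ the zero--mode $\gl$--module is the \emph{simple} $V_{n,a}$, not an atypical chain, so the observation $\psi_0^+|\mu\rangle=0$ merely names the highest--weight vector of that simple module; it cannot select the ``$-$'' label at the affine level. Proposition~\ref{glaffinerep} does not apply here, and your proposed character comparison cannot distinguish $\widehat V_{n,a,+}$ from $\widehat V_{n,a,-}$, since they share the same two composition factors $\widehat A_{n,a}$, $\widehat A_{n+1,a}$. Worse, for $a>0$ the paper's conventions give $\widehat V_{n,a,+}=\widehat V_{n,a}$ (the induced module), so your ``induced from the top space'' paradigm would output $\widehat V_{n,a,+}$ rather than the claimed $\widehat V_{n,a,-}$; in fact $\widehat V_{n,a,-}=\sigma^a(\widehat V_{n+a,0,-})$ is a spectral flow, and its lowest conformal weight space is \emph{not} the two--dimensional $V_{n,a}$. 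This is exactly why the paper abandons the top--space method for $a\in S_b$ and instead (i) changes basis to $\alpha,\beta,\gamma$ and computes, via the Felder--type formulas of \cite{creutzig2020ribbon}, the minimal conformal weight of the screening kernel on each Fock summand, thereby locating the simple submodule $\widehat A_{\frac{2b-a}{2}+\epsilon(a),\,a}\subset V_{\mu,Z}$; and then (ii) uses the free--field fusion $\mathbb F_0\times_{\mathbb F_0}\mathbb F_\mu\cong\mathbb F_\mu$ to produce an injective $V(\widehat\gl)$--intertwiner $V_Z\times\widehat A_{\frac{2b-a}{2}+\epsilon(a),\,a}\to V_{\mu,Z}$, which is an isomorphism by counting composition factors, and finally identifies the left side via Lemma~\ref{fusextensionVbad}. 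Your proposal is missing precisely this simple--current/spectral--flow step that pins down the socle and hence the sign.
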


\begin{proof}
The case when $a\in S_g$ is proven in \cite{creutzig2020duality}. For $a\in S_b$, we  compute the conformal weight of the kernel of the screening operator using \cite{creutzig2020ribbon}. To do so, we need to re-write the free-field realization using the following vectors: $\alpha=Y-Z,\beta=Z+X$ and $\gamma=Z+X-Y$. These are three orthogonal generators of the lattice. The screening operator corresponds to $\alpha_-=-\alpha$, and $\alpha_+=2\alpha$. In terms of these generators, the conformal weight vector is:
\begin{equation}
\frac{1}{2}(\alpha(-1)^2+\beta(-1)^2-\gamma(-1)^2)+\frac{\alpha(-2)}{2},
\end{equation}
and $X,Y,Z$ can be written as: \begin{equation}
X=\gamma+\alpha_+/2,~~Y=\beta-\gamma,~~Z=\beta-\gamma-\alpha_+/2.
\end{equation}
So for a general $\mu=aX+bY$, the module $V_{\mu, Z}$ can be written in terms of the generators $\alpha,\beta$ and $\gamma$ as:
\begin{equation}
\bigoplus_{m}\mathbb{F}_0 \left \vert \frac{a-m}{2}\alpha_++(b+m)\beta+(a-b-m)\gamma \right \rangle.
\end{equation}
For each $m$, the lowest conformal weight of the kernel of the screening operator restricted to:
 \begin{equation}\mathbb{F}_0 \left \vert \frac{a-m}{2}\alpha_++(b+m)\beta+(a-b-m)\gamma \right \rangle\end{equation}
 is given by $h=h_{\alpha_{1+m-a, 1}}+h_{(b+m)\beta}-h_{(a-b-m)\gamma}$, where: 
\begin{equation}\begin{aligned} & h_{(b+m)\beta}=\frac{(b+m)^2}{2}\\ & h_{(a-b-m)\gamma}=\frac{(a-b-m)^2}{2}\end{aligned}\end{equation}
and as in \cite{creutzig2020ribbon}:  
\begin{equation}
  h_{\alpha_{1+m-a, 1}}= \left\{
    \begin{array}{rl}
      \frac{(1+m-a)^2-(1+m-a)}{2} & \text{if } a\leq m,\\
      \frac{(1+a-m)^2-(1+a-m)}{2} & \text{if } a>m.
    \end{array} \right.
\end{equation}
Combining these we have an explicit formula for $h$:
\begin{equation}h= \left\{
    \begin{array}{rl}
      \frac{1}{2}(m^2+m+2ab-a) & \text{if } a\leq m,\\
      \frac{1}{2}(m^2-m+2ab+a) & \text{if } a>m.
    \end{array} \right.\end{equation}

Now suppose $a<0$, then if $m<a$, $m$ cannot be $1/2$, so the minimum of $h$ in this region is obtained at $m=a-1$, giving $\frac{1}{2}(a^2+2ab-2a+2)$; when $m\geq a$, then the minimum can be taken at $m=0$, which is $\frac{1}{2}(2ab-a)$. Since $a<0$, $\frac{1}{2}(2ab-a)$ is smaller. This is exactly the minimum conformal weight of $\widehat{A}_{\frac{2b-a-1}{2}, a}$. When $a>0$, one can show that the minimum conformal weight is $\frac{1}{2}(2ab+a)$ which is the minimum conformal weight of $\widehat{A}_{\frac{2b-a+1}{2}, a}$. We thus have embedding $\widehat{A}_{\frac{2b-a-1}{2}, a}\subset V_{\mu, Z}$ when $a<0$ and $\widehat{A}_{\frac{2b-a+1}{2}, a}\subset V_{\mu, Z}$ when $a>0$. 

To finish the proof, we utilize the free-field fusion rule $\mathbb{F}_0\times_{\mathbb{F}_0}\mathbb{F}_\mu\cong \mathbb{F}_\mu$, which after combining the embedding $\widehat{A}_{\frac{2b-a}{2}+\epsilon(a), a}\to V_{\mu, Z}$ gives a $V(\widehat \gl)$ intertwiner:
\begin{equation}
\label{eqintertwinerbcF0}V_Z\times \widehat{A}_{\frac{2b-a}{2}+\epsilon(a), a}\to V_{\mu, Z}.
\end{equation}
Since $V_Z$ is indecomposable as a $V(\widehat \gl)$ module and $\widehat{A}_{\frac{2b-a}{2}+\epsilon(a), a}$ is a simple current, by Proposition 2.5 of \cite{creutzig2019schur}, the fusion $V_Z\times \widehat{A}_{\frac{2b-a}{2}+\epsilon(a), a}$ is still indecomposable, and we only need to show that it maps isomorphically onto $V_{\mu, Z}$. If we restrict this intertwiner to the sub-module $\widehat{A}_{0,0}\subset V_Z$, this composition $\widehat{A}_{0,0}\times  \widehat{A}_{\frac{2b-a}{2}+\epsilon(a), a}\to V_{\mu, Z}$ is nothing but the action of the VOA $V(\widehat \gl)$ on $ \widehat{A}_{\frac{2b-a}{2}+\epsilon(a), a}$, and so it is an isomorphism when restricted to this minimal-submodule. However, since $\widehat{A}_{0,0}$ is the unique irreducible sub-module of $V_Z$, the intertwiner in equation \eqref{eqintertwinerbcF0} must be injective. Now we see that it is also surjective since $V_{\mu, Z}$ only has two simple currents in its composition series. Hence $V_{\mu, Z}$ is in-decomposable and is isomorphic to $V_Z\times \widehat{A}_{\frac{2b-a}{2}+\epsilon(a), a}$. Now using Lemma \ref{fusextensionVbad}, we obtain the desired result. 

\end{proof}

It is in fact possible to identify the chains $\widehat{V}_{n,e,x}^t$ for $e\notin\mathbb{Z}$ and $\widehat{V}_{n,e,-,x}^t$ for $e\in \mathbb{Z}$ as restrictions of modules of the VOSA $V_Z$. Let us consider the module of $V_Z$ generated by the vacuum vector $(mX+nY)^{t-1}\vert \mu\rangle$ such that $n-m(x+\frac{1}{2})=0$, which we denote by $V_{\mu,x, Z}^t$. This module is an iterated self-extension of $V_{\mu, Z}$, and the action of $\partial X$ and $\partial Y$  have nontrivial Jordan blocks. 

\newtheorem{extensionfreefield}[Liftmonodromy]{Proposition}

\begin{extensionfreefield}\label{extensionfreefield}
Let $\mu=aX+bY$. When $a\in \mathbb{Z}$, there is an isomorphism of $V(\widehat \gl)$ modules: $\widehat{V}_{(2b-a-1)/2, a, x,-}^{t}\cong V_{\mu,x, Z}^t$. When $a\notin\mathbb{Z}$,  there is an isomorphism of $V(\widehat \gl)$ modules: $\widehat{V}_{(2b-a-1)/2, a, x}^{t}\cong V_{\mu,x, Z}^t$.

\end{extensionfreefield}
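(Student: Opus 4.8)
Write $n_0:=(2b-a-1)/2$. The plan is to deduce this from Proposition \ref{freefieldresmod} together with the characterization of the chains $V^t_{\bullet}$ as the (unique) iterated self-extensions on which $N-xE$ acts semisimply.

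First I would unwind the construction: by definition $V_{\mu,x,Z}^{t}$ is assembled from $V_Z$ and the logarithmic Heisenberg module generated by $(mX+nY)^{t-1}|\mu\rangle$ — an iterated self-extension of $\mathbb{F}_\mu$ on which $\partial X$ and $\partial Y$ act with rank-$t$ Jordan blocks — exactly as $V_{\mu,Z}$ is assembled from $V_Z$ and $\mathbb{F}_\mu$. Hence $V_{\mu,x,Z}^{t}$ carries a filtration whose $t$ subquotients are all copies of $V_{\mu,Z}$; since restriction along $V(\widehat\gl)\hookrightarrow V_Z$ is exact, and $V_{\mu,Z}\cong\widehat{V}_{n_0,a,-}$ (for $a\in\mathbb{Z}$) or $\widehat{V}_{n_0,a}$ (for $a\notin\mathbb{Z}$) by Proposition \ref{freefieldresmod}, the module $V_{\mu,x,Z}^{t}$ is a length-$t$ iterated self-extension of $\widehat{V}_{n_0,a,(-)}$ as a $V(\widehat\gl)$-module. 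Because the Heisenberg zero modes $(\partial X)_0,(\partial Y)_0$ commute with every creation operator (the pairing has $(X,X)=(Y,Y)=0$), they act on \emph{all} of $V_{\mu,x,Z}^{t}$ as $(\partial Y)_0=a+m\mathcal{N}$ and $(\partial X)_0=b+n\mathcal{N}$ with $\mathcal{N}$ a fixed rank-$t$ nilpotent, while the zero mode of the bilinear $\normord{cb}$ appearing in $N=-\normord{cb}+\partial X-\tfrac12\partial Y$ is the $bc$-charge operator and is therefore semisimple. Consequently
\[
N_0-xE_0=\bigl(b-(x+\tfrac12)a\bigr)\;-\;\normord{cb}_0\;+\;\bigl(n-m(x+\tfrac12)\bigr)\mathcal{N},
\]
and the defining constraint $n-m(x+\tfrac12)=0$ annihilates the last term, so $N_0-xE_0$ acts semisimply on $V_{\mu,x,Z}^{t}$; meanwhile $E_0-a=m\mathcal{N}$ with $m\neq0$, so $E_0-a$ has nilpotency rank exactly $t$ and the iterated extension is genuinely non-split.

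For $a\notin\mathbb{Z}$ this finishes the proof: $V_{\mu,x,Z}^{t}$ lies in $KL_a\cong\mathcal{C}_a$ by Proposition \ref{glaffinerep}, hence is the induction of its lowest conformal weight space, and a short computation with the free-field formulas ($\psi^+\mapsto b$, $\psi^-\mapsto c\partial Y+\partial c$) identifies that space with the $\gl$-chain $V^{t}_{n_0,a,x}$, so $V_{\mu,x,Z}^{t}\cong\widehat{V}^{t}_{n_0,a,x}$. The case $a=0$ is identical, except that the lowest weight space is now the \emph{negative} atypical $\gl$-chain $V^{t}_{n_0,0,-,x}$ — its socle is the line $A^1_{n_0+\frac12}$ generated by the bottom of the Jordan block, which is what fixes the sign — giving $V_{\mu,x,Z}^{t}\cong\widehat{V}^{t}_{n_0,0,-,x}$. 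For $a\in\mathbb{Z}\setminus\{0\}$ the module is no longer induced, and here I would reduce to $a=0$ by fusing with a simple current, mimicking the proof of Proposition \ref{freefieldresmod}: the free-field fusion $\mathbb{F}_0\times_{\mathbb{F}_0}\mathbb{F}_\mu\cong\mathbb{F}_\mu$ together with the embedding $\widehat{A}_{n_0+\frac12+\epsilon(a),a}\hookrightarrow V_{\mu,x,Z}^{t}$ supplied by Proposition \ref{freefieldresmod} produces a $V(\widehat\gl)$-intertwiner $V_{0,x,Z}^{t}\times\widehat{A}_{n_0+\frac12+\epsilon(a),a}\to V_{\mu,x,Z}^{t}$; since $V_{0,x,Z}^{t}\cong\widehat{V}^{t}_{-1/2,0,-,x}$ is indecomposable (the $a=0$ case) and $\widehat{A}_{n_0+\frac12+\epsilon(a),a}$ is a simple current, the fusion is indecomposable by \cite[Proposition 2.5]{creutzig2019schur}, injectivity follows from the unique-irreducible-submodule argument, and surjectivity from counting the $2t$ composition factors. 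Finally the simple-current fusion rules of Section \ref{3.4} (equivalently, the definition of $\widehat{V}^{t}_{\bullet,l,-,x}$ as a spectral flow) identify $\widehat{V}^{t}_{-1/2,0,-,x}\times\widehat{A}_{n_0+\frac12+\epsilon(a),a}$ with $\widehat{V}^{t}_{n_0,a,-,x}$.

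The step I expect to be the main obstacle is the identification of the lowest conformal weight space as a $\gl$-module: one must verify that the nonlinear term $-\normord{cb}$ in $N$ contributes only a semisimple piece, so that it does not disturb the equivalence ``$n-m(x+\tfrac12)=0\iff N-xE$ semisimple'', and one must carefully confirm that it is the \emph{negative} atypical chain, not the positive one, that occurs when $a\in\mathbb{Z}$, in agreement with the sign in Proposition \ref{freefieldresmod}. A cleaner route for $a\in\mathbb{Z}\setminus\{0\}$, sidestepping the fusion computation, is to note that $V_{\mu,x,Z}^{t}$ is an indecomposable length-$t$ iterated self-extension of $\widehat{V}_{n_0,a,-}$ with $N_0-xE_0$ semisimple and that such a module is unique — applying $\sigma^{-a}$ turns this into the uniqueness of $V^{t}_{n_0+a,0,-,x}$ among the iterated self-extensions in $\mathcal{C}_0\cong KL_0$, which is part of the classification recalled in Section \ref{3.1}.
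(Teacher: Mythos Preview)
Your proof is correct and follows essentially the same two-case structure as the paper: for $a\in S_g$ you identify the lowest conformal weight space as the $\gl$-chain and use that the module is induced (the paper builds the map from $\widehat{V}^t$ directly, you invoke Proposition~\ref{glaffinerep} --- equivalent arguments), and for $a\in S_b$ you use the same simple-current fusion trick, restricting the free-field intertwiner $V_{\mu,Z}\times_{V_Z}V_{0,x,Z}^t\cong V_{\mu,x,Z}^t$ to the simple current $\widehat{A}\subset V_{\mu,Z}$ and comparing with the fusion rules of Section~\ref{3.4}. Your explicit verification that $N_0-xE_0$ acts semisimply (via the cancellation $n-m(x+\tfrac12)=0$) is a detail the paper leaves implicit, and your alternative uniqueness argument at the end --- spectral-flowing to $KL_0$ and invoking the classification of iterated self-extensions in $\mathcal{C}_0$ --- is a legitimate shortcut the paper does not take; one small imprecision is that the relevant free-field fusion in the $a\in S_b$ step is the logarithmic one just quoted, not the trivial $\mathbb{F}_0\times_{\mathbb{F}_0}\mathbb{F}_\mu\cong\mathbb{F}_\mu$ you cite.
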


\begin{proof}

For the case when $a\notin \mathbb{Z}$, we proceed in the same way as the proof in \cite{creutzig2020duality}: note that there is a map of $\gl$ modules $V_{b-(a-1)/2, a, x}^{t}\to V_{\mu, x, Z}^t[0]$, where $ V_{\mu, x, Z}^t[0]$ is the lowest conformal weight space. This morphism induces a morphism of $V(\widehat \gl)$ modules: $\widehat{V}_{b-(a-1)/2, a, x,-}^{t}\to V_{\mu, x , Z}^t$. This is an isomorphism on the unique minimal submodule of $\widehat{V}_{b-(a-1)/2, a, x}^{t}$, thus is an embedding. It is thus an isomorphism by counting the number of irreducible factors. Similar situation holds for $a=0$.

When $a\in S_b$, we have the free field logarithmic intertwiner $V_{\mu, Z}\times_{V_Z} V_{0,x, Z}^t \cong V_{\mu,x, Z}^t$. Restricting this to the submodule $ \widehat{A}_{\frac{2b-a}{2}+\epsilon(a), a}\subseteq V_{\mu, Z}$ one gets an intertwiner of $V(\widehat \gl)$ modules:
\begin{equation}
\widehat{A}_{\frac{2b-a}{2}+\epsilon(a), a}\times V_{0,x,Z}^t\to V_{\mu,x, Z}^t.
\end{equation}
It is injective because it is non-zero when restricted to the unique irreducible sub-module $ \widehat{A}_{\frac{2b-a}{2}+\epsilon(a), a}\times \widehat{A}_{0,0}$. It is then an isomorphism by counting the number of composition factors. Thus comparing this with Lemma \ref{fusextensionVbad} we obtain the desired result. 

\end{proof}

In conclusion, many modules of $V(\widehat \gl)$ can be identified with restrictions of modules of $V_Z$. We want to remark here that although only $\widehat{V}_{n,e,-,x}^t$ show up in the above consideration, we can also find $\widehat{V}_{n,e,+,x}^t$  by pre-composing the embedding $V(\widehat \gl)\to V_Z$ with the automorphism $w$ introduced in Section  \ref{3.2}. 

\subsubsection{A free-field realization of ${\mc V}_{ext}$.}

Let us return to the VOSA ${\mc V}_{ext}$. Our goal is to introduce a free-field realization of ${\mc V}_{ext}$ that is compatible with the free-field realization $V(\widehat \gl)\to V_Z$ described in the last section. Recall that ${\mc V}_{ext}\cong V_{\beta\gamma}\otimes V_{bc}$. It is well known (see for example \cite{aw}) that $V_{\beta\gamma}$ has a free-field realization by a lattice VOA generated by $\partial \psi, \partial\theta$ and $e^{\pm \psi+\theta}$, where $(\psi,\psi)=-(\theta,\theta)=1$. There is a screening operator $S=\normord{e^\psi}$ such that $V_{\beta\gamma}$ is the kernel of $S$. By Bose-Fermi correspondence, $V_{bc}$ is isomorphic to a lattice VOSA of a single free boson. To conform with the free-field realization of $V(\widehat \gl)$, we set $\psi=Z-Y$ and $\theta=Y-Z-X$, which means that $\psi+\theta=-X$. We again treat the modes of $\normord{e^{\pm Z}}$ as odd.  With this redefinition, $Z+X$  becomes an independent variable (it has zero pairing with $\psi, \theta$), whose associated lattice VOSA is $V_{bc}$. Thus we can extend $V_Z$ by the operator $\normord{e^{\pm X}}$, or in other words, the $V_Z$ module $\mathcal{V}=\bigoplus_{n\in \mathbb{Z}}V_{nX, Z}$ has the structure of a VOSA, and is the lattice VOSA associated to the lattice generated by $X,Z$. The embedding ${\mc V}_{ext}\to \mathcal{V}$ is given by:
\begin{equation}b\mapsto \normord{e^{Z+X}},~~~c\mapsto \normord{e^{-Z-X}},~~~ \beta\mapsto \normord{e^{-X}},~~~ \gamma\mapsto \normord{(\partial Z-\partial Y) e^{X}}\end{equation} 
and its image is the kernel of the screening operator $S=\oint_z \normord{e^{\psi}}=\oint_z \normord{e^{Z-Y}}$. We thus have:

\newtheorem{freefieldbg}[Liftmonodromy]{Proposition}

\begin{freefieldbg}
There are free-field realizations compatible with the embedding $V(\widehat \gl)\to {\mc V}_{ext}$ given by the following diagram:

\begin{equation}
\begin{tikzcd}
V(\widehat \gl) \rar \dar
    & V_Z\cong \bigoplus_n \mathbb{F}_{nZ} \dar \\
{\mc V}_{ext} \rar
    & {\mc V}\cong \bigoplus_{m,n}\mathbb{F}_{mX+nZ}
\end{tikzcd}
\end{equation}

\end{freefieldbg}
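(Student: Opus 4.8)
The plan is to place all four vertex superalgebras of the diagram inside the single lattice VOSA $\mathcal{V}=\bigoplus_{m,n}\mathbb{F}_{mX+nZ}$ and to deduce the commutativity of the square from the observation that the screening operator cutting $V(\widehat\gl)$ out of $V_Z$ is \emph{literally the same} as the one cutting ${\mc V}_{ext}$ out of $\mathcal{V}$. First I would record the change of basis: with $\psi=Z-Y$ and $\theta=Y-Z-X$ one has $\psi+\theta=-X$ and $\partial\psi=\partial Z-\partial Y$, the vectors $\psi,\theta,Z+X$ are pairwise orthogonal with $(\psi,\psi)=-(\theta,\theta)=(Z+X,Z+X)=1$, and $-X=\psi+\theta$, $Z=(Z+X)+(\psi+\theta)$ is a unimodular change of basis of $\mathbb{Z}X+\mathbb{Z}Z$. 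Hence $\mathcal{V}$ is an honest lattice VOSA, it factors as the tensor product of the ambient lattice VOA of the free-field realization of $V_{\beta\gamma}$ from \cite{aw} with the rank-one lattice VOSA $\cong V_{bc}$, and $V_Z=\bigoplus_n\mathbb{F}_{nZ}$ sits in $\mathcal{V}$ as the $m=0$ sub-VOSA.

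Next I would invoke the two screening descriptions already assembled above. By \cite{creutzig2020duality}, the free-field realization identifies $V(\widehat\gl)$ with $\ker\big(S|_{V_Z}\big)$ inside $V_Z$, where $S=\oint\normord{e^{Z-Y}}$; and, as noted just above, the displayed formulas for $\beta,\gamma,b,c$ realize ${\mc V}_{ext}$ inside $\mathcal{V}$ as $\ker\big(S|_{\mathcal{V}}\big)$ for the \emph{same} $S=\oint\normord{e^{\psi}}=\oint\normord{e^{Z-Y}}$ — here one uses that $Z+X$ is orthogonal to $\psi$ and $\theta$, so $S$ commutes with the $V_{bc}$-factor, together with $V_{\beta\gamma}=\ker(S)$ on the remaining factor. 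Since the inclusion $V_Z\hookrightarrow\mathcal{V}$ is the contour integral of one and the same field on both sides, it intertwines $S$, and therefore
\begin{equation}
V(\widehat\gl)=\ker\big(S|_{V_Z}\big)=V_Z\cap\ker\big(S|_{\mathcal{V}}\big)=V_Z\cap{\mc V}_{ext}
\end{equation}
inside $\mathcal{V}$. In particular the composite $V(\widehat\gl)\hookrightarrow V_Z\hookrightarrow\mathcal{V}$ lands inside ${\mc V}_{ext}$ and thus factors through ${\mc V}_{ext}\hookrightarrow\mathcal{V}$, which is exactly the asserted commutativity.

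Finally I would check that the induced map $V(\widehat\gl)\to{\mc V}_{ext}$ is the embedding of $V(\widehat\gl)$ into its simple current extension \eqref{simplecurrentextension}, rather than some automorphism twist: $V_Z\cap{\mc V}_{ext}\cong V(\widehat\gl)$ is a sub-VOSA of ${\mc V}_{ext}$ containing the vacuum, hence it contains the $\widehat{A}_{0,0}$-summand, and a conformal-weight-by-conformal-weight dimension count shows this inclusion is an equality. As a concrete cross-check one can substitute $b=\normord{e^{Z+X}}$, $\beta=\normord{e^{-X}}$, $c=\normord{e^{-Z-X}}$, $\gamma=\normord{(\partial Z-\partial Y)e^{X}}$ into $E\mapsto\partial Y$, $N\mapsto-\normord{cb}+\partial X-\tfrac12\partial Y$, $\psi^+\mapsto b$, $\psi^-\mapsto\normord{c\,\partial Y}+\partial c$ and expand the lattice OPEs to recover the standard formulas for the $\widehat\gl$-currents inside $V_{\beta\gamma}\otimes V_{bc}$. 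I expect the main difficulty of the whole argument to lie precisely in this last, purely computational, step — above all in the cocycle and sign bookkeeping along the odd directions $Z$ and $Z+X$ — but it is not needed for the commutativity of the diagram, which follows formally from the screening identity.
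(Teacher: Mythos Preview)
Your proposal is correct and follows essentially the same route as the paper, which in fact offers no separate proof at all: the proposition is stated with ``We thus have'' immediately after the construction of the two free-field realizations and the identification of the common screening $S=\oint\normord{e^{Z-Y}}$. Your write-up simply makes explicit the one logical step the paper leaves implicit, namely that $\ker(S|_{V_Z})=V_Z\cap\ker(S|_{\mathcal V})$ forces the square to commute. One small slip: the sentence ``Since the inclusion $V_Z\hookrightarrow\mathcal{V}$ is the contour integral of one and the same field on both sides, it intertwines $S$'' is garbled---you mean that $S$ is the contour integral of the same field and hence the inclusion intertwines it.
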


We now come back to the question of lifting modules via the functor ${\mc F}$. This can be done by lifting modules from $V_Z$ to ${\mc V}$, and then restricting them to ${\mc V}_{ext}$ modules. The modules of $V_Z$ are given by $V_{mX+nY,Z,x}^t$, and they are generated by the vacuum module $(mX+nY)^{t-1}\vert \mu \rangle$. However, not all of them can be lifted to ${\mc V}$. In order to lift to ${\mc V}$, $n=0$ otherwise the action of $\normord{e^X}$ is non-local. Since $n-m(x+\frac{1}{2})=0$, this requires that $x=-\frac{1}{2}$. Similarly, we cannot use any linear combination $\mu=aX+bY$, since otherwise the conformal weight of $\normord{e^{-X}}$ is not an integer. To have integer conformal weight, we need that $b\in \mathbb{Z}$. Thus we require $b\in \mathbb{Z}$ and $x=-\frac{1}{2}$. We denote by ${\mc V}_{\mu}^t$ the resulting module of ${\mc V}$. On the other hand, given a module of $\mc W$ of $V_\bg$, we can view the Deligne product $\mc W \boxtimes V_{bc}$ as a module of $\mc V_{ext}$ using the isomorphism ${\mc V}_{ext}\cong V_{\beta\gamma}\otimes V_{bc}$. Recall the $V_{\beta\gamma}$ modules ${\mc W}_{[a]}^t$ and $({\mc W}_0^\pm)^t$. When $t=1$, the following is shown in \cite[Proposition 2.12]{aw}:

\newtheorem{freefieldallen}[Liftmonodromy]{Proposition}

\begin{freefieldallen}
Let $\mu=aX+bY$ with $b\in \mathbb{Z}$. When $a\notin \mathbb{Z}$, there are isomorphisms of ${\mc V}_{ext}$ modules:
\begin{equation}
{\mc V}_{\mu}\cong \sigma^{-b+1}  {\mc W}_{[-a]}\boxtimes V_{bc}.
\end{equation}
When $a\in \mathbb{Z}$, there are isomorphism of ${\mc V}_{ext}$ modules:
\begin{equation}
{\mc V}_{\mu}\cong \sigma^{-b+1}  ({\mc W}_{[0]}^-)\boxtimes V_{bc}.
\end{equation}

\end{freefieldallen}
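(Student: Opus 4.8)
The plan is to reduce the statement to the $t=1$ case of \cite[Proposition 2.12]{aw} — which already identifies, as a $V_{\beta\gamma}$-module, the restriction of a single Fock module summed over a lattice — by checking that the $V_{bc}$ tensor factor decouples cleanly from the free-field realization ${\mc V}_{ext}\hookrightarrow {\mc V}$. First I would record the change of basis $e\defeq Z+X$. Using $\psi=Z-Y$, $\theta=Y-Z-X$ and $(X,Y)=(Z,Z)=1$ (with all other pairings among $X,Y,Z$ vanishing), one checks $(\psi,\psi)=-(\theta,\theta)=1$, $(e,e)=1$ and $(\psi,e)=(\theta,e)=0$, together with the inverse relations $X=-(\psi+\theta)$, $Y=\theta+e$, $Z=\psi+\theta+e$. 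Hence the rank-two lattice $\mbb Z X\oplus\mbb Z Z$ equals $\mbb Z(\psi+\theta)\oplus\mbb Z e$, which is exactly $\Lambda_{\beta\gamma}\oplus\Lambda_{bc}$ with $\Lambda_{\beta\gamma}=\mbb Z(\psi+\theta)$ underlying the lattice VOA $V_{\beta\gamma}^{\mathrm{lat}}$ (the one containing $V_{\beta\gamma}$ as the kernel of $S=\oint\normord{e^{\psi}}$) and $\Lambda_{bc}=\mbb Z e$ underlying $V_{bc}$. Consequently ${\mc V}\cong V_{\beta\gamma}^{\mathrm{lat}}\otimes V_{bc}$, the screening $S$ acts only on the first factor, and — using $\partial Z-\partial Y=\partial\psi$ — the embedding ${\mc V}_{ext}\hookrightarrow{\mc V}$ becomes $\beta\mapsto\normord{e^{\psi+\theta}}$, $\gamma\mapsto\normord{\partial\psi\, e^{-\psi-\theta}}$, $b\mapsto\normord{e^{e}}$, $c\mapsto\normord{e^{-e}}$, i.e.\ the external tensor product of the realization $V_{\beta\gamma}\hookrightarrow V_{\beta\gamma}^{\mathrm{lat}}$ of \cite{aw} with the bosonization $V_{bc}\hookrightarrow V_{bc}$.

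Next I would decompose the module. Writing $\mu=aX+bY=-a\psi+(b-a)\theta+b\,e$ and splitting the summation lattice as above gives
\[
{\mc V}_\mu=\bigoplus_{m,n\in\mbb Z}\mbb F_{\mu+mX+nZ}\;\cong\;\Big(\bigoplus_{k\in\mbb Z}\mbb F_{-a\psi+(b-a)\theta+k(\psi+\theta)}\Big)\otimes\Big(\bigoplus_{l\in\mbb Z}\mbb F_{(b+l)e}\Big).
\]
Since $b\in\mbb Z$, the second factor restricted along $V_{bc}\hookrightarrow V_{bc}$ is $V_{bc}$ itself. The first factor is a $V_{\beta\gamma}^{\mathrm{lat}}$-module supported on the $(\psi,\theta)$-coset of $-a\psi+(b-a)\theta$; restricting it along the embedding of \cite{aw} and invoking \cite[Proposition 2.12]{aw} ($t=1$) identifies it with $\sigma^{-b+1}{\mc W}_{[-a]}$ when $a\notin\mbb Z$ and with $\sigma^{-b+1}({\mc W}_0^-)$ when $a\in\mbb Z$. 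Finally, because $\beta,\gamma$ act only on the first tensor factor and $b,c$ only on the second, the two identifications combine into the claimed isomorphisms of ${\mc V}_{ext}=V_{\beta\gamma}\otimes V_{bc}$-modules ${\mc V}_\mu\cong\sigma^{-b+1}{\mc W}_{[-a]}\boxtimes V_{bc}$ and ${\mc V}_\mu\cong\sigma^{-b+1}({\mc W}_0^-)\boxtimes V_{bc}$ respectively.

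The only genuinely non-formal input is \cite[Proposition 2.12]{aw} itself, which we take as given; everything else is bookkeeping. I expect the main point of care to be the spectral-flow index: one must verify that $e=Z+X$ has trivial pairing with both $\psi$ and $\theta$ (so that $\beta=\normord{e^{-X}}$ and $\gamma=\normord{(\partial Z-\partial Y)e^{X}}$ really land in $V_{\beta\gamma}^{\mathrm{lat}}\otimes\unit$ with no $e$-dependence and the tensor splitting is honest), and that the coset representative $-a\psi+(b-a)\theta$ — whose $\theta$-minus-$\psi$ component is $b$, the only quantity unchanged by the $\mbb Z(\psi+\theta)$-summation — reproduces exactly the twist $\sigma^{-b+1}$ under the normalization conventions of \cite{aw}. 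Both amount to direct computations with the pairing data recorded above.
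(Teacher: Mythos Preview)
The paper does not supply its own proof of this proposition; it simply attributes the statement to \cite[Proposition 2.12]{aw} with the sentence ``When $t=1$, the following is shown in \cite[Proposition 2.12]{aw}''. Your argument is exactly the natural way to make that attribution precise: the cited result in \cite{aw} identifies the restriction to $V_{\beta\gamma}$ of a $V_{\beta\gamma}^{\mathrm{lat}}$-module, whereas the proposition as stated here concerns ${\mc V}_{ext}=V_{\beta\gamma}\otimes V_{bc}$-modules, so one must check that the $V_{bc}$ factor decouples. Your verification that $e=Z+X$ is orthogonal to both $\psi$ and $\theta$, hence that $\mbb Z X\oplus\mbb Z Z=\mbb Z(\psi+\theta)\oplus\mbb Z e$ splits orthogonally and ${\mc V}_\mu$ factors as a tensor product, is the correct (and only) content needed beyond the citation. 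Everything checks: the pairings, the inverse change of basis, the reindexing $(m,n)\mapsto(k,l)=(n-m,n)$ of the lattice sum, and the observation that the $\theta$-minus-$\psi$ coefficient $b$ is the spectral-flow-relevant invariant of the coset. So your proof is correct and is precisely the elaboration the paper leaves implicit in its citation.
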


We can extend it to the following:

\newtheorem{freefieldbgmodule}[Liftmonodromy]{Proposition}

\begin{freefieldbgmodule}\label{freefieldbgmodule}
Let $\mu=aX+bY$ with $b\in \mathbb{Z}$. When $a\notin \mathbb{Z}$, there are isomorphisms of ${\mc V}_{ext}$ modules: 
\begin{equation}
{\mc V}_{\mu}^t\cong \sigma^{-b+1}  {\mc W}_{[-a]}^t\boxtimes V_{bc}. 
\end{equation}
When $a\in \mathbb{Z}$, there are isomorphism of ${\mc V}_{ext}$ modules:
\begin{equation}
{\mc V}_{\mu}^t\cong \sigma^{-b+1}  ({\mc W}_{[0]}^-)^t\boxtimes V_{bc}.
\end{equation}

\end{freefieldbgmodule}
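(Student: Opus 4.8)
The strategy is to bootstrap the general $t$ from the $t=1$ statement (Proposition \ref{freefieldallen}) together with the classification results of Section \ref{2.3}, exploiting the fact that $\mc V_\mu^t$ is, by construction, an iterated self-extension of $\mc V_\mu$. Recall that in order for the lift to $\mc V$ to exist one is forced to take $x=-\frac{1}{2}$ and $b\in\mbb Z$, so that $n=0$ and $\mc V_\mu^t$ is the $\mc V$-module generated by the vector $(mX)^{t-1}\vert\mu\rangle$, on whose generating space the Heisenberg zero mode $\partial X_0$ acts through a single rank-$t$ Jordan block. Writing $\mc N$ for its nilpotent part, the subspaces $F_j=\ker\mc N^j$ with $0\le j\le t$ give a filtration of $\mc V_\mu^t$ in which each subquotient $F_j/F_{j-1}$ is isomorphic to $\mc V_\mu$. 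This filtration is manifestly preserved upon restriction along $\mc V_{ext}\hookrightarrow\mc V$, and because $\mc N^{t-1}\neq 0$ the $\mc V_{ext}$-module $\mc V_\mu^t$ is an indecomposable length-$t$ iterated self-extension of $\mc V_\mu$.

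The first step is to transport this structure through Proposition \ref{freefieldallen}. The functor $M\mapsto\sigma^{-b+1}M\boxtimes V_{bc}$ is exact and fully faithful (using that $V_{bc}$ is simple over itself), and it restricts to an equivalence onto its essential image; under it, indecomposable length-$t$ iterated self-extensions correspond to one another. Therefore an indecomposable length-$t$ iterated self-extension of $\mc V_\mu\cong\sigma^{-b+1}\mc W_{[-a]}\boxtimes V_{bc}$ (when $a\notin\mbb Z$), respectively of $\mc V_\mu\cong\sigma^{-b+1}(\mc W_0^-)\boxtimes V_{bc}$ (when $a\in\mbb Z$), corresponds to an indecomposable length-$t$ iterated self-extension of $\mc W_{[-a]}$, respectively of $\mc W_0^-$, inside $\cbg$. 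By Theorem \ref{cbgindecomptyp} the only such module in the typical case is $\mc W_{[-a]}^t$, and by the classification of $\mc C_{\bg,\mbb Z}$ from Section \ref{2.3} together with the $\ext^1$ computations recorded there, the only such module in the atypical case is $(\mc W_0^-)^t=\mc W_0^{-,t}$ (both are uniserial chains, hence determined by their socle and their composition factors, which match those of $\mc V_\mu^t$). This yields the two asserted isomorphisms.

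A more hands-on route, parallel to the proof of Proposition \ref{extensionfreefield}, constructs the isomorphism directly. For $a\notin\mbb Z$ the module $\mc V_\mu$ is simple over $\mc V_{ext}$, and the free-field logarithmic intertwiner $\mc V_\mu\times_{\mc V}\mc V_0^t\cong\mc V_\mu^t$ (obtained by fusing with the $t$-fold logarithmic extension of the vacuum, exactly as in that proof) provides, after restriction to $\mc V_{ext}$, a nonzero morphism $\sigma^{-b+1}\mc W_{[-a]}^t\boxtimes V_{bc}\to\mc V_\mu^t$ of $\mc V_{ext}$-modules; since the source is uniserial this morphism is injective, and comparing the equal numbers of composition factors on the two sides forces it to be an isomorphism. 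For $a\in\mbb Z$, as in Propositions \ref{freefieldresmod} and \ref{extensionfreefield} the module $\mc V_\mu$ is assembled from simple currents, so one restricts the intertwiner to the unique irreducible submodule of $\mc V_\mu$, uses that fusion with the relevant simple current is an equivalence preserving socle filtrations, identifies that socle at $t=1$ via Proposition \ref{freefieldallen}, and again counts composition factors.

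The step I expect to be the main obstacle is the atypical case $a\in\mbb Z$: since $\mc V_\mu^t$ is already reducible at $t=1$, one cannot appeal to simplicity of a typical socle and must track the full Loewy/socle filtration through the simple-current fusion in order to exclude any other iterated self-extension of $\mc W_0^-$ with the same composition factors. The other point requiring care is making the nondegeneracy $\mc N^{t-1}\neq 0$ precise after restriction to $\mc V_{ext}$; this becomes transparent once one identifies the image of $\partial X_0$ under $\mc V_{ext}\hookrightarrow\mc V$ --- up to the spectral flow $\sigma^{-b+1}$ it is the zero mode $J_0$ of the $\bg$ current --- so that its nilpotent part is inherited verbatim from the defining Jordan block of $\mc V_\mu^t$.
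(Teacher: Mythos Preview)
Your approach is essentially the paper's: reduce to the $t=1$ case (Proposition~\ref{freefieldallen}), establish that ${\mc V}_\mu^t$ is indecomposable as a ${\mc V}_{ext}$-module by exhibiting a rank-$t$ Jordan block for a zero mode coming from ${\mc V}_{ext}$, and then invoke the classification (Theorems~\ref{cbgindecomptyp} and~\ref{cbgindecompatyp}) to conclude that the only indecomposable iterated self-extension with the given composition factors is the chain. The paper's proof is terser---it simply says to compute the action of $\normord{\gamma\beta}$ on the generator $X^{t-1}\vert\mu\rangle$ and observe the size-$t$ Jordan block---but the logic is identical.

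One correction is needed. The operator carrying the Jordan block is \emph{not} $\partial X_0$. In the pairing used here one has $(X,X)=0$ and $(X,Y)=1$, so $\partial X_0$ commutes with the position operator $X$ and therefore acts semisimply (with eigenvalue $b$) on the span of $X^k\vert\mu\rangle$. It is $\partial Y_0$, dual to $X$, that satisfies $[\partial Y_0,X]=1$ and hence acquires the rank-$t$ Jordan block on the generating space. Under the free-field embedding, $\normord{\gamma\beta}$ maps to a combination of $\partial X,\partial Y,\partial Z$ that includes $\partial Y$, which is why the paper's choice works. Your argument survives this fix unchanged: simply replace $\partial X_0$ by $\partial Y_0$ throughout, or---as the paper does---work directly with $\normord{\gamma\beta}_0$ and avoid the bookkeeping altogether. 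Your closing remark identifying ``the image of $\partial X_0$'' with $J_0$ should likewise be adjusted: it is an element of ${\mc V}_{ext}$ (namely $J$) whose image in ${\mc V}$ involves $\partial Y$, not the other way around.
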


\begin{proof}
When $t=1$, this is simply \cite[Proposition 2.12]{aw}. When $t>1$, we only need to show that the module ${\mc V}_{\mu}^t$ is indecomposable, and the result will follow from Theorem \ref{cbgindecomptyp} and Theorem \ref{cbgindecompatyp}: the only indecomposable modules having $\sigma^{-b+1}  {\mc W}_{[-a]}$ or $\sigma^{-b+1}  {\mc W}_{[0]}^-$ in its composition series are the long chains. To show that it is indecomposable, we compute the action of $\normord{\gamma\beta}$ on the generator $X^{t-1}\vert \mu \rangle$. It is easy to see that there is a Jordan block of size $t$, and thus it is indecomposable. 

\end{proof}

With this, we derive:

\newtheorem{bgmodffid}[Liftmonodromy]{Proposition}

\begin{bgmodffid}\label{bgmodffid}
Let $b\in \mathbb{Z}$. If $a\notin \mathbb{Z}$, one has the following isomorphism of ${\mc V}_{ext}$ modules:
\begin{equation}{\mc F}\left( \widehat{V}_{(2b-a-1)/2, a,x=-1/2}^t\right)\cong  \sigma^{-b+1}\mathcal{W}_{[-a]}^t\boxtimes V_{bc}.\end{equation} For $a\in \mathbb{Z}$, one has the following isomorphism of ${\mc V}_{ext}$ modules:
\begin{equation}
\begin{aligned}
&{\mc F}\left(\widehat{V}_{(2b-a-1)/2, a, -,x=-1/2}^t\right)\cong  \sigma^{-b+1} (\mathcal{W}_0^-)^t\boxtimes V_{bc},\\
&{\mc F} \left(\widehat{V}_{(2b-a-1)/2, a, +,x=-1/2}^t\right)\cong  \sigma^{-b} (\mathcal{W}_0^+)^t\boxtimes V_{bc}.
\end{aligned}
\end{equation}

\end{bgmodffid}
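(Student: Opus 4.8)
The plan is to route the computation through the free-field realizations of Section~\ref{4.2} and through Proposition~\ref{freefieldbgmodule}, which already identifies the lattice modules $\mc V_\mu^t$ with $\beta\gamma\otimes bc$-modules. By Proposition~\ref{extensionfreefield} (together with the construction of the $V_Z$-module $V_{\mu,x,Z}^t$ preceding it), for $\mu=aX+bY$ the module $\widehat{V}_{(2b-a-1)/2,a,x=-1/2}^t$ — and, when $a\in\mbb Z$, its ``$-$''-decorated analogue — is exactly the restriction along the free-field embedding $V(\widehat\gl)\to V_Z$ of $V_{\mu,-1/2,Z}^t$. The hypotheses $b\in\mbb Z$ and $x=-\frac{1}{2}$ are precisely the conditions isolated just before the statement under which this $V_Z$-module lifts to a module $\mc V_\mu^t$ of the lattice VOSA $\mc V$: the label $x=-\frac{1}{2}$ forces the generating vector to be $X^{t-1}\vert\mu\rangle$, and $b\in\mbb Z$ makes the conformal weight of $\normord{e^{-X}}$ integral, so that $\normord{e^{\pm X}}$ acts locally.

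The key step is to identify $\mc V_{ext}\times_{V(\widehat\gl)}\widehat{V}_{(2b-a-1)/2,a,x=-1/2}^t$ with $\mc V_\mu^t$ restricted to $\mc V_{ext}$. The vertex operators $Y_{\mc V_\mu^t}(v,z)$ with $v\in\mc V_{ext}\subset\mc V$, evaluated on the $V(\widehat\gl)$-submodule $\widehat{V}_{(2b-a-1)/2,a,x=-1/2}^t\subset\mc V_\mu^t$ generated by $X^{t-1}\vert\mu\rangle$, form a $P(z)$-intertwining map of $V(\widehat\gl)$-modules, hence by the universal property of the fusion product give a canonical morphism $\mc V_{ext}\times_{V(\widehat\gl)}\widehat{V}_{(2b-a-1)/2,a,x=-1/2}^t\to\mc V_\mu^t$. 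This map is onto since $\mc V_\mu^t$ is cyclic over $\mc V_{ext}$, generated by $X^{t-1}\vert\mu\rangle$ (clear from Proposition~\ref{freefieldbgmodule}), and it is an isomorphism after comparing composition factors $E_0$-grading component by $E_0$-grading component: on each component the target is a chain of the form identified in Proposition~\ref{freefieldresmod}/\ref{extensionfreefield}, while the source is a fusion of $\widehat V_{(2b-a-1)/2,a,x=-1/2}^t$ with one of the simple currents comprising $\mc V_{ext}$, computed by the lemmas of Section~\ref{3.4}, and these agree — indecomposability of the target being supplied by Proposition~\ref{freefieldbgmodule} via the size-$t$ Jordan block of $\normord{\gamma\beta}$. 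Since $\mc V_\mu^t$ is a genuine local $\mc V_{ext}$-module, it follows that the monodromy of $\mc V_{ext}\times(-)$ on $\widehat{V}_{(2b-a-1)/2,a,x=-1/2}^t$ is trivial, so this module lies in $KL^0$ and $\mc F$ of it is $\mc V_\mu^t$; the ``$-$''-decorated case for $a\in\mbb Z$ is identical.

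Combining with Proposition~\ref{freefieldbgmodule} gives the first formula and the ``$-$'' formula directly. For the ``$+$''-decorated case with $a\in\mbb Z$, I would precompose the realization $V(\widehat\gl)\to\mc V_{ext}$ with the conjugation automorphism $w$ of Section~\ref{3.2}. Several facts combine: $w^*$ interchanges ``$-$''-chains and ``$+$''-chains (the remark after Proposition~\ref{extensionfreefield}) and satisfies $w^*\sigma^l=\sigma^{-l}w^*$; $w$ permutes the simple currents making up $\mc V_{ext}$, hence extends to an automorphism $\tilde w$ of $\mc V_{ext}$ with $\mc F\circ w^*\cong\tilde w^*\circ\mc F$; and $\tilde w$ acts on the $V_{\beta\gamma}$ tensor factor of $\mc V_{ext}\cong V_{\beta\gamma}\otimes V_{bc}$ by (a spectral flow of) the conjugation $c$ of Section~\ref{2.1}, for which $c^*\mc W_0^\mp\cong\mc W_0^\pm$ and $c^*\sigma^n(-)\cong\sigma^{-n}c^*(-)$. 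Writing $\widehat{V}_{(2b-a-1)/2,a,+,x=-1/2}^t$ as $w^*$ of the ``$-$''-chain in the family $-aX+(1-b)Y$, applying $\mc F$, and matching the spectral-flow exponents against the already-proven ``$-$'' formula converts $\sigma^{-b+1}\mc W_0^-$ into $\sigma^{-b}\mc W_0^+$, giving the stated isomorphism.

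The main obstacle is the middle step: reconciling the abstractly defined lift $\mc F(-)=\mc V_{ext}\times(-)$, a fusion product in $\mathrm{Ind}(KL)$ governed by monodromy with no closed algebraic formula, with the concrete free-field module $\mc V_\mu^t$ — in particular verifying that the canonical intertwiner is an isomorphism rather than merely a surjection — together with the spectral-flow and parameter bookkeeping in the ``$+$''-case, which must be carried out without index or sign errors.
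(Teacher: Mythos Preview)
Your proposal is correct and follows essentially the same route as the paper: identify $\widehat V^t_{(2b-a-1)/2,a,x=-1/2}$ with the restriction of $V^t_{\mu,-1/2,Z}$ via Proposition~\ref{extensionfreefield}, lift to the lattice module $\mc V_\mu^t$, and invoke Proposition~\ref{freefieldbgmodule}; then obtain the ``$+$''-case by twisting with the conjugation $w$. The only difference is emphasis: where the paper simply asserts that the inclusion $\mc V_{ext}\hookrightarrow\mc V$ induces the isomorphism $\mc F(\widehat V^t_{\ldots})\cong\mc V\times_{V_Z}V^t_{\mu,-1/2,Z}$ (i.e.\ lifts in two stages through the free-field square), you instead build the intertwiner $\mc V_{ext}\times_{V(\widehat\gl)}\widehat V^t_{\ldots}\to\mc V_\mu^t$ directly and verify it is an isomorphism by a composition-factor count on each $E_0$-weight component --- more explicit, but the same content.
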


\begin{proof}

We have seen that ${\mc V}\times_{V_Z}V_{\mu, x=-1/2,Z}^t\cong {\mc V}_{\mu}^t$. The inclusion ${\mc V}_{ext}\to {\mc V}$ induces an isomorphism:
\begin{equation}
{\mc F}\left({\mc V}_{\mu}^t\right)\cong {\mc V}\times_{V_Z}V_{\mu, x=-1/2,Z}^t
\end{equation}
as modules of the mode algebra of ${\mc V}_{ext}$. Comparing this with Proposition \ref{extensionfreefield} and Proposition \ref{freefieldbgmodule}, we get the desired result for $ \widehat{V}_{b-(a-1)/2, a,x=-1/2}^t$ when $a\notin \mathbb{Z}$ and $ \widehat{V}_{b-\frac{a-1}{2}, a, -,x=-1/2}^t$ when $a\in \mathbb{Z}$. Twisting both free-field realizations of $V(\widehat \gl)$ and $V_{\beta\gamma}$ by conjugation $w$ gives the result for $\widehat{V}_{b-\frac{a-1}{2}, a, +,x=-1/2}^t$.

\end{proof}

\noindent\textbf{Remark.} The requirement that $x=-1/2$ implies that $N_0+E_0/2$ acts semi-simply, and requiring that $b\in \mathbb{Z}$ implies that it has integer eigenvalues. The above Proposition implies that a sufficient condition for a $V(\widehat \gl)$ module to be lifted to ${\mc V}_{ext}$ is that $N_0+E_0/2$ acts semi-simply with integer eigenvalues. We  denote by $KL^{N+E/2}$ the subcategory of $KL$ where  $N_0+E_0/2$ acts semi-simply with integer eigenvalues. 

\vspace{5pt}

As a corollary, we prove the first part of Theorem \ref{LALBequiv}:

\newtheorem{cbgtensor}[Liftmonodromy]{Corollary}

\begin{cbgtensor}
The category $\cbg$ has the structure of a braided tensor category defined by $P(z)$-intertwining operators. 
\end{cbgtensor}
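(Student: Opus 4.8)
The plan is to obtain the tensor structure on $\cbg$ by transporting it from $\mathrm{Rep}^0({\mc V}_{ext})$, which by \cite[Theorem 1.4]{creutzig2020direct} is a braided tensor supercategory whose structure is defined by $P(z)$-intertwining operators. Since ${\mc V}_{ext}\cong V_{\beta\gamma}\otimes V_{bc}$ with $V_{bc}$ a pair of free fermions (so that the category of generalized $V_{bc}$-modules is $\mathrm{SVect}$), there is a decomposition $\mathrm{Rep}^0({\mc V}_{ext})\cong \mathcal D\boxtimes \mathrm{SVect}$ for a suitable category $\mathcal D$ of $V_{\beta\gamma}$-modules, and this is an equivalence of braided tensor supercategories: a $P(z)$-intertwining operator for $V_{\beta\gamma}\otimes V_{bc}$ factors as the external product of one for $V_{\beta\gamma}$ with one for $V_{bc}$, so the $P(z)$-fusion product on $\mathrm{Rep}^0({\mc V}_{ext})$ is the external product of the $P(z)$-fusion products for $V_{\beta\gamma}$ and for $V_{bc}$. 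In particular the full subcategory of objects of the form $M\boxtimes \mathbb C^{1|0}$ is a braided tensor subcategory of $\mathrm{Rep}^0({\mc V}_{ext})$, equivalent, as a braided tensor category with its $P(z)$-fusion product, to $\mathcal D$. Hence it suffices to realize $\cbg$ as a braided tensor subcategory of $\mathcal D$, equivalently to show that $\cbg\boxtimes\mathrm{SVect}$ is a braided tensor subcategory of $\mathrm{Rep}^0({\mc V}_{ext})$.

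To this end I would work with the subcategory $KL^{N+E/2}\subseteq KL^0$ on which $N_0+E_0/2$ acts semi-simply with integer eigenvalues. The first point is that $KL^{N+E/2}$ is a tensor subcategory of $KL^0$: the eigenvalue of $N_0+E_0/2$ is additive under the fusion product, and from the fusion rules computed in Section \ref{3.4} — in particular Lemma \ref{fuswithsc} and Lemma \ref{fusextensionVbad} and the resulting formulas for the fusion of the $\widehat P$'s and $\widehat V$'s — the fusion of two modules with the requisite semisimplicity and integrality again enjoys them (the $x$-labels and charges get shifted only by integers, and $E_0$ remains nilpotent in the relevant direct summands). By Theorem \ref{Liftmonodromy}, $\mc F$ then restricts to a tensor functor $KL^{N+E/2}\to \mathrm{Rep}^0({\mc V}_{ext})$.

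Next I would identify the image $\mc F(KL^{N+E/2})$ with $\cbg\boxtimes\mathrm{SVect}$. Proposition \ref{bgmodffid} (resting on the free-field computations of Propositions \ref{extensionfreefield} and \ref{freefieldbgmodule}) shows that all spectral flows of $\mathcal W_\lambda$ and of $\mathcal W_0^{\pm}$, together with all of their iterated self-extensions, occur after $\boxtimes V_{bc}$ as values of $\mc F$ on objects of $KL^{N+E/2}$; since $\mathrm{Rep}^0({\mc V}_{ext})$ is closed under subquotients and extensions and, by Theorems \ref{cbgindecomptyp} and \ref{cbgindecompatyp}, $\cbg$ is generated under these operations by precisely those modules, every object of $\cbg\boxtimes\mathrm{SVect}$ lies in $\mathrm{Rep}^0({\mc V}_{ext})$. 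Conversely, $\mc F$ is exact — fusion with the simple currents comprising ${\mc V}_{ext}$ is exact, as it preserves socle series \cite{creutzig2013w} — and the simple objects of $KL^{N+E/2}$ are sent by $\mc F$ into $\cbg\boxtimes\mathrm{SVect}$ (the $t=1$ case of Proposition \ref{bgmodffid}), so by exactness together with closure of $\cbg$ under extension every object of $KL^{N+E/2}$ is sent into $\cbg\boxtimes\mathrm{SVect}$. Combining the two containments yields $\mc F(KL^{N+E/2})=\cbg\boxtimes\mathrm{SVect}$, and as the image of a tensor functor restricted to a tensor subcategory this is a braided tensor subcategory of $\mathrm{Rep}^0({\mc V}_{ext})$, with structure given by the $P(z)$-fusion product of ${\mc V}_{ext}$.

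Finally, restricting the external-product description of the $P(z)$-fusion product from the first paragraph to the $\mathbb C^{1|0}$-summand shows that $\cbg$ is closed under the $V_{\beta\gamma}$ $P(z)$-fusion product and inherits from $\mathcal D$ all the coherence data — associativity, unit (which is $\mathcal V=V_{\beta\gamma}$ itself, as it must be), and braiding isomorphisms — so $\cbg$ is a braided tensor category defined by $P(z)$-intertwining operators. The step I expect to be the main obstacle is precisely the identification $\mc F(KL^{N+E/2})=\cbg\boxtimes\mathrm{SVect}$ together with the closure of $KL^{N+E/2}$ under fusion: this is where the classification results of Section \ref{2.3} and the fusion computations of Section \ref{3.4} are indispensable, since without the complete list of indecomposables of $\cbg$ one cannot rule out that the image is strictly larger or smaller, and without closure one obtains only an embedding of $\cbg$ into a braided tensor category rather than a genuine tensor structure on $\cbg$ itself.
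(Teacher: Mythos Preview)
Your overall strategy matches the paper's: embed $\cbg\boxtimes\mathrm{SVect}$ as a full subcategory of $\mathrm{Rep}^0(\mc V_{ext})$ via Proposition~\ref{bgmodffid} together with Theorems~\ref{cbgindecomptyp} and~\ref{cbgindecompatyp}, and then verify closure under the $P(z)$-fusion product. The difference lies in how closure under fusion is argued.

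The paper argues directly inside $\mathrm{Rep}^0(\mc V_{ext})$: fusion there is exact (inherited from the rigidity of $KL$), and $\cbg$ is by \emph{definition} closed under extensions, so it suffices to check that the fusion of any two simples of $\cbg$ has composition factors in $\cbg$; this is the $V_{\beta\gamma}$ fusion of simples already computed in \cite{aw}. Your route instead passes through the $KL$ side: you argue that $KL^{N+E/2}$ is itself a tensor subcategory (using the explicit fusion formulas of Section~\ref{3.4}), that $\mc F$ is a tensor functor, and that $\mc F(KL^{N+E/2})=\cbg\boxtimes\mathrm{SVect}$, so the image is automatically fusion-closed. This works, but it front-loads more: verifying that $KL^{N+E/2}$ is fusion-closed is a genuine check (semisimplicity of $N_0+E_0/2$ is not preserved under arbitrary extensions, so one really must use Corollary~\ref{goodfus} and the simple-current shifts rather than just exactness), and the equality $\mc F(KL^{N+E/2})=\cbg\boxtimes\mathrm{SVect}$ anticipates the essential surjectivity of Proposition~\ref{CorCbgKLN}. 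Your argument for the inclusion $\cbg\boxtimes\mathrm{SVect}\subseteq\mc F(KL^{N+E/2})$ as written only shows containment in $\mathrm{Rep}^0(\mc V_{ext})$, not in the image of $\mc F$ on $KL^{N+E/2}$; however this gap is harmless for the corollary itself, since for closure of $\cbg\boxtimes\mathrm{SVect}$ under fusion you can combine exactness, closure of $\cbg$ under extensions, and the fact that simples of $\cbg$ lie in $\mc F(KL^{N+E/2})$ and fuse there. In short: your proof is correct, but the paper's is shorter because it exploits that closure under extensions is built into the definition of $\cbg$, reducing the whole question to the already-known fusion of simples.
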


\begin{proof}
As discussed in Section \ref{4.1}, the supercategory $\mathrm{Rep}^0(\mathcal{V}_{ext})$ has the structure of a braided tensor supercategory defined by $P(z)$-intertwining operators. By Theorem \ref{cbgindecomptyp}, Theorem \ref{cbgindecompatyp} as well as Proposition \ref{bgmodffid}, we conclude that $\cbg\boxtimes \mathrm{SVect}$ is a full subcategory of $\mathrm{Rep}^0(\mathcal{V}_{ext})$, so we only need to show that this subcategory is a closed under fusion product. By \cite{creutzig2020tensor}, we conclude that fusion product on $KL$, and consequently on $\mathrm{Rep}^0(\mathcal{V}_{ext})$, is exact. Thus by the definition of $\cbg$, we only need to show that given two modules in $\cbg\boxtimes \mathrm{SVect}$, fusion product of their composition factors are still in $\cbg\boxtimes \mathrm{SVect}$. This is computed in \cite{aw}, and we conclude that $\cbg\boxtimes \mathrm{SVect}$ is closed under fusion, and is thus a braided tensor subcategory. Thus $P(z)$-intertwining operators define a braided tensor category structure on $\cbg$.
\end{proof}

\subsection{The equivalence}\label{4.3}

In this section, we  complete the proof of Theorem \ref{LALBequiv}. Since the work of \cite{creutzig2017tensor, creutzig2020direct} already implies that the functor $\mc F$ is a braided tensor functor, we only need to show that the image of $KL^0$ under $\mc F$ coincides with $\cbg\boxtimes \mathrm{SVect}$ as abelian supercategories. The procedure for the proof is the following:
\begin{enumerate}
    \item Recall the category $KL^{N+E/2}$. We first conclude from Proposition \ref{bgmodffid} that $KL^{N+E/2}$ is a subcategory of $KL^0$.
    
    \item We then use Theorem \ref{cbgindecomptyp} and \ref{cbgindecompatyp} to conclude that the lifting functor $\mathcal{F}$ restricted to $KL^{N+E/2}$ is essentially surjective onto $\cbg\boxtimes \mathrm{SVect}$.  
    
    \item Finally, we conclude from the above two points that the functor ${\mc F}$ identifies $KL^0/\mathbb{Z}$ with $\cbg\boxtimes \mathrm{SVect}$, and that $KL^0$ coincides with with $KL^{N+E/2}$.
    
\end{enumerate}
 Let us start with the following:
 
\newtheorem{LemKLNKL0}[Liftmonodromy]{Lemma}

\begin{LemKLNKL0}\label{LemKLNKL0}
$KL^{N+E/2}$ is a subcategory of $KL^0$.

\end{LemKLNKL0}

\begin{proof}
We need to show that objects in $KL^{N+E/2}$ can be lifted to $\mathrm{Rep}^0({\mathcal V}_{ext})$. Consider first $e\notin \mathbb{Z}$ with $\widehat{M}\in KL_e$ for some $\gl$ module $M$, and suppose $\widehat{M}\in KL^{N+E/2}$. To show that $\widehat{M}$ can be lifted to a local module of ${\mc V}_{ext}$, we only need to show that it is a quotient of a module that can be lifted, since if the monodromy is trivial on a module, it is trivial on all quotients. This is guaranteed if $M$ is a quotient of a direct sum of $V_{n,e,x=-1/2}^t$ for various $n\in (1-e)/2+\mathbb{Z}$ and $t$. Without loss of generality one may assume that $M$ is generated by a single $m$ such that $(E-e)^t m=0$, and $(N+E/2)m=km$ for some $k\in \mathbb{Z}$. We may choose $k$ to be the largest such value so that $\psi^+m=0$. It is clear then that the module $M$ is spanned by the vectors coming from applying $\gl$ to $m$:
\begin{equation}
    m, \psi^-m, (E-e)m, (E-e)\psi^-m,\ldots, (E-e)^{t-1}m, (E-e)^{t-1}\psi^-m.
\end{equation}
From the analysis of Section \ref{3.1}, there is a surjection: 
\begin{equation}
V_{k-(e+1)/2, e,x=-1/2}^t\to M
\end{equation}
by mapping the highest weight vector of the generator to $m$. This is a well-defined map since the structure of $V_{k-(e+1)/2, e,x=-1/2}^t$ is determined by an element $v$ with $(N+E/2)v=kv$, $\psi^+v=0$ as well as $(E-e)$ has order $t$. This shows that $M$ is a quotient of $V_{k-(e+1)/2, e,x=-1/2}^t$, and so $\widehat{V}_{k-(e+1)/2, e,x=-1/2}^t$ maps onto $\widehat{M}$. Since $\widehat{V}_{k-(e+1)/2, e,x=-1/2}^t$ can be lifted to a local module, so can $\widehat{M}$.

A similar argument can be applied when $e=0$ since any $M$ such that $N$ acts semi-simply with integer eigenvalues and $E$ acts nilpotently is a quotient of $P_{n,0,x=-1/2}^t$. 

Finally, let us consider when $e\in S_b$. Let $W\in KL_e$, then $W=\widehat{A}_{-e/2+\epsilon(e),e}\times \widehat{M}$ for some $M$. This means that $W$ can be extended if and only if $\widehat{M}$ can be. We must show that $N+E/2$ acts semi-simply on $M$ with integer eigenvalues. If this were not the case, two things can go wrong: first, $N+E/2$ does not have integer eigenvalues, or it does not act semi-simply. Suppose it was the first case, which means that $M$ has a sub-quotient $A_{n,0}$ such that $n\notin\mathbb{Z}$, but this means that $W$ has a sub-quotient $\widehat{A}_{-e/2+\epsilon(e),e}\times \widehat{A}_{n,0}=\widehat{A}_{n-e/2+\epsilon(e) ,e}$, a sub-module of $\widehat{V}_{n-e/2-\epsilon(e),e}$ on which $N_0+E_0/2$ acts semi-simply with eigenvalues in $n+\mathbb{Z}\ne \mathbb{Z}$, a contradiction to our assumption on $W$. If $N+E/2$ does not act semi-simply, then since $\widehat{A}_{-e/2+\epsilon(e) ,e}=\sigma^e(\widehat{A}_{e,0})$, one has 
\begin{equation}
    \widehat{A}_{-e/2+\epsilon(e),e}\times \widehat{M}\cong \sigma^e\left(\widehat{A_{e}\otimes M} \right).
\end{equation}
By the definition of spectral flow, the action of $N_0+E_0/2$ on $ \sigma^e\left(\widehat{A_{e}\otimes M} \right)$ is the same as the action of $N_0+E_0/2+e_0/2$ on $\widehat{A_{e}\otimes M}$. If $N+E/2$ does not act semi-simply on $M$, then $N_0+E_0/2+e_0/2$ can not act semi-simply since they only differ by a scalar, giving a contradiction to our assumption on $W$.

This means that we may apply the previous argument to $M$, so that $\widehat{M}$ can be lifted, and thus so can $W$. This completes the proof.

\end{proof}

Next, we prove:

\newtheorem{CorCbgKLN}[Liftmonodromy]{Proposition}

\begin{CorCbgKLN}\label{CorCbgKLN}
The functor $\mathcal{F}$ restricted to $KL^{N+E/2}$ is essentially surjective onto $\cbg\boxtimes \mathrm{SVect}$.
\end{CorCbgKLN}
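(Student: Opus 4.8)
The plan is to show that the essential image $\mathcal I$ of $\mathcal F|_{KL^{N+E/2}}$ is exactly $\cbg\boxtimes\mathrm{SVect}$, using the classification of indecomposables of $\cbg$ in Theorems \ref{cbgindecomptyp} and \ref{cbgindecompatyp} together with the explicit identifications of Proposition \ref{bgmodffid}. That $\mathcal I\subseteq\cbg\boxtimes\mathrm{SVect}$ is already essentially the content of the proof of Corollary \ref{cbgtensor} (Proposition \ref{bgmodffid}, the description of objects of $KL^{N+E/2}$ from the proof of Lemma \ref{LemKLNKL0}, exactness of fusion, and closure of $\cbg\boxtimes\mathrm{SVect}$ under subquotients), so the real point is surjectivity. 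Since $KL^{N+E/2}$ is closed under finite direct sums and the parity shift $\Pi$, since $\mathcal F$ is additive and commutes with $\Pi$, and since the category of generalized $V_{bc}$-modules is equivalent to $\mathrm{SVect}$, it suffices to realize each $Y\boxtimes V_{bc}$ with $Y\in\cbg$ indecomposable as an object of $\mathcal I$. If $Y$ is typical, then $Y\cong\sigma^n\mc W_\lambda^k$ by Theorem \ref{cbgindecomptyp}; choosing $a\notin\mbb Z$ with $[-a]=\lambda$ and $b\in\mbb Z$ with $-b+1=n$, Proposition \ref{bgmodffid} gives $Y\boxtimes V_{bc}\cong\mathcal F\big(\widehat V^{\,k}_{(2b-a-1)/2,\,a,\,x=-1/2}\big)$, a module which lies in $KL^{N+E/2}$ by the Remark after Proposition \ref{bgmodffid}. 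So the typical case is immediate, and everything reduces to the atypical block $\mc C_{\bg,\mbb Z}$.

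The atypical case will follow once $\mathcal I$ is known to be closed under subobjects and quotients. Indeed, by Theorem \ref{cbgindecompatyp} an atypical indecomposable $Y$ is a quotient of a finite direct sum $\bigoplus_i\sigma^{n_i}\mc R_{k_i}$ of spectral flows of roofs; by definition $\sigma^n\mc R_{2a,2b}$ is a submodule of $\sigma^n\mc W_0^{-,a}\oplus\sigma^{n+1}\mc W_0^{+,b}$, so $\sigma^n\mc R_{2a,2b}\boxtimes V_{bc}$ is a submodule of $\big(\sigma^n\mc W_0^{-,a}\boxtimes V_{bc}\big)\oplus\big(\sigma^{n+1}\mc W_0^{+,b}\boxtimes V_{bc}\big)$, which is the $\mathcal F$-image of a suitable direct sum of chains $\widehat V^{a}$ and $\widehat V^{b}$ in the block $KL_0$ with $x=-1/2$ — hence an object of $KL^{N+E/2}$ — by Proposition \ref{bgmodffid}; the odd-length roofs are in turn subquotients of the even-length ones. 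Thus, granted that $\mathcal I$ is closed under subquotients (and it is obviously closed under finite direct sums), every $\sigma^{n_i}\mc R_{k_i}\boxtimes V_{bc}$, their direct sum, and the quotient $Y\boxtimes V_{bc}$ all lie in $\mathcal I$.

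So it remains to prove that $\mathcal I$ is closed under subquotients; this is the technical heart. I will establish the sharper statement that every ${\mc V}_{ext}$-submodule $Y'$ of $\mathcal F(\widehat N)$, for $\widehat N\in KL^{N+E/2}$, is of the form $\mathcal F(\widehat N')$ for a submodule $\widehat N'\subseteq\widehat N$ (automatically in $KL^{N+E/2}$); closure under quotients then follows from exactness of $\mathcal F$, since the kernel of any $\mathcal F(\widehat N)\twoheadrightarrow Z$ is some $\mathcal F(\widehat N')$ and therefore $Z\cong\mathcal F(\widehat N/\widehat N')$. For the sharper statement, forget the ${\mc V}_{ext}$-action: $\mathcal F(\widehat N)={\mc V}_{ext}\times\widehat N=\bigoplus_m(\mc A_m\times\widehat N)$, where $\mc A_m$ denotes the $m$-th invertible simple-current summand of ${\mc V}_{ext}$ in \eqref{simplecurrentextension} (with $\mc A_0=V(\widehat\gl)$), and the ${\mc V}_{ext}$-action satisfies $\mc A_m\cdot(\mc A_{m'}\times\widehat N)\subseteq\mc A_{m+m'}\times\widehat N$. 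A ${\mc V}_{ext}$-submodule $Y'$ is in particular a $V(\widehat\gl)$-submodule, so $Y'=\bigoplus_m Y'_m$ with $Y'_m\subseteq\mc A_m\times\widehat N$; setting $\widehat N'\defeq Y'_0\subseteq\widehat N$, the inclusions $\mc A_m\cdot\widehat N'\subseteq Y'_m$ and $\mc A_{-m}\cdot Y'_m\subseteq\widehat N'$ (using $\mc A_{-m}\times\mc A_m\cong\mc A_0$), together with invertibility of $\mc A_m$, force $Y'_m=\mc A_m\times\widehat N'$, i.e.\ $Y'=\mathcal F(\widehat N')$. Combining this with the reduction of the previous paragraph, the typical case, and Theorems \ref{cbgindecomptyp}, \ref{cbgindecompatyp} yields $\mathcal I=\cbg\boxtimes\mathrm{SVect}$, as claimed.

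The main obstacle is this last step — verifying that the ${\mc V}_{ext}$-action on the free module $\mathcal F(\widehat N)$ genuinely recovers an arbitrary submodule $Y'$ from its degree-zero piece $Y'_0\subseteq\widehat N$. This is exactly where invertibility of the simple currents $\mc A_m$, and hence the precise choice of the extension \eqref{simplecurrentextension} generated by $\widehat A_{0,\pm1}$, is used; rather than redo this by hand I would appeal to the structural theory of simple-current extensions in \cite{creutzig2020simple, creutzig2020direct, creutzig2013w}.
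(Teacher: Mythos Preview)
Your argument is correct and follows the same skeleton as the paper's: invoke the classification Theorems~\ref{cbgindecomptyp} and~\ref{cbgindecompatyp} to reduce essential surjectivity to hitting the chains, then cite Proposition~\ref{bgmodffid}. The paper's own proof is literally two sentences and stops at that point. What you add is the justification the paper leaves implicit: for the atypical block, Theorem~\ref{cbgindecompatyp} only presents indecomposables as \emph{quotients} of direct sums of roofs, and roofs themselves only as \emph{subobjects} of direct sums of chains, so knowing the chains lie in the image is not enough without knowing the image is closed under subquotients. Your argument for this closure --- decompose a ${\mc V}_{ext}$-submodule $Y'\subseteq\mathcal F(\widehat N)=\bigoplus_m\mathcal A_m\times\widehat N$ by $E_0$-weight, set $\widehat N'=Y'_0$, and use invertibility of the simple currents $\mathcal A_m$ to force $Y'_m=\mathcal A_m\times\widehat N'$ --- is the standard mechanism for simple-current extensions and is correct; the paper presumably absorbs this into its blanket appeal to \cite{creutzig2020simple, creutzig2020direct}. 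So the two proofs agree in spirit, but yours is more honest about where the work lies.
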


\begin{proof}
To show that it is essentially surjective, by Theorem \ref{cbgindecomptyp} and \ref{cbgindecompatyp}, we need only show that the image of $\mathcal{F}$ contains $\sigma^l {\mc W}_{[a]}^t$ for $a\notin \mathbb{Z}$ and $\sigma^l ({\mc W}_{[0]}^\pm)^t$. These then follow from Proposition \ref{bgmodffid}. 

\end{proof}

Now we can finish the proof of Theorem \ref{LALBequiv}:

\newtheorem{Thmagain}[Liftmonodromy]{Theorem}

\begin{Thmagain}
The functor $\mathcal F$ provides an equivalence of braided tensor supercategories:
\begin{equation}
\cbg\boxtimes \mathrm{SVect}\cong KL^0/\mathbb{Z}. 
\end{equation}
\end{Thmagain}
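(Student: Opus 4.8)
The plan is to assemble the equivalence from the three ingredients already in hand: the lifting functor $\mathcal F$ is a braided tensor functor by Theorem \ref{Liftmonodromy}, the subcategory $KL^{N+E/2}$ lands inside $KL^0$ by Lemma \ref{LemKLNKL0}, and $\mathcal F$ restricted to $KL^{N+E/2}$ is essentially surjective onto $\cbg\boxtimes\mathrm{SVect}$ by Proposition \ref{CorCbgKLN}. The goal is to upgrade ``essentially surjective on the restriction'' to ``equivalence after de-equivariantization'', which means I must show three things: first, that $\mathcal F$ descends to a well-defined functor on $KL^0/\mathbb Z$; second, that the descended functor is fully faithful; and third, that it is essentially surjective (onto all of $\cbg\boxtimes\mathrm{SVect}$, not just the image of $KL^{N+E/2}$).

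First I would make precise that $\mathcal F$ factors through $KL^0/\mathbb Z$. By construction $\mathcal F(W) = \mathcal V_{ext}\times W$, and since $\mathcal V_{ext}$ contains the simple currents $\widehat A_{0,\pm 1}$ as direct summands, $\mathcal F(\widehat A_{0,\pm 1}\times W)\cong \mathcal F(W)$ naturally; this is exactly the universal property of de-equivariantization \cite[Theorem 8.23.3]{egno}, so $\mathcal F$ induces $\overline{\mathcal F}\colon KL^0/\mathbb Z\to \mathrm{Rep}^0(\mathcal V_{ext})$, and by \cite{creutzig2020direct,creutzig2020simple} this $\overline{\mathcal F}$ is an equivalence of braided tensor supercategories onto its image (this is the content of the statement in the introduction that $\mathcal F$ identifies $KL^0$ with $KL^0/\mathbb Z$; more precisely, de-equivariantization by the trivializing subcategory $\mathrm{Rep}(\mathbb C^*)$ generated by the $\widehat A_{0,\pm 1}$ produces a category equivalent to $\mathcal V_{ext}\text{-Mod}$ inside $\mathrm{Ind}(KL)$). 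So the only remaining point is to pin down the image, i.e. to show $\overline{\mathcal F}(KL^0/\mathbb Z) = \cbg\boxtimes\mathrm{SVect}$ as abelian subcategories of $\mathrm{Rep}^0(\mathcal V_{ext})$.

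Next I would prove the two inclusions. For $\cbg\boxtimes\mathrm{SVect}\subseteq \overline{\mathcal F}(KL^0/\mathbb Z)$: by Proposition \ref{CorCbgKLN} every $\sigma^l\mathcal W_{[a]}^t$ ($a\notin\mathbb Z$) and $\sigma^l(\mathcal W_0^\pm)^t$ lies in the image of $\mathcal F|_{KL^{N+E/2}}$, hence in the image of $\overline{\mathcal F}$; since every object of $\cbg$ is built from these indecomposables by Theorem \ref{cbgindecomptyp} and Theorem \ref{cbgindecompatyp}, and $\overline{\mathcal F}$ is exact (fusion on $KL$, hence on $\mathrm{Rep}^0(\mathcal V_{ext})$, is exact by rigidity, \cite{creutzig2020tensor}), the image is closed under the extensions that generate $\cbg\boxtimes\mathrm{SVect}$; tensoring with $\mathrm{SVect}$ is absorbed because $V_{bc}$-modules form $\mathrm{SVect}$. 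For the reverse inclusion $\overline{\mathcal F}(KL^0/\mathbb Z)\subseteq \cbg\boxtimes\mathrm{SVect}$: I need that every object of $KL^0$ lies in $KL^{N+E/2}$, which is the assertion ``$KL^0$ coincides with $KL^{N+E/2}$'' flagged in step 3 of the outline. The argument is: if $W\in KL^0$ but $N_0+E_0/2$ fails to act semisimply with integer eigenvalues on some subquotient, then following the reasoning in the proof of Lemma \ref{LemKLNKL0} (shifting by a simple current to reduce to $KL_e$ with $e\in S_g$, and using that $N_0+E_0/2$ only shifts by a scalar under spectral flow), $W$ would admit a subquotient on which $\mathcal V_{ext}\times(-)$ has nontrivial monodromy — contradicting $W\in KL^0$, since monodromy triviality passes to subquotients. (The monodromy on $\mathcal V_{ext}\times \widehat V_{n,e}$ with $e\notin\mathbb Z$, or with $N_0+E_0/2$ acting with non-integral eigenvalues, is visibly nontrivial from the free-field picture: the relevant exponential $e^{\pm X}$ has non-integral conformal weight against such a vector, cf. the discussion after Proposition \ref{freefieldresmod}.) Once $KL^0=KL^{N+E/2}$, the image of $\overline{\mathcal F}$ is exactly $\overline{\mathcal F}(KL^{N+E/2}/\mathbb Z) = \cbg\boxtimes\mathrm{SVect}$ by Proposition \ref{CorCbgKLN} together with the exactness/full-faithfulness already established.

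The main obstacle I expect is the equality $KL^0 = KL^{N+E/2}$ — specifically, ruling out objects on which $N_0+E_0/2$ acts non-semisimply or with non-integral eigenvalues yet whose $\mathcal V_{ext}$-induction is nonetheless local. The inclusion $KL^{N+E/2}\subseteq KL^0$ is Lemma \ref{LemKLNKL0}; the converse requires a genuine computation or structural argument that the monodromy operator $M\colon \mathcal V_{ext}\times W\to \mathcal V_{ext}\times W$ detects exactly the failure of $N_0+E_0/2$ to be semisimple-with-integer-spectrum. I would handle this by reducing, via simple-current shifts, to $W = \widehat M$ induced from a $\gl$-module $M\in\mathcal C_e$ with $e\in S_g$, decomposing $M$ over the generalized eigenspaces of $N+E/2$, and computing the monodromy eigenvalue on each $\mathcal V_{ext}$-isotypic piece of $\mathcal V_{ext}\times\widehat M$ using the ribbon/balancing structure $\theta = e^{2\pi i L_0}$ of $KL$ (Theorem \ref{KLrigid}) — the monodromy on $\widehat A_{0,\pm 1}\times \widehat M$ is governed by $e^{2\pi i (N_0+E_0/2)}$ up to an integer shift, so it is trivial precisely when $N_0+E_0/2$ has integral eigenvalues, and the nilpotent part of $N_0+E_0/2$ forces a nontrivial unipotent (logarithmic) contribution to $M$ when it is nonzero. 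Everything else in the proof is formal once this identification is in place.
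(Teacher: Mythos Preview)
Your proposal is correct in spirit and would work, but it takes a somewhat longer route than the paper's proof. The key difference is that you aim to establish the full equality $KL^0 = KL^{N+E/2}$ \emph{before} proving the theorem, whereas the paper observes that it suffices to check this equality only on \emph{simple} objects. The reason simples suffice is that $\cbg\boxtimes\mathrm{SVect}$ is closed under extension inside $\mathrm{Rep}^0(\mathcal V_{ext})$ by definition, and $\mathcal F$ is exact; so once every simple of $KL^0$ lies in $KL^{N+E/2}$, the image of $\mathcal F$ on all of $KL^0$ is automatically contained in $\cbg\boxtimes\mathrm{SVect}$. The paper then deduces $KL^0 = KL^{N+E/2}$ as a \emph{corollary} of the theorem rather than as an input.

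This reordering buys a genuine simplification: by restricting to simples, the paper never has to confront the ``nilpotent part of $N_0+E_0/2$ forces a unipotent monodromy'' step that you correctly flag as the main obstacle. On simples the issue of non-semisimplicity does not arise. For the simples $\widehat V_{n,e}$ with $e\notin\mathbb Z$, the paper lifts the monodromy computation to the free-field algebra $\mathcal V$ via the compatible diagram of free-field realizations, where locality is transparently controlled by integrality of $b$. For the simples $\widehat A_{n,e}$ with $e\in\mathbb Z$, the paper uses a neat trick you do not: embed $\widehat A_{n,e}$ into the indecomposable $\widehat V_{n,e,-}$, on which $L_0$ acts \emph{semisimply} (since it comes from the free-field side); then the monodromy, computed via the twist $e^{2\pi i L_0}$, is semisimple and hence scalar on each indecomposable summand of $\mathcal V_{ext}\times\widehat V_{n,e,-}$, so triviality on $\widehat A_{n,e}$ is equivalent to triviality on $\widehat V_{n,e,-}$, reducing again to the free-field calculation. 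Your direct balancing argument via $e^{2\pi i(N_0+E_0/2)}$ is a valid alternative and proves more upfront, but the unipotent-monodromy claim would require a careful verification that the paper's approach sidesteps entirely.
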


\begin{proof}

Let us show that the image of $KL^0$ under $\mathcal{F}$ is $\cbg\boxtimes \mathrm{SVect}$. For this, it is enough to show that the simple modules of $KL^0$ coincides with those of $KL^{N+E/2}$. First of all, consider $e\notin \mathbb{Z}$, in which case the simple modules are $\widehat{V}_{n,e}$. If monomdromy acts trivially on ${\mc V}_{ext}\times_{V(\widehat \gl)} \widehat{V}_{n,e}$, then the monodromy of ${\mc V}\times_{V_Z}\widehat{V}_{n,e}$ would also be trivial. We have seen that this only happens if $n+e/2\pm \frac{1}{2}\in \mathbb{Z}$, or in other words, when $\widehat{V}_{n,e}\in KL^{N+E/2}$. When $e\in \mathbb{Z}$, the simple modules are $\widehat{A}_{n,e}$, and we embed them into $\widehat{V}_{n,e,-}$. Since monodromy can be computed by using the twist element $e^{2\pi i L_0}$, and $L_0$ acts semi-simply on $\widehat{V}_{n,e,-}, {\mc V}_{ext}$ as well as the fusion ${\mc V}_{ext}\times_{V(\widehat \gl)}  \widehat{V}_{n,e,-}$, we know that the monodromy map is semi-simple. More-over, $\widehat{V}_{n,e,-}$ is indecomposable as a $V(\widehat \gl)$ module, and so the monodromy must be scalar on each direct summand of ${\mc V}_{ext}\times_{V(\widehat \gl)}  \widehat{V}_{n,e,-}$. This means that if monodromy action on ${\mc V}_{ext}\times_{V(\widehat \gl)}  \widehat{V}_{n,e,-}$ is non-trivial, then it is nontrivial on ${\mc V}_{ext}\times_{V(\widehat \gl)}  \widehat{A}_{n,e}$ by naturality of commutativity constraints and exactness of fusion. Using similar argument as in the case of $e\notin\mathbb{Z}$, this means that $\widehat{V}_{n,e,-}$ is an object in $KL^{N+E/2}$, and so $\widehat{A}_{n,e}$ must also be in $KL^{N+E/2}$. 

By the discussions of Section \ref{4.1}, the functor $\mathcal F$ identifies the image of $KL^0$ with the de-equivariantization. Thus we have an equivalence:
\begin{equation}
     KL^0/\mathbb{Z}\cong \mathcal{C}_{\beta\gamma}\boxtimes \mathrm{SVect}
\end{equation}
as desired.

\end{proof}

\newtheorem{KLN+E/2bgate}[Liftmonodromy]{Corollary}

\begin{KLN+E/2bgate}
$KL^{N+E/2}$ coincides with $KL^0$. Moreover, for each $e\in \mathbb{C}$, $\mathcal F$ gives an equivalence between abelian supercategories:
\begin{equation}
 \mathcal{F}: KL^{N+E/2}_e=KL^0_e\cong \mathcal{C}_{\beta\gamma, [e]}\boxtimes \mathrm{SVect}
\end{equation}

\end{KLN+E/2bgate}

\begin{proof}
We first show that $KL^{N+E/2}$ coincides with $KL^0$. Let $W\in KL^0_e$ for some $e$, then ${\mathcal F}(W)$ is an element in $\mathcal{C}_{\beta\gamma,[e]}$. By Corollary \ref{CorCbgKLN}, there is an element $V\in KL_{e}^{N+E/2}$ such that:
\begin{equation}
    \mathcal{F}(V)\cong \mathcal{F}(W)
\end{equation}
This is an equivalence as modules of ${\mc V}_{ext}$, so must be an equivalence as modules between $V(\widehat \gl)$. Taking the generalized $E_0$ weight $e$ part, we must have $W\cong V$. Thus $KL^{N+E/2}$ coincides with $KL^0$.

We now show that $\mathcal F$ restricts to an equivalence of abelian supercategories:
\begin{equation}
 \mathcal{F}: KL^{N+E/2}_e=KL^0_e\cong \mathcal{C}_{\beta\gamma, [e]}\boxtimes \mathrm{SVect}
\end{equation}
It is essentially surjective from Corollary \ref{CorCbgKLN}, and we only need to show that it is fully-faithful. For any $W_1,W_2\in KL_{e}^{N+E/2}$, if $\map{f}{W_1}{W_2}$ is nonzero, because ${\mc V}_{ext}$ is a direct sum of simple currents, the associated map:
\begin{equation}
{\mathcal F}(W_1)\to \mathcal{F}(W_2)
\end{equation}
is non-zero. Thus the functor ${\mc F}$ is faithful. To show that it is full, take any $W_1,W_2\in KL_{e}^{N+E/2}$ and $f\in \mathrm{Hom}(\mathcal{F}(W_1),\mathcal{F}(W_2))$. It is a morphism as ${\mc V}_{ext}$ modules, so must be a morphism in $\mathrm{Ind}(KL)$. By restricting this morphism to the generalized $E_0$ weight $e$ part, one obtains a morphism between $W_1$ and $W_2$. This provides natural bijections, and so the functor $\mathcal{F}$ is fully faithful when restricted to $KL_e^{N+E/2}$. 

\end{proof}

\subsection{The fusion structure of $\cbg$}\label{4.4}

Using the equivalence as well as the fusion structure of $V(\widehat \gl)$ obtained in Section \ref{3.4}, we have the following fusion structure for $\beta\gamma$ VOA:
\newtheorem{Fusionbg}[Liftmonodromy]{Corollary}

\begin{Fusionbg}
For any $\mc M\in \cbg$, denote by ${\mc M}_n$ the spectral flow $\sigma^n (\mc M)$. We have the following fusion rules for $\beta\gamma$ modules:
\begin{equation}\label{eqfusionbg}
\begin{aligned}
& \mathcal{P}_n^t\times \mathcal{P}_m^s\cong \bigoplus_{l=0}^{\min\{t,s\}-1} P_{n+m-1}^{s+t-1-2l}\oplus 2P_{n+m}^{s+t-1-2l}\oplus P_{n+m+1}^{s+t-1-2l},\\ & \mathcal{W}_{[\lambda], n}^t\times \mathcal{W}_{[\mu],m}^s\cong  \bigoplus_{l=0}^{\min\{t,s\}-1} \mathcal{W}_{[\lambda+\mu],n+m}^{t+s-1-2l}\oplus \mathcal{W}_{[\lambda+\mu],n+m-1}^{t+s-1-2l}, \text{if } \lambda+\mu\notin\mathbb{Z},\\ &\mathcal{W}_{[\lambda], n}^t\times \mathcal{W}_{[-\lambda],m}^s\cong\mathcal{W}_{+, n}^t\times \mathcal{W}_{-,m}^s\cong \bigoplus_{l=0}^{\min\{t,s\}-1} \mathcal{P}_{m+n-1}^{t+s-1-2l}
\end{aligned}
\end{equation}
\end{Fusionbg}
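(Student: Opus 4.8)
The plan is to deduce all of the fusion rules \eqref{eqfusionbg} by transporting the $V(\widehat\gl)$-fusion computations of Section~\ref{3.4} across the braided tensor equivalence $\mathcal F\colon KL^0/\mathbb Z \xrightarrow{\sim} \cbg\boxtimes\mathrm{SVect}$ established in Section~\ref{4.3} (which completes Theorem~\ref{LALBequiv}). Since $\mathcal F$ intertwines the $P(z)$-fusion products on the two sides, and since tensoring with $\mathrm{SVect}$ (equivalently, with $V_{bc}$) is transparent for the tensor structure, computing $A\times B$ for $A,B\in\cbg$ amounts to: choosing $V(\widehat\gl)$-preimages $\widetilde A,\widetilde B\in KL^0$ with $\mathcal F(\widetilde A)\cong A\boxtimes\mathrm{SVect}$ and likewise for $B$; computing $\widetilde A\times\widetilde B$ inside $KL$ with the formulas of Section~\ref{3.4}; and then applying $\mathcal F$ and discarding the $\mathrm{SVect}$-factor.

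The dictionary between preimages and $\cbg$-modules is Proposition~\ref{bgmodffid}: for $\mu=aX+bY$ with $b\in\mathbb Z$ one has $\mathcal F(\widehat V^t_{(2b-a-1)/2,\,a,\,x=-1/2})\cong\sigma^{-b+1}\mathcal W^t_{[-a]}\boxtimes V_{bc}$ when $a\notin\mathbb Z$, together with the analogous statements involving $(\mathcal W_0^-)^t$ and $(\mathcal W_0^+)^t$ when $a\in\mathbb Z$; and the diamond modules $\mathcal P^t_n$ are, by the same reasoning (cf.\ Proposition~\ref{freefieldbgmodule} and Theorems~\ref{cbgindecomptyp}--\ref{cbgindecompatyp}), the images $\mathcal F(\widehat P^t_{\bullet,\,0,\,x=-1/2})$ up to spectral flow. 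By Lemma~\ref{LemKLNKL0} and the identification $KL^{N+E/2}=KL^0$ from Section~\ref{4.3}, all of the $\widehat V^t$, $\widehat P^t$ with $x=-1/2$ (and their simple-current twists) lie in $KL^0$, so the required preimages exist.

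With this in hand, the three families of fusions come from three cases of Section~\ref{3.4}. When $\lambda+\mu\notin\mathbb Z$, the relevant $E_0$-eigenvalues $-\lambda,-\mu,-(\lambda+\mu)$ all lie in $S_g$, so Corollary~\ref{goodfus} identifies $\widehat V^t_{\bullet,-\lambda}\times\widehat V^s_{\bullet,-\mu}$ with the induction of the $\gl$-tensor product $V^t_{\bullet,-\lambda,x}\otimes V^s_{\bullet,-\mu,x}$, whose decomposition into typical chains $V^{t+s-1-2l}$ is the case $e+f\ne 0$ of the tensor-product formulas of Section~\ref{3.1}. The cases $\mathcal W^t_{[\lambda],n}\times\mathcal W^s_{[-\lambda],m}$ and $\mathcal W^t_{+,n}\times\mathcal W^s_{-,m}$ again satisfy the hypotheses of Corollary~\ref{goodfus} (now $e+f=0\in S_g$, resp.\ $\epsilon_1=-\epsilon_2$), and the corresponding $\gl$-tensor products collapse to direct sums of diamonds $P^{t+s-1-2l}$. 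Finally $\mathcal P^t_n\times\mathcal P^s_m$ corresponds to $\widehat P^t_{\bullet,0,x=-1/2}\times\widehat P^s_{\bullet,0,x=-1/2}$, which by the diamond fusion rule of Section~\ref{3.4} (taken at $l=l'=0$, where $\epsilon(l,l')=0$) is the stated triple sum of diamonds. Applying $\mathcal F$ and re-reading each summand through Proposition~\ref{bgmodffid} yields \eqref{eqfusionbg}.

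The main obstacle is the parameter bookkeeping in this last step. The explicit formula of Proposition~\ref{bgmodffid} carries a sign flip on the block label ($[\lambda]\leftrightarrow -a$) and a half-integer shift between the $\gl$-weight index and the $\cbg$ spectral-flow index, so that, e.g., the shift of the $\gl$-weight by $\pm\tfrac12$ in the typical case becomes the pair of summands $\mathcal W^{\bullet}_{[\lambda+\mu],n+m}$ and $\mathcal W^{\bullet}_{[\lambda+\mu],n+m-1}$, and the diamond $\widehat P_{N,0}$ maps to $\mathcal P_{-N}$, which is exactly why $n+m-1$ rather than $n+m$ appears in the last line of \eqref{eqfusionbg}. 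One must also keep in mind that on $KL^0/\mathbb Z$ the $\gl$-weight labels are only well defined modulo the simple-current shifts $\widehat A_{0,\pm1}$ (Lemma~\ref{fuswithsc}); once a consistent choice of representatives is fixed, all identities in \eqref{eqfusionbg} follow mechanically from the exactness of fusion on $KL$ (rigidity) together with the decompositions already tabulated in Sections~\ref{3.1} and~\ref{3.4}.
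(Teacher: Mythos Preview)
Your proposal is correct and follows exactly the approach the paper intends: the paper states this corollary without proof, simply writing ``Using the equivalence as well as the fusion structure of $V(\widehat\gl)$ obtained in Section~\ref{3.4},'' and you have faithfully unpacked that sentence by invoking the tensor functor $\mathcal F$ of Theorem~\ref{Liftmonodromy}, the dictionary of Proposition~\ref{bgmodffid}, and the $\gl$/$\widehat\gl$ tensor decompositions of Sections~\ref{3.1} and~\ref{3.4}. Your remarks on the index bookkeeping (sign flip $[\lambda]\leftrightarrow -a$, the half-integer shift producing $n+m$ versus $n+m-1$, and the simple-current ambiguity on $KL^0/\mathbb Z$) are exactly the details the paper leaves implicit.
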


\section{Quiver algebra and quantum group}\label{5}

In this section, we will focus on the subcategory of atypical modules ${\mc C}_{\beta\gamma, [0]}$. In Section \ref{5.1}, we will compute the endomorphism of the identity line operator, and compare the result with the computation in the smaller category studied in \cite{aw}. In Section \ref{5.2}, we show that ${\mc C}_{\beta\gamma, [0]}$ can be described as the category of modules of a quiver algebra, and show that this quiver algebra can be related to the quantum group $\overline{U}_q^H(\mathfrak{sl}(2))$.

\subsection{The category of atypical modules}\label{5.1}

Recall the following decomposition in equation \eqref{eqcbgblock}:
\begin{equation}
 \cbg = \bigoplus_{\lambda \in \mbb C / \mbb Z} \mc C_{\bg, \lambda}.
\end{equation}
When $\lambda=[0]$, the full subcategory ${\mc C}_{\beta\gamma, [0]}$ is called the category of atypical modules. This is a tensor subcategory of $\cbg$, and the tensor identity of $\cbg$, which is $\mathbbm{1}={\mc V}=V_{\beta\gamma}$, lies in ${\mc C}_{\beta\gamma, [0]}$. 

On the other hand, by Theorem \ref{LALBequiv}, there are equivalences of abelian categories:
\begin{equation}
{\mc C}_{\beta\gamma, [0]}\cong KL_0^{N+E/2}/\mathbb{Z}_2\cong {\mc C}_0^{N+E/2}/\mathbb{Z}_2,
\end{equation}
where recall that ${\mc C}_0^{N+E/2}$ is the category of finite-dimensional modules of $\mathfrak{gl}(1|1)$ such that $N+E/2$ acts semi-simply with integer eigenvalues and $E$ acts nilpotently. Now consider the category ${\mc C}_0^{N+E/2}$. In the following, we will view the action of $N+E/2$ as a $\mathbb{C}^*$ grading on the objects of ${\mc C}_0^{N+E/2}$. Denote by $A$ the algebra over $\mathbb{C}$ generated by $\psi^{\pm}$. The only relations the two generators satisfy are $(\psi^+)^2=(\psi^-)^2=0$. The algebra $A$ can be viewed as a super algebra if we view $\psi^\pm$ as odd elements. The adjoint action of $N+E/2$ gives a $\mathbb{C}^*$ grading under which $\psi^\pm$ have weight $\pm 1$. Denote by $A-\mathrm{Mod}^{\mathbb{C}^*}$ the category of finite-dimensional $\mathbb{C}^*$-equivariant modules of the (non-super) algebra $A$, then it is clear by definition, that we have the following equivalence of abelian categories:
\begin{equation}
{\mc C}^{N+E/2}/\mathbb{Z}_2\cong A-\mathrm{Mod}^{\mathbb{C}^*}.
\end{equation}
Indeed, the supercategory ${\mc C}^{N+E/2}$ is equivalent to the supercategory of $\mathbb{C}^*$-equivariant, finite-dimensional modules of the super algebra $A$. Since the superalgebra structure can be induced by the action of $\mathbb{C}^*$, the de-equivariantization simply forgets the superalgebra structure. In other words, a $\mathbb{C}^*$ equivariant module automatically has a compatible $\mathbb{Z}_2$ grading. From the above two equations we conclude:
\begin{equation}
{\mc C}_{\beta\gamma, [0]}\cong A-\mathrm{Mod}^{\mathbb{C}^*}_{nil}.
\end{equation}
Here $A-\mathrm{Mod}^{\mathbb{C}^*}_{nil}$ is the subcategory of $A-\mathrm{Mod}^{\mathbb{C}^*}$ where $E$ acts nilpotently. Under this isomorphism, the identity object $\mathbbm{1}$ corresponds to the trivial representation $\mathbb{C}$ of $A$, on which $\psi^\pm$ acts trivially. We are now ready to compute:

\newtheorem{extidentity}{Proposition}[section]

\begin{extidentity}
The derived endomorphism algebra of the identity line in $\cbg$ is~trivial:
\begin{equation}
\mathrm{End}_{D^b \cbg}(\mathbbm{1})\cong \mathbb{C}.
\end{equation}
Here $D^b \cbg$ is the bounded derived category of the abelian category $\cbg$. 
\end{extidentity}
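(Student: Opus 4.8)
The strategy is to turn the statement into an explicit homological computation over the finite-dimensional algebra $A=\mathbb{C}\langle\psi^+,\psi^-\rangle/\big((\psi^+)^2,(\psi^-)^2\big)$. I will use freely the identification ${\mc C}_{\beta\gamma,[0]}\cong A\text{-}\mathrm{Mod}^{\mathbb{C}^*}_{nil}$ established earlier in this subsection, under which $\mathbbm{1}$ goes to the trivial module $\mathbb{C}$ placed in $\mathbb{C}^*$-weight $0$ (because $\mathbbm 1$ has $N_0+E_0/2$-weight $0$); write $\mathbb{C}_0$ for this module, and recall $E=\psi^+\psi^-+\psi^-\psi^+$. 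For any abelian category one has $\hom_{D^b\mc A}(X,X[n])=\ext^n_{\mc A}(X,X)$ (Yoneda Ext), and by the block decomposition \eqref{eqcbgblock} every extension of $\mathbbm 1$ by $\mathbbm 1$ in $\cbg$ lies in the block ${\mc C}_{\beta\gamma,[0]}$; so it suffices to prove
\begin{equation}
\ext^n_{{\mc C}_{\beta\gamma,[0]}}(\mathbbm 1,\mathbbm 1)\;=\;\ext^n_{A\text{-}\mathrm{Mod}^{\mathbb C^*}_{nil}}(\mathbb{C}_0,\mathbb{C}_0)\;=\;\mathbb{C}\,\delta_{n,0}.
\end{equation}

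The first step is to compute $\ext^\bullet$ of $\mathbb{C}_0$ in the much larger category $A\text{-}\mathrm{gr}$ of \emph{all} $\mathbb{Z}$-graded $A$-modules (grading $=$ $\mathbb{C}^*$-weight, $\psi^\pm$ of weight $\pm1$), which does have enough projectives, namely the free modules $A\langle k\rangle$. Since $A$ is a monomial algebra, a minimal graded free resolution of $\mathbb{C}_0$ is transparent: writing $\mathfrak m=(\psi^+,\psi^-)$ for the augmentation ideal, one checks $\mathfrak m=A\psi^+\oplus A\psi^-$ and $\mathrm{Ann}^{\mathrm{left}}_A(\psi^\pm)=A\psi^\pm$, and iterating gives
\begin{equation}
\cdots\longrightarrow A\langle 2\rangle\oplus A\langle-2\rangle\longrightarrow A\langle 1\rangle\oplus A\langle-1\rangle\longrightarrow A\langle 0\rangle\longrightarrow\mathbb{C}_0\longrightarrow 0,
\end{equation}
with $P_n=A\langle n\rangle\oplus A\langle-n\rangle$ for $n\ge1$ and $P_0=A\langle 0\rangle$. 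As $\hom_{A\text{-}\mathrm{gr}}(A\langle k\rangle,\mathbb{C}_0)=(\mathbb{C}_0)_k$ vanishes for $k\ne0$, the complex $\hom(P_\bullet,\mathbb{C}_0)$ is $\mathbb{C}$ in degree $0$, hence $\ext^\bullet_{A\text{-}\mathrm{gr}}(\mathbb{C}_0,\mathbb{C}_0)=\mathbb{C}$.

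The remaining, and main, point is to transfer this vanishing to the subcategory of $E$-nilpotent modules — this is delicate because ${\mc C}_{\beta\gamma,[0]}$ has no projectives (and, as the classification results showed, its objects are genuinely complicated long chains and roofs). I would avoid resolving inside ${\mc C}_{\beta\gamma,[0]}$ and instead use the $E$-torsion functor: on $A\text{-}\mathrm{gr}$ the functor $\Gamma_E(M)=\{m: E^Nm=0\text{ for some }N\}$ is right adjoint to the inclusion $\iota$ of the full subcategory of $E$-locally-nilpotent graded modules, and that subcategory is exactly $\mathrm{Ind}({\mc C}_{\beta\gamma,[0]})$ (an $E$-locally-nilpotent graded module is a filtered union of finite-dimensional submodules). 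Since $\iota$ is exact and $A\text{-}\mathrm{gr}$ has enough injectives, deriving the adjunction gives $\mathrm{RHom}_{A\text{-}\mathrm{gr}}(\iota X,\iota Y)\cong\mathrm{RHom}_{\mathrm{Ind}({\mc C}_{\beta\gamma,[0]})}(X,R\Gamma_E\iota Y)$, and the key computation is that $R\Gamma_E$ fixes $E$-torsion modules: the local cohomology $H^{>0}_{(E)}(Y)$ of an $E$-power-torsion module $Y$ vanishes (since $(E)$ is principal the only possible nonzero higher piece is $H^1_{(E)}(Y)=\varinjlim_N\bigl(Y\xrightarrow{E}Y\xrightarrow{E}\cdots\bigr)=0$). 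Hence $\iota$ induces a fully faithful $D^b(\mathrm{Ind}({\mc C}_{\beta\gamma,[0]}))\hookrightarrow D^b(A\text{-}\mathrm{gr})$; composing with the fully faithful $D^b({\mc C}_{\beta\gamma,[0]})\hookrightarrow D^b(\mathrm{Ind}({\mc C}_{\beta\gamma,[0]}))$ (valid because $\mathrm{Ind}({\mc C}_{\beta\gamma,[0]})$ is locally Noetherian with ${\mc C}_{\beta\gamma,[0]}$ its Noetherian objects) yields
\begin{equation}
\mathrm{End}_{D^b\cbg}(\mathbbm 1)\;\cong\;\mathrm{RHom}_{{\mc C}_{\beta\gamma,[0]}}(\mathbb{C}_0,\mathbb{C}_0)\;\cong\;\mathrm{RHom}_{A\text{-}\mathrm{gr}}(\mathbb{C}_0,\mathbb{C}_0)\;=\;\mathbb{C}.
\end{equation}

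I expect the hard part to be exactly this last step and the temptation to cut corners there: over each finite quotient $A/(E^N)$ the trivial module acquires a whole polynomial algebra $\mathbb{C}[\chi_N]$ of self-extensions with $\deg\chi_N=2$ — precisely the spurious "$\mathbb{CP}^\infty$-type" derived endomorphisms that appear in the smaller Allen–Wood category — and one must see that all of these classes die upon enlarging $N$. The adjunction argument above is the clean way to see this; alternatively one can argue by hand that the transition maps $\ext^\bullet_{A/(E^N)\text{-}\mathrm{gr}}(\mathbb{C}_0,\mathbb{C}_0)\to\ext^\bullet_{A/(E^{N+1})\text{-}\mathrm{gr}}(\mathbb{C}_0,\mathbb{C}_0)$ kill $\chi_N$ (a module on which $E$ acts with a nontrivial Jordan block provides the nullhomotopy), but this is considerably messier.
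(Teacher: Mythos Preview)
Your proof is correct and follows the same overall strategy as the paper: reduce via the equivalence $\mc C_{\beta\gamma,[0]}\cong A\text{-}\mathrm{Mod}^{\mathbb C^*}_{nil}$ and compute the graded self-Ext of the trivial module over $A$. The execution differs in two places. First, where the paper invokes Koszul duality (observing $A^!=\mathbb C[x,y]/(xy)$ with $x,y$ in $\mathbb C^*$-weights $\pm1$, so $(A^!)^{\mathbb C^*}=\mathbb C$), you instead write down the minimal graded free resolution $P_n=A\langle n\rangle\oplus A\langle -n\rangle$ and read off $\hom_{A\text{-gr}}(P_n,\mathbb C_0)=0$ for $n>0$ directly; these are of course the same computation unpacked. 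Second, and more substantively, the paper passes without comment from $A\text{-}\mathrm{Mod}^{\mathbb C^*}_{nil}$ to $A\text{-}\mathrm{Mod}^{\mathbb C^*}$ and then to finitely generated graded modules (where the free resolution lives), remarking only that ``$D^bA\text{-Mod}$ is a full subcategory of the derived category of finitely-generated $A$ modules''; your $R\Gamma_E$ adjunction argument actually proves this step, using that $E$ is central and a non-zero-divisor so that $H^{>0}_{(E)}$ vanishes on torsion modules. Your closing remark about the spurious polynomial classes $\mathbb C[\chi_N]$ over $A/(E^N)$ is exactly the phenomenon the paper highlights immediately after the proof when contrasting with the Allen--Wood category. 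In short, same route, but you are more careful on the one genuinely delicate point.
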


\begin{proof}

From the above, we only need to show that:
\begin{equation}
\mathrm{End}_{D^b A-\mathrm{Mod}^{\mathbb{C}^*}}(\mathbb{C})\cong \mathbb{C}.
\end{equation}
The algebra $A$ is well-known to be a Koszul algebra, and its Koszul dual is $A^!=\mathbb{C}[x,y]/(xy)$, a commutative algebra of two variables $x,y$ satisfying $xy=0$. Thus, the derived endomorphism of $\mathbb{C}$ as an ungraded $A$ module is:
\begin{equation}
\mathrm{End}_{D^b A-\mathrm{Mod}}(\mathbb{C})\cong A^!=\mathbb{C}[x,y]/(xy).
\end{equation}
Here we need to use a projective resolution of $\mathbb{C}$, which technically do not belong to $D^b A-\mathrm{Mod}$. This is not an issue for us since $D^b A-\mathrm{Mod}$ is a full subcategory of the derived category of finitely-generated $A$ modules. The $\mathbb{C}^*$ action gives $x$ and $y$ weight $1$ and $-1$ respectively. Taking $\mathbb{C}^*$-invariants, we find:
\begin{equation}
\mathrm{End}_{D^b A-\mathrm{Mod}^{\mathbb{C}^*}}(\mathbb{C})\cong (A^!)^{\mathbb{C}^*}=\mathbb{C}.
\end{equation}
This completes the proof. 
\end{proof}

We have finally proved the claim that $\cbg$ produces the correct Coulomb branch! In contrast, the category of modules considered in \cite{aw} will not produce the correct answer for our setting. Indeed, the category considered in \cite{aw} can be obtained from $\cbg$ by requiring $J_0$ to act semi-simply. On the algebra $A$, this corresponds to a further restriction $E=0$. We are thus restricted to the Grassmann algebra of two variables $B=\mathbb{C}[\epsilon_1,\epsilon_2]$. The Koszul dual is $B^!=\mathbb{C}[x,y]$, and we have:
\begin{equation}\label{badEnd}
\mathrm{End}_{D^b B-\mathrm{Mod}^{\mathbb{C}^*}}(\mathbb{C})\cong (B^!)^{\mathbb{C}^*}=\mathbb{C}[xy].
\end{equation}
The computation above produces a commutative algebra generated by a single variable $xy$, instead of the trivial algebra ($\mbb C$) which we expect. The $xy$ in equation \eqref{badEnd} corresponds to a degree $2$ extension of $V_{\beta\gamma}$ by itself, in the form of the following exact sequence:
\begin{equation}
\begin{tikzcd}
0 \rar & {\mc V} \rar & {\mc W}_0^+  \rar & {\mc W}_0^-  \rar & {\mc V}  \rar & 0.
\end{tikzcd}
\end{equation}
This is trivialized in our category $\cbg$ due to the existence of the indecomposable module with the following Loewy diagram:
\begin{equation}
\begin{tikzcd}
{\mc V} \rar & \sigma^{-1}{\mc V}  \rar & {\mc V}.
\end{tikzcd}
\end{equation}
This is similar to the discussion in \cite{costello2019higgs}, where the authors showed that the choice of the category changes the resulting endomorphism algebra.

\subsection{A quiver description}\label{5.2}

\subsubsection{Constructing the quiver}

From the mirror symmetry statement, we have an equivalence 
\begin{equation}
    \mathcal{C}_{\beta\gamma, [0]}\boxtimes \mathrm{SVect}\cong KL^{N+E/2}_0\cong \mathcal{C}^{N+E/2}_0.
\end{equation} Let us analyze the category $ \mathcal{C}^{N+E/2}_0$. Since the action of $N+E/2$ is semi-simple, let $e_n$ be the operator of projection onto the eigenspace with eigenvalue $n\in \mathbb{Z}$. Let $\sigma_n=\psi^+\circ e_n$ and $\tau_n=\psi^-\circ e_n$, it is clear then $\psi^2=0$ now becomes $\sigma_n\sigma_{n-1}=\tau_n\tau_{n+1}=0$. These $e_n,\sigma_n$ and $\tau_n$ is the path algebra of the following quiver:
\begin{equation}
\begin{tikzcd}
\cdots & e_{n-1} \arrow[r,bend left=50, "\sigma_{n-1}"] & \arrow[l,bend left=50, "\tau_{n}"]  e_n \arrow[r,bend left=50, "\sigma_{n}"] & \arrow[l,bend left=50, "\tau_{n+1}" below] e_{n+1} & \cdots
\end{tikzcd}
\end{equation}
quotient by the quadratic relation $\sigma_n\sigma_{n-1}=\tau_n\tau_{n+1}=0$. Let this algebra be denoted by $\Lambda$. It is graded once we give $\sigma_n$ and $\tau_n$ degree $1$. From the above description, the supercategory $ \mathcal{C}^{N+E/2}_0$ is equivalent to the supercategory of graded finite-dimensional unitary modules of $\Lambda$. To get an ordinary category instead of a supercategory, we only need to choose a grading for each module. We say that a vector is even if it's in the image of $e_n$ for $n$ even. Once we do this, we find an equivalence between $\mathcal{C}_{\beta\gamma,[0]}$ with the category of finite-dimensional unitary modules of $\Lambda$ (as an ungraded algebra) on which $\sigma_n\tau_{n+1}$ and $\tau_{n}\sigma_{n-1}$ act nilpotently for all $n$. The last condition comes from that $E$ acts nilpotently for objects in $\mathcal{C}_0$. This category we denote by $\Lambda-\mathrm{mod}_{nil}\cong \mathcal{C}_{\beta\gamma,[0]}$. The interesting thing about this category is that it has three in principle distinct braided tensor category structure. One of them comes from the identification with objects in $ \mathcal{C}^{N+E/2}_0$, another from the identification with objects in ${\mc C}_{\beta\gamma, [0]}$. The third one is rather surprising, as it comes from a morphism from $\overline{U}_q^H(\mathfrak{sl}(2))$.

\subsubsection{Relation to quantum group}

Consider the unrolled restricted quantum group $\overline{U}_q^H(\mathfrak{sl}(2))$ at the fourth root of unity $q=i$. By this, we mean the algebra generated by $E,F,H,K^\pm$ with the relation:
\begin{equation}
    \begin{aligned}
& KK^{-1}=K^{-1}K=1,~~KE=-EK,~~KF=-FK,\\
& [H,E]=2E,~~[H,F]=-2F,~~[E,F]=\frac{K-K^{-1}}{2i},~~E^2=F^2=0.
    \end{aligned}
\end{equation}
When considering modules, we also consider modules that satisfies relation $K=q^H=e^{\pi i H/2}$, although this is not a well-defined relation in the algebra itself. 

\newtheorem{Lemquiverqg}[extidentity]{Lemma}

\begin{Lemquiverqg}
Let $\sigma=\sum_n \sigma_n$, $\tau=\sum_n\tau_n$ and $L=\sigma\tau+\tau\sigma$. Let $f(x)$ be the Taylor series of the function  $\frac{1-e^{-\pi i x}}{x}$. The assignment:
\begin{equation}\label{eqLemquiverqg}
    H\mapsto \left(\sum_n 2ne_n \right)+ L,~~K\mapsto e^{\frac{\pi i}{2}H},~~E\mapsto \sigma ,~~F\mapsto \frac{\tau}{2i} f(L) K
\end{equation}
gives a well-defined action of $\overline{U}_q^H(\mathfrak{sl}(2))$ on finite-dimensional unitary modules of $\Lambda$, and satisfies $K=q^H$.
\end{Lemquiverqg}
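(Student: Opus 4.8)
The plan is to verify directly that the assignment in \eqref{eqLemquiverqg} respects all the defining relations of $\overline{U}_q^H(\mathfrak{sl}(2))$, together with the auxiliary constraint $K=q^H$. The first observation is that $L=\sigma\tau+\tau\sigma$ commutes with everything of interest: since $\sigma_n\sigma_{n-1}=\tau_n\tau_{n+1}=0$, one computes $L\sigma = \sigma\tau\sigma = \sigma L$ and $L\tau = \tau L$, and $L$ obviously commutes with each $e_n$. In fact $L$ preserves each eigenspace of $\sum_n 2ne_n$, and on finite-dimensional unitary modules of $\Lambda$ it acts nilpotently on each such eigenspace (this is exactly the content of $E$ acting nilpotently in $\mathcal{C}_0$, transported through the equivalence), so $f(L)$ is a well-defined finite sum and the expression for $F$ makes sense. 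Because $L$ is central-ish in this precise sense, $H\mapsto (\sum_n 2ne_n)+L$ and $K\mapsto e^{\pi i H/2}$ are unambiguous, and $K=q^H$ holds by construction (with $q=i=e^{\pi i/2}$).

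Next I would check the ``easy'' relations. The relation $KK^{-1}=1$ is automatic since $K$ is invertible as an exponential. For $KE=-EK$: on the eigenspace where $\sum_n 2ne_n$ acts by $2n$, $E=\sigma$ shifts to the eigenspace with eigenvalue $2(n+1)$, while $K$ acts by $e^{\pi i n}(\text{unipotent in }L)$ versus $e^{\pi i(n+1)}(\text{same unipotent})$ after the shift, and $e^{\pi i(n+1)}=-e^{\pi i n}$; the unipotent factors $e^{\pi i L/2}$ commute through $\sigma$ since $L$ does. The relation $KF=-FK$ is identical with $\tau$ shifting the grading down by one. The relations $[H,E]=2E$ and $[H,F]=-2F$ follow because $\sigma$ and $\tau$ raise/lower the $\sum 2ne_n$-eigenvalue by $2$ while commuting with $L$, so $[H,\sigma]=[\sum 2ne_n,\sigma]=2\sigma$ and similarly for $\tau$; the $F$ case also uses that $f(L)K$ commutes with $H$. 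The relation $E^2=0$ is immediate from $\sigma^2=\sum_n\sigma_n\sigma_{n-1}=0$, and $F^2=0$ follows from $\tau^2=0$ together with the fact that $f(L)K$ commutes with $\tau$, so $F^2 = \frac{1}{(2i)^2}\tau f(L)K\tau f(L)K = \frac{1}{(2i)^2}\tau^2 (f(L)K)^2 = 0$.

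The main obstacle, and the only relation requiring genuine computation, is the commutator $[E,F]=\dfrac{K-K^{-1}}{2i}$. Expanding, $[E,F]=\dfrac{1}{2i}\bigl(\sigma\tau f(L)K - \tau f(L)K\sigma\bigr)$; since $f(L)$ and $K$ commute with $\sigma$, this is $\dfrac{1}{2i}(\sigma\tau-\tau\sigma)f(L)K = \dfrac{1}{2i}(2\sigma\tau - L)f(L)K$, using $\sigma\tau+\tau\sigma = L$. So I need the scalar-function identity $(2\sigma\tau-L)f(L) = \dfrac{K-K^{-1}}{1}\cdot(\text{something})$—more precisely I must show $(2\sigma\tau - L)f(L)K = K - K^{-1}$. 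Here one uses that $\sigma\tau$ and $\tau\sigma$ are the two ``halves'' of $L$, each acting on complementary summands $\bigoplus_{n\text{ even}}$ versus $\bigoplus_{n\text{ odd}}$ of the module (or vice versa), so that on a given graded piece exactly one of $\sigma\tau$, $\tau\sigma$ is nonzero; a short case analysis reduces the identity to showing $L\,f(L) = 1 - e^{-\pi i L}$ on the relevant eigenspace, which is exactly the defining property of the Taylor series $f(x)=\frac{1-e^{-\pi ix}}{x}$, while the contribution of $K=e^{\pi i H/2} = e^{\pi i n}e^{\pi i L/2}$ supplies the matching $e^{\pi i n}=\pm 1$ and the remaining exponential factor. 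Carefully bookkeeping the signs $e^{\pi i n}$ coming from $K$ on even versus odd graded pieces, and confirming they assemble into $K-K^{-1}$ rather than $K+K^{-1}$ or a shifted version, is where the calculation must be done with care; everything else is formal. Finally, one notes that the action is by $\mathbb{C}$-linear endomorphisms preserving the unitary (unital) module structure, since $\sigma,\tau,e_n$ do, completing the verification that \eqref{eqLemquiverqg} defines a $\overline{U}_q^H(\mathfrak{sl}(2))$-module structure compatible with $K=q^H$.
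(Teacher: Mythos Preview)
Your overall strategy matches the paper's, but there is a genuine error in the crucial $[E,F]$ computation. You write ``since $f(L)$ and $K$ commute with $\sigma$'' and from this obtain $\frac{1}{2i}(\sigma\tau-\tau\sigma)f(L)K$. But $K$ does \emph{not} commute with $\sigma$: you yourself verified earlier that $KE=-EK$, i.e.\ $K\sigma=-\sigma K$. Using this anticommutation (together with $[f(L),\sigma]=0$) one gets
\[
[E,F]=\frac{1}{2i}\bigl(\sigma\tau f(L)K-\tau f(L)K\sigma\bigr)=\frac{1}{2i}\bigl(\sigma\tau+\tau\sigma\bigr)f(L)K=\frac{1}{2i}\,Lf(L)K,
\]
and then $Lf(L)=1-e^{-\pi i L}=1-e^{-\pi i H}=1-K^{-2}$ (since $e^{-\pi i\sum 2ne_n}=1$), so $Lf(L)K=K-K^{-1}$ immediately. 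This is exactly the paper's computation.

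Your attempted rescue via a parity case analysis is also based on a false premise: it is not true that on a given graded piece exactly one of $\sigma\tau$, $\tau\sigma$ is nonzero. Both operators preserve each $e_n$-eigenspace and can be simultaneously nonzero there (indeed this is what makes $L=\sigma\tau+\tau\sigma$ interesting). So the bookkeeping you gesture at would not close the gap. The same sign slip contaminates your $F^2=0$ argument: $f(L)K$ \emph{anti}commutes with $\tau$, not commutes, though fortunately $\tau^2=0$ still kills $F^2$ regardless. Fix the anticommutation of $K$ with $\sigma$ and $\tau$ and the whole thing collapses to the one-line identity above.
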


\begin{proof}
These give well-defined operators on any finite-dimensional unitary module of $\Lambda$, so we only need to show that this gives the right commutation relation, which is a simple algebraic check. We only show the relation $[E,F]=\frac{K-K^{-1}}{2i}$. We have that:
\begin{equation}
    [\sigma, \frac{\tau}{2i} f(L) K]=\frac{1}{2i}\{\sigma, \tau\}f(L)K=\frac{1}{2i}Lf(L)K.
\end{equation}
By definition, $Lf(L)=1-e^{-\pi i L}=1-e^{-\pi i H}=1-K^{-2}$, and so $Lf(L)K=K-K^{-1}$, thus the relation.

\end{proof}

\noindent\textbf{Remark.} It can be shown that when restricted to the category of atypical modules of $\overline{U}_q^H(\mathfrak{sl}(2))$, this gives an equivalence to $\Lambda-\mathrm{mod}_{nil}$, with a well-defined inverse assignment. This assignment should be, in some sense, similar to the induction functor.

\vspace{5pt}

It turns out, that one can transfer the structure of the braided tensor category from $\overline{U}_q^H(\mathfrak{sl}(2))$ to $\Lambda-\mathrm{mod}$. For example, the co-product $\Delta_t$(t for twisted) is given by:
\begin{equation}
\begin{aligned}&\Delta_t(e_n)=\sum_{p+q=n} e_p\otimes e_q,~~\Delta_{t}(\sigma)=\sigma\otimes 1+e^{\frac{\pi i}{2}H}\otimes \sigma,~~\Delta_t(L)=L\otimes 1+1\otimes L\\
&\Delta_t(\tau)=2i(1\otimes F+F\otimes e^{-\frac{\pi i}{2}H})(e^{-\frac{\pi i}{2}H}\otimes e^{-\frac{\pi i}{2}H}) \frac{1}{f(L)}(L\otimes 1+1\otimes L)\\ 
\end{aligned}
\end{equation}
where $\frac{1}{f(x)}$ should be understood as the Taylor series of the quotient. This is very asymmetric compared to the coproduct $\Delta$ from $\gl$:
\begin{equation}
\Delta(e_n)=\sum_{p+q=n}e_p\otimes e_q,~~\Delta(\sigma)=\sigma\otimes 1+e^{\pi i \sum_n ne_n} \otimes \sigma,~~\Delta(\tau)=\tau\otimes 1+e^{\pi i \sum_n ne_n}\otimes \tau.
\end{equation}
Under the co-product $\Delta$, the braiding is trivial, but the braiding under $\Delta_t$ is non-trivial:
\begin{equation}
    R=e^{\pi i H\otimes H/4 }(1+ 2i E\otimes F),
\end{equation}
and so there is a non-trivial twist:
\begin{equation}
    \theta=K(e^{-\pi i H^2/4}-2iKF e^{-\pi i H^2/4}E),
\end{equation}
where $H,E,F$ should be expressed using elements in $\Lambda$ as in equation \eqref{eqLemquiverqg}. These formulas can be found in \cite{ohtsuki2002quantum}. We expect that for any $M,N\in \Lambda-\mathrm{mod}$, the two co-product $M\otimes N$ and $M\otimes_t N$ are isomorphic as $\Lambda$ modules. However, we do not expect the two coproducts to give equivalent braided tensor structures. It remains a question whether the tensor structure $\Delta_t$ induced from $\overline{U}_q^H(\mathfrak{sl}(2))$ is equivalent to that coming from $\cbg$.

\newpage
\nocite{*}
\printbibliography

\end{document}